\documentclass[onecolumn, a4paper, 11pt, 
  accepted=2025-02-26]{quantumarticle}
\pdfoutput=1

\usepackage{authblk}
\geometry{margin=1in, top=3cm, bottom=3cm}

\usepackage[dvipsnames]{xcolor}

\usepackage{amssymb}
\usepackage{amsmath}
\usepackage{amsfonts}
\usepackage{amsthm}
\allowdisplaybreaks[1]

\usepackage{enumerate}
\usepackage{indentfirst}

\usepackage[backref=page,colorlinks,linkcolor=Mulberry,anchorcolor=blue,citecolor=Mulberry]{hyperref}

\usepackage[english]{babel}
\usepackage[T1]{fontenc}

\usepackage{tikz}
\usetikzlibrary{quantikz2}

\usepackage[algoruled]{algorithm2e}

\usepackage{cleveref}
\crefname{ineq}{inequality}{inequalities}
\creflabelformat{ineq}{#2{\upshape(#1)}#3}
\crefname{fact}{fact}{facts}
\crefname{equation}{equation}{equations}
\crefname{algorithm}{algorithm}{algorithm}
\crefname{remark}{remark}{remarks}
\crefname{conjecture}{conjecture}{conjectures}

\usepackage{thmtools}
\declaretheorem[style=plain,numberwithin=section]{theorem}
\declaretheorem[style=plain,numberlike=theorem]{lemma,proposition,corollary}
\declaretheorem[style=remark,numberlike=theorem]{remark}
\declaretheorem[style=plain,numberlike=theorem]{definition,conjecture,fact}


\definecolor{darkgreen}{rgb}{0.1, 0.6, 0.1}


\renewcommand{\texttt}[1]{\begingroup \ttfamily \hyphenchar\font=`\- #1\endgroup}


\newcommand{\stateBQP}{\textnormal{\textsf{stateBQP}}\xspace}

\newcommand{\statePreciseBQP}{\textnormal{\textsf{statePreciseBQP}}\xspace}
\newcommand{\stateQMA}{\textnormal{\textsf{stateQMA}}\xspace}
\newcommand{\stateQCMA}{\textnormal{\textsf{stateQCMA}}\xspace}
\newcommand{\statePreciseQMA}{\textnormal{\textsf{statePreciseQMA}}\xspace}
\newcommand{\statePreciseQCMA}{\textnormal{\textsf{statePreciseQCMA}}\xspace}
\newcommand{\statePSPACE}{\textnormal{\textsf{statePSPACE}}\xspace}
\newcommand{\stateQIP}{\textnormal{\textsf{stateQIP}}\xspace}
\newcommand{\Ptime}{\textnormal{\textsf{P}}\xspace}

\newcommand{\BQP}{\textnormal{\textsf{BQP}}\xspace}

\newcommand{\QSZK}{\textnormal{\textsf{QSZK}}\xspace}
\newcommand{\NP}{\textnormal{\textsf{NP}}\xspace}
\newcommand{\IP}{\textnormal{\textsf{IP}}\xspace}

\newcommand{\QCMA}{\textnormal{\textsf{QCMA}}\xspace}
\newcommand{\QMA}{\textnormal{\textsf{QMA}}\xspace}
\newcommand{\UQMA}{\textnormal{\textsf{UQMA}}\xspace}
\newcommand{\PreciseUQMA}{\textnormal{\textsf{PreciseUQMA}}\xspace}
\newcommand{\QIP}{\textnormal{\textsf{QIP}}\xspace}
\newcommand{\PreciseQMA}{\textnormal{\textsf{PreciseQMA}}\xspace}
\newcommand{\PP}{\textnormal{\textsf{PP}}\xspace}

\newcommand{\PSPACE}{\textnormal{\textsf{PSPACE}}\xspace}
\newcommand{\NPSPACE}{\textnormal{\textsf{NPSPACE}}\xspace}
\newcommand{\BQPSPACE}{\textnormal{\textsf{BQPSPACE}}\xspace}
\newcommand{\UnitaryBQPSPACE}{\mathsf{BQ_{U}PSPACE}}
\newcommand{\stateUPSPACE}{\mathsf{state_{U}PSPACE}}
\newcommand{\stateQMAPSPACEoff}{\mathsf{stateQMA_{U}PSPACE}^\mathsf{off}}

\renewcommand{\bra}[1]{\langle #1|}
\renewcommand{\ket}[1]{| #1 \rangle}

\newcommand{\ketbra}[2]{\left| #1 \right\rangle \left\langle #2 \right|}
\newcommand{\innerprod}[2]{\langle #1 | #2 \rangle}

\newcommand{\rank}{\mathrm{rank}}
\newcommand{\Out}{\mathrm{Out}}

\newcommand{\Tr}{\mathrm{Tr}}
\DeclareMathOperator\erf{erf}

\newcommand{\yes}{\mathrm{yes}}
\newcommand{\no}{\mathrm{no}}

\newcommand{\F}{\mathrm{F}}
\newcommand{\td}{\mathrm{td}}
\newcommand{\Img}{\mathrm{Img}}

\newcommand{\sign}{\mathrm{sgn}}
\renewcommand{\path}{\mathrm{path}}

\newcommand{\Toffoli}{\textnormal{\textsc{Toffoli}}\xspace}
\newcommand{\Hadamard}{\textnormal{\textsc{Hadamard}}\xspace}

\newcommand{\T}{\textnormal{\textsc{T}}\xspace}
\newcommand{\CNOT}{\textnormal{\textsc{CNOT}}\xspace}

\renewcommand{\Pr}[1]{\mathrm{Pr}\left[#1\right]}

\newcommand{\binset}{\{0,1\}}

\newcommand{\poly}{\mathrm{poly}}
\newcommand{\negl}{\mathrm{negl}}
\newcommand{\In}{\mathrm{in}}
\renewcommand{\Out}{\mathrm{out}}

\newcommand{\bbC}{\mathbb{C}}

\newcommand{\bbR}{\mathbb{R}}
\newcommand{\bbN}{\mathbb{N}}

\newcommand{\calA}{\mathcal{A}}
\newcommand{\calG}{\mathcal{G}}
\newcommand{\calH}{\mathcal{H}}
\newcommand{\calL}{\mathcal{L}}
\newcommand{\calS}{\mathcal{S}}

\newcommand{\sfC}{\mathsf{C}}

\newcommand{\sfO}{\mathsf{O}}
\newcommand{\sfR}{\mathsf{R}}
\newcommand{\sfS}{\mathsf{S}}
\newcommand{\sfW}{\mathsf{W}}

\begin{document}
\setlength{\abovedisplayskip}{6pt}
\setlength{\belowdisplayskip}{6pt}

\title{Quantum Merlin-Arthur proof systems for synthesizing quantum states}
\author[2,1]{Hugo Delavenne}
\author[1]{François Le Gall}
\author[1]{Yupan Liu}
\author[1]{Masayuki Miyamoto}

\affil[1]{Graduate School of Mathematics, Nagoya University}
\affil[2]{ENS Paris-Saclay, Université Paris-Saclay}

\maketitle
\begin{abstract}
    Complexity theory typically focuses on the difficulty of solving computational problems with classical inputs and outputs, even with a quantum computer. In the quantum world, it is natural to apply a different notion of complexity, namely the complexity of synthesizing quantum states. We investigate a state-synthesizing counterpart of the class \NP{}, referred to as \stateQMA{}, which is concerned with preparing certain quantum states through a polynomial-time quantum verifier with the aid of a single quantum message from an all-powerful but untrusted prover. This is a subclass of the class \stateQIP{} recently introduced by \hyperlink{cite.RY22}{Rosenthal and Yuen (ITCS 2022)}, which permits polynomially many interactions between the prover and the verifier. 
    Our main result consists of error reduction of this class and its variants with an exponentially small gap or bounded space, as well as how this class relates to other fundamental state-synthesizing classes, i.e., states generated by uniform polynomial-time quantum circuits (\stateBQP{}) and space-uniform polynomial-space quantum circuits (\statePSPACE{}). 
    Furthermore, we establish that the family of \UQMA{} witnesses, considered as one of the most natural candidates for \stateQMA{} containments, is in \stateQMA{}. Additionally, we demonstrate that \stateQCMA{} achieves perfect completeness. 
\end{abstract}    

\section{Introduction}

Classical and quantum complexity theory typically concentrates on the computational difficulty of solving problems with \textit{classical} inputs and outputs. However, quantum computers have the ability to handle not only classical problems, but also quantum tasks, such as synthesizing quantum states. The most famous example is preparing ground states of a physical system~\cite{Lloyd96, PW09}, which even dates back to Feynman's original ideas~\cite{Feynman82}. Analogous tasks are also commonplace in quantum cryptography and generalized notions of the pseudorandomness, such as quantum money~\cite{Aaronson09} and pseudorandom quantum states~\cite{JLS18}. This motivates the study of complexity of synthesizing quantum states. 

In~\cite{Aaronson16}, Aaronson investigated the concept on quantum state complexity, leading to the \textit{state synthesizing problem}. This problem involves generating a quantum state $\rho$ from the all-zero state based on a quantum circuit with a succinct description acting on $n$ qubits (with the depth up to exponential). The resulting state $\rho$ is supposed to be close to the designated \textit{target state} $\ket{\psi}$.\footnote{We measure the closeness between $\rho$ and $\ket{\psi}$ by the trace distance, see \Cref{subsec:quantum-distances} for the definition.} This problem is solvable in (quantum) polynomial space (\PSPACE{}), i.e., a quantum computer running in exponential time but using polynomially many (possibly ancillary) qubits can generate a state that well approximates the target state. 

Quantum computers are seemingly not capable of solving any \PSPACE{} problem in polynomial time, while \textit{polynomially many} messages interactive protocols with the help of an all-powerful and \textit{untrusted} prover (known as \textit{interactive proofs}, \IP{}) captures the full computational power of polynomial-space computation, referred to as the celebrated $\IP=\PSPACE$ theorem~\cite{LFKN92,Shamir92}.
A recent line of works~\cite{RY22,MY23} initializes the study on the \textit{interactive} state synthesizing problem. Rosenthal and Yuen~\cite{RY22} denote the polynomial-space-preparable state families as \statePSPACE{}\footnote{The definition of \statePSPACE{} is a bit subtle: although all quantum states can be well-approximated by an exponentially long gate sequence owing to the Solovay-Kitaev theorem~\cite{Kitaev97}, this exponential gate sequence is not necessarily \textit{space-uniform}.} and show that such state families are preparable by \textit{polynomial-message} interactive synthesis protocols, which belongs to the class \stateQIP{}.  
Recently, Metger and Yuen~\cite{MY23} managed to prove the equivalence between state families that are preparable using polynomial space and those generated by interactive synthesis protocols, that is, $\stateQIP=\statePSPACE$, which is the state-synthesizing counterpart of the $\IP=\PSPACE$ theorem (and its quantum analogue~\cite{JJUW11}). 
Very recently (after our work was released), Rosenthal~\cite{Rosenthal23} strengthened the inclusion $\statePSPACE \subseteq \stateQIP$ by presenting a \textit{constant-message} \stateQIP{} protocol.
This result, combined with the other direction, can be considered as a parallelization technique counterpart for \stateQIP{}, highlighting that the behavior of \stateQIP{} aligns more closely with \QIP{} than \IP{}. 

However, there is currently a lack of fine-grained characterizations of computationally easier state families, viz., state families that are efficiently preparable (e.g., efficiently simulating view of quantum statistical zero-knowledge~\cite{Wat02}), or state families that are synthesizable via simply one-message interactive protocols (e.g., efficient verification of pure quantum states in the adversarial scenario~\cite{ZH19}). This opens up opportunities for our main results. 

\subsection{Main results}

In this work, we are particularly interested in state families that are preparable by a \textit{one-message} protocol, denoted as \stateQMA{}, which is obviously a subclass of \stateQIP{}. 
Let us first define \stateQMA{} informally (see \Cref{subsec:stateQMA-and-more} for a formal definition). 
A state family in \stateQMA{} is a family $\{\ket{\psi_x}\}_{x\in\calL}$ indexed by \textit{binary strings} in the corresponding language $\calL$ such that there is a verifier that has the following properties, verifying whether the target state $\ket{\psi_x}$ corresponding to a given input $x\in \calL$ can be well-approximated by the resulting state of this verifier. 
The verifier's computation, which is a polynomial-size \textit{unitary} quantum circuit,\footnote{In particular, extending to general quantum circuits does not change the class \stateQMA{} owing to the principle of deferred measurement. However, such extensions do not immediately work for the space-bounded counterpart, namely $\stateQMAPSPACEoff$, see \Cref{remark:state-synthesizing-with-unitary}.} takes a quantum-proof state $\ket{w}$ (with no limitations on the preparation method) and ancillary qubits in the state $\ket{\bar{0}}$ as input. After performing the verification circuit, a designated output qubit will be measured on the computational basis, and the verifier accepts if the measurement outcome is $1$. If the verifier accepts, the verification circuit has prepared the resulting state $\rho_{x,w}$ on the remaining qubits that is a good approximation of the target state $\ket{\psi_x}$ (if the verifier rejects, the resulting state could be anything). The acceptance probability is called \textit{the success probability} for approximately preparing $\ket{\psi_x}$. 

More precisely, the state family is in the class $\stateQMA_{\delta}[c,s]$ for some $0 \leq s < c \leq 1$ and $\delta \geq 0$, if the resulting state $\rho_{x,w}$ is $\delta$-close to the target state $\ket{\psi_x}$ provided that the verifier accepts with probability at least $s$ (soundness condition); and additionally there exists a quantum witness that makes the verifier accept with probability at least $c$ (completeness condition). 

It is pretty evident that $\stateQMA_{\delta}[c,s] \subseteq \stateQMA_{\delta'}[c',s']$ if $c'\leq c$, $s'\geq s$ and $\delta' \geq \delta$. However, how crucially does $\stateQMA_{\delta}[c,s]$ depend on its parameters? For commonplace complexity classes, viz. \BQP{}, \QMA{}, \QIP{}, etc., the dependence on such parameters is very weak: the class remains the same so long as the completeness $c$ and soundness $s$ differ by at least some inverse polynomial of $n\coloneqq |x|$ where $x\in \calL$. This is known as \textit{error reduction}, which typically involves performing the verification circuit in parallel and taking the majority vote. 

However, error reduction for \stateQMA{} requires a more nuanced approach. A simple parallel repetition of the verification circuit ends with a tensor product of the resulting state that evidently differs from the original state family. 
Therefore, error reduction for \stateQMA{} does need to preserve not only the quantum witness state $\ket{w}$, but also the resulting state $\rho_{x,w}$, referred to as the \textit{doubly-preserving error reduction} (ending with negligible errors): 

\begin{theorem}[Doubly-preserving error reduction for \stateQMA{}, informal version of \Cref{thm:error-reduction-stateQMA}]
\label{thm:informal-error-reduction-stateQMA}
For any $c(n)-s(n) \geq 1/\poly(n)$ and $0 \leq c(n),s(n) \leq 1$, we have 
\[\stateQMA_{\delta}[c,s] \subseteq \stateQMA_{\delta}[1-\negl,\negl].\]
\end{theorem}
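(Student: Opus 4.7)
My plan is to adapt the Marriott--Watrous amplification to the state-synthesis setting. The standard \QMA{} parallel-repetition-with-majority-vote is ruled out on two counts: it consumes many copies of the witness, and, more fatally, it would prepare $\ket{\psi_x}^{\otimes k}$ rather than $\ket{\psi_x}$. Marriott--Watrous, by contrast, uses only a single copy of the witness and drives the amplified state close to the exact accepting ray of the original verifier $V$ --- a ray that by construction carries the correct reduced state on the remaining qubits (which hold the target state).

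Let $V$ operate on a witness register $W$ and an ancilla register $A$ (initialised to $\ket{0}$), with a designated output qubit (which I write with subscript $\mathrm{acc}$ for clarity) used to decide acceptance and the other qubits forming the target register. Introduce $\Pi_{\mathrm{start}} = I_W \otimes \ket{0}\bra{0}_A$ and $\Pi_{\mathrm{acc}} = V^{\dagger}(\ket{1}\bra{1}_{\mathrm{acc}}\otimes I)V$, together with the reflections $R_{\mathrm{start}}=I-2\Pi_{\mathrm{start}}$ and $R_{\mathrm{acc}}=I-2\Pi_{\mathrm{acc}}$, both implementable by $\poly(n)$-size circuits from one call to $V$ and one call to $V^{\dagger}$. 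By Jordan's lemma, the pair $(\Pi_{\mathrm{start}},\Pi_{\mathrm{acc}})$ splits the Hilbert space into invariant 1D and 2D subspaces, and in each 2D block the walk $R_{\mathrm{acc}}R_{\mathrm{start}}$ is a plane rotation whose angle is determined by the eigenvalue of $\Pi_{\mathrm{start}}\Pi_{\mathrm{acc}}\Pi_{\mathrm{start}}$ in that block. A fixed-point or phase-estimation amplification on top of this walk, run for $\poly(n,1/(c-s))$ iterations, yields the familiar Marriott--Watrous guarantee: on a witness $\ket{w}$ with original acceptance probability $p$, the amplified acceptance probability is $\geq 1-\negl$ whenever $p\geq c$ and $\leq \negl$ whenever $p\leq s$.

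The essential new ingredient --- the \emph{doubly}-preserving part --- is that the reduced state of the target register is also preserved through amplification. Decompose $V\ket{0}_A\ket{w}_W = \sqrt{p}\,\ket{\Psi_1}+\sqrt{1-p}\,\ket{\Psi_0}$, where $\ket{\Psi_1}=(\ket{1}\bra{1}_{\mathrm{acc}}\otimes I)V\ket{0,w}/\sqrt{p}$ is the (normalised) accepting ray; the reduced state of $\ket{\Psi_1}$ on the target register equals $\rho_{x,w}$ by definition. The Marriott--Watrous procedure outputs a global state within Euclidean distance $\varepsilon$ of $\ket{\Psi_1}$, so monotonicity of trace distance under partial trace makes the amplified conditional target state $(\delta+O(\varepsilon))$-close to $\ket{\psi_x}$ whenever $V$ was $\delta$-close, and choosing enough iterations drives $\varepsilon$ negligible.

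Completeness $1-\negl$ then follows with the same optimal witness as for $V$. For soundness, if the amplified verifier accepts some $\ket{w}$ with probability at least $\negl$, the amplification bound forces the original $p>s$, so the original soundness guarantees $\rho_{x,w}$ is $\delta$-close to $\ket{\psi_x}$, and the preservation step transfers this to the amplified protocol. The main obstacle I anticipate is the preservation argument itself: the reflections act nontrivially on the target register, which is entangled with witness and ancilla throughout the walk, so one must verify block-by-block in the Jordan decomposition that the walk really does converge to (a rescaling of) the accepting part of $V\ket{0,w}$ within each 2D block, and that the accumulated approximation error on the reduced target state stays genuinely negligible rather than merely small. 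The acceptance-probability amplification itself is essentially a direct rerun of Marriott--Watrous.
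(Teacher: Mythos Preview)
Your approach is in the same spirit as the paper's: both rest on the Jordan/SVD decomposition associated with the two projectors. The paper phrases this as quantum singular value transformation applied to the projected unitary encoding $M_x=\Pi_{\Out}V_x\Pi_{\In}=\sum_i\sigma_i\ket{\tilde\psi_i}\bra{\psi_i}$, replacing each $\sigma_i$ by $f(\sigma_i)$ for an odd polynomial $f$ approximating $\sign$ on $[-1,1]\setminus(\sqrt{s},\sqrt{c})$. The invariance of the left and right singular vectors under $M_x\mapsto f^{(\mathrm{SV})}(M_x)$ is the paper's version of your preservation step, and is what lets them state the conclusion with $\delta$ unchanged rather than $\delta+O(\varepsilon)$.

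Your soundness argument, however, has a real gap. You assert that (i) amplified acceptance exceeding the new threshold forces the original $p>s$, and (ii) the amplified conditional state is $\varepsilon$-close to the original accepting ray $\ket{\Psi_1}\propto\sum_i\alpha_i\sigma_i\ket{\tilde\psi_i}$. Neither holds for a general witness $\ket{w}=\sum_i\alpha_i\ket{\psi_i}$: amplification acts \emph{independently in each Jordan block}, so the amplified conditional state is the \emph{reweighted} vector $\propto\sum_i\alpha_i f(\sigma_i)\ket{\tilde\psi_i}$, not a rescaling of $\ket{\Psi_1}$. A concrete counterexample to (i): take $\ket{w}=\sqrt{0.01}\,\ket{\psi_1}+\sqrt{0.99}\,\ket{\psi_2}$ with $\sigma_1=1$, $\sigma_2=0$, $s=0.1$; then the original $p=0.01<s$ yet the amplified acceptance is still $\approx 0.01\gg\negl$. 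So you cannot conclude that the original $\rho_{x,w}$ is $\delta$-close, nor that the amplified output agrees with it.

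The repair that the SVD viewpoint hands you (and that your ``block-by-block'' instinct is reaching for) is this: the amplified conditional state on $\ket{w}$ coincides with the \emph{original} conditional state on a \emph{different} witness $\ket{w'}\propto\sum_i\alpha_i\tfrac{f(\sigma_i)}{\sigma_i}\ket{\psi_i}$ (using $f(0)=0$). After discarding the blocks with $\sigma_i\le\sqrt{s}$, whose total weight is $O(\varepsilon)$ since $|f(\sigma_i)|\le\varepsilon$ there, the remaining $\ket{w'}$ manifestly has original acceptance exceeding $s$, and the original soundness then delivers $\delta$-closeness of its output --- which is the amplified output you care about. The paper is admittedly terse here (``singular vectors are invariant, so the resulting state is clearly preserved''), but its SVD formulation makes this reduction-to-a-different-witness step transparent, whereas in your Marriott--Watrous language it is easy to slip into the false belief that the walk converges to $\ket{\Psi_1}$ itself.
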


Nevertheless, applying \Cref{thm:informal-error-reduction-stateQMA} to a \textit{polynomial-space-bounded} variant of $\stateQMA_{\delta}[c,s]$, which we denote by $\stateQMAPSPACEoff$,\footnote{We emphasize that $\stateQMAPSPACEoff$ is not a state-synthesizing counterpart of the class \NPSPACE{}, where ``off'' indicates \textit{offline} access to the witness state. See \Cref{remark:online-vs-offline} for details.} will result in \textit{exponential} space. To address this, we generalize \Cref{thm:informal-error-reduction-stateQMA} in a manner that preserves the polynomial space complexity, as presented in \Cref{thm:informal-error-reduction-bounded-stateQMA}.
Here in the class $\stateQMAPSPACEoff$, the verifier's computation stays polynomially space-bounded but may take \textit{exponential time} and the gap between the completeness $c$ and the soundness $s$ is at least some inverse-exponential of $n\coloneqq |x|$.

\begin{theorem}[Doubly-preserving error reduction for $\stateQMAPSPACEoff$, informal version of \Cref{thm:error-reduction-bounded-stateQMA}]
\label{thm:informal-error-reduction-bounded-stateQMA}
For any $c(n)-s(n) \geq \exp(-\poly(n))$ and $0 \leq c(n),s(n) \leq 1$, we have
\[\stateQMAPSPACEoff_{\delta}[c,s] \subseteq \stateQMAPSPACEoff_{\delta}[1-\negl,\negl].\]
\end{theorem}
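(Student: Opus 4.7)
The plan is to replace parallel repetition---which for an inverse-exponential gap $g := c - s$ would require $\exp(\poly(n))$ independent copies of the witness and thereby break the polynomial-space budget---with a Marriott--Watrous (MW) style \emph{sequential} amplification acting on a single witness. MW amplification uses only $\poly(n)$ extra workspace at the cost of $O(1/g) = \exp(\poly(n))$ time, precisely the resource tradeoff permitted by $\stateQMAPSPACEoff$. For $g \geq 1/\poly(n)$ the very same procedure runs in polynomial time, so one unified argument suffices for both this theorem and \Cref{thm:informal-error-reduction-stateQMA}.

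Concretely, let $V$ be the given $\stateQMAPSPACEoff_\delta[c,s]$ verifier, let $\Pi_{\mathrm{acc}}$ be the projector onto the designated output qubit being $\ket{1}$, $\Pi_0$ the projector onto the ancilla register being $\ket{0}$, and $Q := V^\dagger \Pi_{\mathrm{acc}} V$. The MW walk $W := (I - 2\Pi_0)(I - 2Q)$ has the standard property that any state $\ket{u} := \ket{w}\otimes\ket{0}_{\mathrm{anc}}$ lies in a two-dimensional $W$-invariant subspace on which $W$ rotates by an angle $2\theta_w$ with $\sin^2 \theta_w = p_V(w)$. Phase estimation of $W$ with $\poly(n)$ bits of precision and $\exp(\poly(n))$ controlled applications of $W$ therefore estimates $p_V(w)$ to within $\exp(-\poly(n))$ additive error, with $\mathrm{negl}(n)$ failure probability. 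The amplified verifier $V'$ then (i) runs this phase-estimation subroutine on $\ket{u}\otimes\ket{0}_{\mathrm{ph}}$; (ii) coherently computes a flag bit from the phase register, triggered at the threshold $(c+s)/2$; (iii) coherently reverses the phase estimation to decouple the phase register; and (iv) applies $V$ one last time to the input register, declaring the AND of the flag bit and $V$'s output qubit as $V'$'s acceptance and $V$'s remaining qubits as $V'$'s output register. On the accepting branch the input-register state is essentially the ``high-$p_V$ part'' of $\ket{u}$, which by construction has $V$-acceptance $\geq s$; the $\stateQMAPSPACEoff_\delta[c,s]$ soundness promise then forces $V'$'s output register to be $(\delta + \mathrm{negl}(n))$-close to $\ket{\psi_x}$. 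Completeness on honest witnesses is $c - \mathrm{negl}(n)$, which can be further boosted to $1 - \mathrm{negl}(n)$ by wrapping a final fixed-point amplitude amplification layer around $V$ with $O(1/\sqrt{g}) = \exp(\poly(n))$ iterations, still within polynomial space.

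The main obstacle is the trace-distance analysis of Steps (iii)--(iv): one must show that after coherently reversing the phase estimation, the input register---projected onto flag $= 1$---lies within $\mathrm{negl}(n)$ trace distance of the high-acceptance MW components of $\ket{u}$, so that the state-level $\stateQMAPSPACEoff_\delta[c,s]$ soundness promise genuinely applies after the final application of $V$. This combines the MW two-dimensional subspace decomposition with the $\exp(-\poly(n))$ phase-estimation precision and the triangle inequality for trace distance. A further subtlety is that the $\stateQMAPSPACEoff$ soundness promise is naturally formulated on input states of the form $\ket{w}\otimes\ket{0}_{\mathrm{anc}}$, whereas the MW-filtered state is a superposition of $W$-eigenstates with possibly nontrivial ancilla components; one must therefore either extend the promise to general mixed witnesses by convexity, or show that the ancilla contamination introduced by MW is negligible in trace norm. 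Carrying out this bookkeeping is the technical heart of the proof, after which the doubly-preserving guarantee $\rho_{x,w} \approx_{\delta + \mathrm{negl}(n)} \ket{\psi_x}\bra{\psi_x}$ falls out immediately.
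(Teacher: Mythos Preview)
Your approach via Marriott--Watrous phase estimation followed by a re-application of $V$ and a fixed-point amplification layer is a genuinely different route from the paper's, and while the overall strategy is plausible, it is both more elaborate and strictly weaker in its guarantee.

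The paper does not go through phase estimation at all. It works directly with the projected unitary encoding $M_x = \Pi_{\Out} V_x \Pi_{\In}$ and applies quantum singular value transformation with a single odd polynomial $f$ that approximates the sign function with threshold between $\sqrt{s}$ and $\sqrt{c}$. This sends $M_x = \sum_i \sigma_i \ket{\tilde\psi_i}\bra{\psi_i}$ to $\sum_i f(\sigma_i)\ket{\tilde\psi_i}\bra{\psi_i}$ in one shot: both the right singular vectors $\ket{\psi_i}$ (witnesses) and the left singular vectors $\ket{\tilde\psi_i}$ (the resulting states conditioned on acceptance) are preserved \emph{exactly}, and only the singular values move. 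Completeness becomes $1-\negl$ and soundness becomes $\negl$ directly from the behaviour of $f$, with no filter-then-reapply-$V$ pipeline and no separate amplification stage. For the space-bounded regime the paper invokes the exponential-degree, \PSPACE-computable sign approximation of Metger--Yuen, implemented via LCU on Chebyshev polynomials rather than by angle-finding; this is what keeps the ancilla count polynomial even when the degree is $\exp(\poly(n))$.

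By comparison, your construction (i)--(iv) only reaches completeness $c-\negl(n)$, so you bolt on fixed-point amplitude amplification afterwards; but fixed-point amplification \emph{is} a QSVT with an odd polynomial, and showing it does not disturb the output register is exactly the singular-vector-preservation fact the paper uses up front. In effect your MW filter stage is redundant---the final amplification could have absorbed it. Moreover, your phase-estimation leakage and imperfect uncomputation force you to settle for $\delta+\negl(n)$ rather than exactly $\delta$, whereas the paper's statement (and its formal version) keeps the distance parameter unchanged. The ``bookkeeping'' you flag as the main obstacle---arguing that the filtered state is again of the form $\ket{w'}\otimes\ket{0}$ so the original soundness promise applies---is real work, and the paper sidesteps it entirely because QSVT never leaves the span of the original left singular vectors.
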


We note that \Cref{thm:informal-error-reduction-stateQMA} is a state-synthesizing counterpart of the witness-preserving error reduction for \QMA{}~\cite{MW05,NWZ09}. Likewise, \Cref{thm:informal-error-reduction-bounded-stateQMA} shares similarities with error reduction for unitary quantum computations~\cite{FKLMN16} in the context of synthesizing states. Along the line of Marriott and Watrous~\cite{MW05}, we demonstrate that logarithmic-size quantum witness states are useless for \stateQMA{} (this variant is referred to as $\stateQMA[\log]$), as shown in \Cref{corr:informal-stateQMAlog-in-stateBQP}. Here \stateBQP{} is defined as a subclass of \statePSPACE{} with only polynomially many gates. 

\begin{corollary}[Informal version of \Cref{thm:stateQMAlog-in-stateBQP}]
    \label{corr:informal-stateQMAlog-in-stateBQP}
    $\stateQMA_{\delta}[\log] = \stateBQP_{\delta}.$
\end{corollary}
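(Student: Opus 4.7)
The inclusion $\stateBQP_\delta \subseteq \stateQMA_\delta[\log]$ is immediate, since a \stateBQP{} procedure is a \stateQMA{} verifier that simply ignores its witness register. For the reverse inclusion, the plan is a state-preserving analogue of the Marriott--Watrous argument that $\QMA[\log] = \BQP$: replace the logarithmic-size quantum witness by the maximally mixed state, and amplify. Concretely, given a \stateQMA{} verifier $V_x$ with witness size $k = O(\log n)$, first apply the doubly-preserving error reduction of \Cref{thm:informal-error-reduction-stateQMA} to obtain completeness $1-\negl$ and soundness $\negl$, with $\delta$ preserved. The \stateBQP{} simulator then runs $V_x$ with the maximally mixed state $I/2^k$ fed into the witness register (purified via an auxiliary register $W'$ as $\frac{1}{\sqrt{2^k}}\sum_w \ket{w}_{W'}\ket{w}_W$). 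Since some $\ket{w^\star}$ is accepted with probability at least $1-\negl$ and $2^k = \poly(n)$, the overall acceptance probability on the maximally mixed witness is at least $(1-\negl)/2^k = \Omega(1/\poly(n))$, which $O(\sqrt{2^k}) = O(\poly(n))$ rounds of fixed-point amplitude amplification on the accept flag boost up to $1-\negl$.

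It remains to verify that the post-amplification reduced state on the output register is close to $\ket{\psi_x}$. Amplitude amplification drives the joint state near the normalized ``good'' subspace, whose reduced density matrix on the output register equals
\[
\rho_O = \frac{1}{\sum_w p_w} \sum_w p_w \, \sigma_w,
\]
where $p_w$ is the acceptance probability on witness $\ket{w}$ and $\sigma_w$ is the corresponding conditional output state. By the soundness of the error-reduced protocol, every $w$ with $p_w \geq \negl$ yields a $\sigma_w$ that is $\delta$-close to $\ket{\psi_x}$ in trace distance. Witnesses with $p_w < \negl$ contribute total weight at most $2^k \cdot \negl / \sum_w p_w = \poly(n) \cdot \negl = \negl$, since $\sum_w p_w \geq 1-\negl$. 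Convexity of the trace distance then gives that $\rho_O$ is $(\delta + \negl)$-close to $\ket{\psi_x}$, and the additive $\negl$ can be absorbed into $\delta$ by tightening the error-reduction parameters from the outset.

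The main obstacle lies in the analysis above: one needs the doubly-preserving property to apply uniformly across \emph{every} non-negligibly accepting witness, not merely the optimal $\ket{w^\star}$. This is precisely what the soundness clause in the definition of $\stateQMA_\delta[c,s]$ guarantees, and it is the reason the state-preserving form of \Cref{thm:informal-error-reduction-stateQMA} is the right black-box here; a non-state-preserving error reduction would leave the mixture $\rho_O$ entirely uncontrolled. A secondary, routine point is to confirm that the Grover-style rotation does not disturb the reduced output state conditioned on acceptance, which follows from the two-dimensional analysis of amplitude amplification combined with the fact that the output register sits entirely inside the good component.
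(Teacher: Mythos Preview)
Your proof follows essentially the same strategy as the paper: apply doubly-preserving error reduction, then feed in the maximally mixed state on the logarithmic witness register. Two differences are worth noting.

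First, for the containment $\stateBQP_\delta \subseteq \stateQMA_\delta[\log]$, the paper explicitly remarks that this is \emph{not} immediate (see \Cref{prop:stateBQP-in-stateQMA}): if one reuses the circuit's own ancilla qubits as the witness register, an adversarial witness can corrupt the output and the soundness clause fails. Your reading---add a \emph{fresh} witness register and ignore it---does work, since then $\rho_{x,w}$ is independent of $w$ and the soundness condition becomes vacuous; but you should say this explicitly rather than call it immediate, as the paper's concern is precisely about which qubits constitute the witness.

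Second, for the closeness of the output, the paper takes a shorter route than your convexity decomposition: since $\Pr{V_x \text{ accepts } \tilde{I}_m} \geq 2^{-m}(1-2^{-p})$ exceeds the post-reduction soundness $s=2^{-p}$, the paper simply invokes the soundness clause of \Cref{def:stateQMA} \emph{directly on the mixed witness} $\tilde{I}_m$ to conclude $\td(\rho_{\tilde{I}_m},\psi_x)\leq\delta$ in one line, with no additive $\negl$. Your per-witness convexity argument is more explicit about what the pure-state soundness condition literally gives, at the cost of the extra $\negl$ you then have to absorb. The paper also avoids your separate amplitude-amplification analysis by observing that the resulting procedure already lies in $\stateBQP_\delta[1/q]$ and then citing the \stateBQP{} error reduction of \Cref{theorem:error-reduction-stateBQP}, which is the same QSVT machinery and automatically preserves the output state.
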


Resembling the approach of Fefferman and Lin~\cite{FL16}, we demonstrate that a variant of \stateQMA{} that admits an exponentially small gap between completeness and soundness, known as \statePreciseQMA{}, is contained in \statePSPACE{}:

\begin{corollary}[Informal version of \Cref{thm:statePreciseQMA-in-statePSPACE}]
    \label{corr:informal-stateQMA-in-statePSPACE}
    $\statePreciseQMA_{\delta} \subseteq \statePSPACE_{\delta}$.
\end{corollary}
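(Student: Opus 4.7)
The plan is to follow the Fefferman--Lin template showing $\PreciseQMA \subseteq \PSPACE$, adapted to the state-synthesis setting, and to invoke the doubly-preserving error reduction of \Cref{thm:informal-error-reduction-bounded-stateQMA} as a black box. First I would observe that any \statePreciseQMA{} verifier is, in particular, a $\stateQMAPSPACEoff$ verifier: it runs in polynomial time (hence polynomial space) and its completeness--soundness gap is at least $\exp(-\poly(n))$. So for a family $\{\ket{\psi_x}\}_{x\in\calL}\in \statePreciseQMA_{\delta}[c,s]$, I would first apply the bounded error reduction to obtain an equivalent family in $\stateQMAPSPACEoff_{\delta}[1-\negl,\negl]$, whose polynomial-space verifier $V'_x$ preserves the target $\ket{\psi_x}$ within trace distance $\delta$, accepts some witness with probability $\geq 1-\negl$, and on any accepting run with probability $\geq \negl$ still outputs a state $\delta$-close to $\ket{\psi_x}$.

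Second, I would construct an optimal witness $\ket{w^*_x}$ for $V'_x$. The acceptance probability on witness $\ket{w}$ equals $\braket{w}{M_x}{w}$ for the positive semidefinite operator $M_x := (I_{\sfW}\otimes\bra{0^a})\, V'_x{}^\dagger\, \Pi_{\mathrm{acc}}\, V'_x\, (I_{\sfW}\otimes\ket{0^a})$ acting on the witness register, whose entries can be approximated to exponential precision in \PSPACE{} by classically simulating $V'_x$. Standard \PSPACE{} algorithms for diagonalizing succinctly described exponential-size matrices (via the characteristic polynomial or power iteration) then yield the top eigenvalue and an eigenvector approximation, and $\ket{w^*_x}$ can be coherently prepared by a space-uniform polynomial-space circuit.

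The \statePSPACE{} protocol for $\ket{\psi_x}$ then prepares $\ket{w^*_x}\otimes\ket{0^a}$, applies $V'_x$, and post-selects on the output qubit being $1$. Because the completeness bound $1-\negl$ is attained by $\ket{w^*_x}$, post-selection succeeds with overwhelming probability (so amplitude amplification, or just classical repetition within the exponential time budget of \PSPACE{}, produces an accepting run with near-certainty), and the doubly-preserving soundness condition forces the post-measurement state on the output register to be $(\delta+\negl)$-close to $\ket{\psi_x}$, which is absorbed into the standard inverse-exponential slack of the \statePSPACE{} definition.

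The main obstacle is executing step two in a manner that is genuinely space-uniform and that lifts cleanly to a quantum state-preparation subroutine on polynomially many qubits: one must ensure the classical \PSPACE{} description of $\ket{w^*_x}$ translates into a coherent preparation procedure without a space blow-up, and that the exponential precision required to distinguish the completeness and soundness cases is maintained across the eigenvector computation, the coherent preparation, and the amplitude analysis of the post-selection. Careful bookkeeping of trace distances through the error-reduction step and the post-selection is a secondary but nontrivial concern, since we must show that the final output meets the same trace distance $\delta$ claimed in the conclusion rather than $\delta$ plus a non-absorbable constant.
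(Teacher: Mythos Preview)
Your high-level plan is sound and would work, but the paper takes a markedly simpler route that sidesteps precisely the obstacle you flag in your last paragraph. After applying \Cref{thm:informal-error-reduction-bounded-stateQMA} to land in $\stateQMAPSPACEoff_{\delta}[1-2^{-m'},2^{-m'}]$ (with $m'$ a large enough polynomial), the paper does \emph{not} attempt to compute or prepare an optimal witness at all. Instead it feeds the verifier the \emph{completely mixed state} $2^{-m}I_m$ on the witness register. By linearity, the acceptance probability is $2^{-m}\Tr(M_x)\geq 2^{-m}\lambda_{\max}(M_x)\geq 2^{-m}(1-2^{-m'})$, which is inverse-exponential but still far above the negligible soundness $2^{-m'}$. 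The soundness clause of \Cref{def:stateQMA} then guarantees the post-selected output is $\delta$-close to $\ket{\psi_x}$, so one lands directly in $\stateUPSPACE_{\delta}[2^{-m}]$, and error reduction for $\stateUPSPACE$ (\Cref{thm:error-redution-stateUPSPACE}) boosts the success probability. This is exactly the mechanism used in \Cref{thm:stateQMAlog-in-stateBQP}, just with an exponential-size witness register and the space-bounded error reduction in place of the time-efficient one.

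The comparison: your approach is the ``honest'' Fefferman--Lin translation---diagonalize $M_x$ in \PSPACE{}, coherently prepare the top eigenvector, then run the verifier---and it does go through, but it forces you to develop or cite the machinery for space-uniform coherent preparation of a \PSPACE{}-computable amplitude vector, plus the precision bookkeeping you mention. The paper's ``random-guess'' trick buys you all of that for free: no eigenvector computation, no coherent preparation subroutine, and the distance parameter $\delta$ is preserved on the nose because the soundness condition applies directly to whatever (mixed) input achieves acceptance above $s$. Your route would be preferable only if one wanted an explicit near-optimal witness as a byproduct; for the bare containment, the maximally mixed state is both simpler and cleaner.
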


Surprisingly, Corollary \ref{corr:informal-stateQMA-in-statePSPACE} shows that the distance parameter $\delta$ remains \textit{unchanged}, while a similar \statePSPACE{} containment following from~\cite{MY23} will worsen the distance parameter $\delta$, namely $\stateQMA_{\delta} \subseteq \statePSPACE_{\delta+1/\poly}$. 

\vspace{1em}
Furthermore, we demonstrate a natural instance of \stateQMA{} that illustrates the state family of \UQMA{} witnesses, as described in \Cref{thm:informal-UQMA-witness-in-stateQMA}. It is noteworthy that we concentrate on \UQMA{} witnesses rather than \QMA{} witnesses because \stateQMA{} comprises \textit{state families} whereas \QMA{} witnesses typically correspond to \textit{a subspace}, such as the ground space $\calS_H$ of a local Hamiltonian $H$ with a dimension $\dim(\calS_H) > 1$. 

\begin{theorem}[\UQMA{} witness family is in \stateQMA{}, informal version of \Cref{thm:UQMA-witness-in-stateQMA-scaling}]
    \label{thm:informal-UQMA-witness-in-stateQMA}
    For any promise problem $\calL=(\calL_{\yes},\calL_{\no})$, we have
    \begin{itemize}
        \item If $\calL\in \UQMA$, the family of witness states $\{\ket{w_x}\}_{x\in \calL_{\yes}}$ corresponding to yes instances is in $\stateQMA_{1/\poly}$; 
        \item If $\calL\in \PreciseUQMA_{1-{1/\exp}}$ \textup{(\PreciseUQMA{} with $1/\exp$ completeness error)}, the family of witness states $\{\ket{w_x}\}_{x\in \calL_{\yes}}$ corresponding to yes instances is in $\statePreciseQMA_{1/\exp}$.
    \end{itemize}
\end{theorem}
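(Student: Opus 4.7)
The plan is, given a (Precise)\UQMA{} verifier $V_x$ for $\calL$ with unique accepting witness $\ket{w_x}$, to build a \stateQMA{} verifier $\tilde{V}_x$ whose post-measurement output register carries a state close to $\ket{w_x}$ for every sufficiently accepting prover. First I would apply a \emph{witness-preserving} error reduction for \UQMA{} (in the style of~\cite{MW05,NWZ09}) to amplify the gap of $V_x$ to completeness $\geq 1-\eta$ and soundness $\leq \eta$, with $\eta = 1/\poly(n)$ in the \UQMA{} case and $\eta = 1/\exp(n)$ in the \PreciseUQMA{} case. Crucially, the unique accepting witness of the amplified verifier $V'_x$ remains $\ket{w_x}$, in the sense that every state orthogonal to $\ket{w_x}$ accepts with probability at most $\eta$.

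The \stateQMA{} verifier $\tilde{V}_x$ is defined as follows: the prover sends a state $\ket{\phi}$ on the witness register $\sfW$; $\tilde{V}_x$ appends an ancilla register $\mathsf{A}$ in state $\ket{0}^{\otimes m}$, applies $V'_x$ on $\sfW\mathsf{A}$, and measures the designated output qubit. If the outcome is $0$, reject; if $1$, apply $(V'_x)^{\dagger}$ to uncompute and then output $\sfW$ (discarding $\mathsf{A}$). Honest completeness is immediate: the prover sends $\ket{w_x}$, acceptance has probability $\geq 1-\eta$, and the uncomputed state is $O(\sqrt{\eta})$-close to $\ket{w_x}\ket{0^m}$, so $\sfW$ carries something $O(\sqrt{\eta})$-close to $\ket{w_x}$.

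For soundness I would apply Jordan's lemma to the two projectors $\Pi_0 := I_{\sfW} \otimes \ketbra{0^m}{0^m}$ and $\Pi' := (V'_x)^{\dagger}\Pi_{\mathrm{acc}} V'_x$. The Hilbert space decomposes into $1$- and $2$-dimensional $\{\Pi_0,\Pi'\}$-invariant subspaces; in the block $\mathcal{H}_j$, $\Pi_0$ and $\Pi'$ project onto directions at angle $\theta_j$. The amplified \UQMA{} property forces a unique Jordan block $\mathcal{H}_0$ with $\cos^2\theta_0 \geq 1-\eta$, whose $\Pi_0$-direction is $\ket{u_0} := \ket{w_x}\ket{0^m}$, while every other Jordan block meeting $\mathrm{Im}(\Pi_0)$ satisfies $\cos^2\theta_j \leq \eta$. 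Decomposing $\ket{\phi}\ket{0^m} = \alpha \ket{u_0} + \sum_{j\geq 1} \beta_j \ket{u_j}$ along $\Pi_0$-directions, the acceptance probability is $p = |\alpha|^2 \cos^2\theta_0 + \sum_{j\geq 1} |\beta_j|^2 \cos^2\theta_j$; combined with the \stateQMA{} soundness hypothesis $p \geq s$ one deduces $|\alpha|^2 \geq (s-\eta)/(1-\eta)$. A short fidelity calculation that exploits the orthogonality of the $\Pi'$-images across distinct Jordan blocks then shows that the post-measurement uncomputed state lies within $O(\sqrt{\eta/s})$ trace distance of $\ket{u_0}$; tracing out $\mathsf{A}$ is a channel, so $\sfW$ is at the same or smaller trace distance from $\ket{w_x}$.

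The main technical obstacle is the careful quantitative tracking of this fidelity under an arbitrary cheating prover: one has to simultaneously control the lower bound on $|\alpha|^2$, the deviation $1-\cos^2\theta_0$ of the Jordan eigenvector from $\ket{u_0}$, and the interference picked up upon renormalization after postselecting on acceptance. Choosing $\eta$ polynomially (respectively exponentially) smaller than the target distance $\delta$ then yields the claimed inclusions $\stateQMA_{1/\poly}$ and $\statePreciseQMA_{1/\exp}$, where the precise case additionally requires invoking an error reduction for \PreciseUQMA{} that drives the gap down to $1/\exp(n)$ while preserving uniqueness of $\ket{w_x}$.
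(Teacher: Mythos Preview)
Your approach---run the amplified \UQMA{} verifier, postselect on acceptance, uncompute, and analyze via Jordan's lemma---is genuinely different from the paper's, and for the first bullet it works and is more elementary. The paper instead passes through a weighted circuit-to-Hamiltonian construction: from $V_x$ it builds an $O(\log n)$-local Hamiltonian $H^{(x)}$ whose unique ground state is a history state encoding $\ket{w_x}$, and the \stateQMA{} verifier performs a one-bit phase estimation (Hadamard test) with $e^{-iH^{(x)}t}$ followed by $V_x^{\dagger}$; soundness is argued through the Hamiltonian's spectral gap rather than Jordan blocks. Your route avoids the Hamiltonian machinery and the accompanying spectral-gap analysis entirely, trading it for a Marriott--Watrous amplification plus a short Jordan-block fidelity computation.

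For the second bullet, however, your plan has a real gap. Your soundness argument needs the ``orthogonal acceptance'' parameter $r$ (equivalently the second Jordan value $\cos^2\theta_1$) to be at most your $\eta$, and you propose to obtain $\eta=1/\exp$ by ``invoking an error reduction for \PreciseUQMA{}''. But witness-preserving amplification in the Marriott--Watrous style with polynomially many rounds cannot shrink an inverse-exponential gap $c-r$, and a \statePreciseQMA{} verifier must be polynomial-size, so no such amplification is available to you. Without it, $r$ may sit at $1-2/\exp$; a cheating prover sending $(\ket{w_x}+\ket{w'})/\sqrt{2}$ for an orthogonal near-witness $\ket{w'}$ with acceptance $r$ is accepted with probability $\approx 1-O(1/\exp)$, yet your rewound output has only constant fidelity with $\ket{w_x}$---so no uniform soundness threshold $s(n)<c(n)$ can deliver $\delta=1/\exp$. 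The paper's Hamiltonian route sidesteps this precisely because it does \emph{not} try to amplify $r$: the Hadamard-test acceptance $(1+\cos(\lambda t))/2$, combined with the explicit soundness threshold chosen there and the known spectral-gap lower bound $\Omega(\nu\delta_1/T^3)$, forces every prover whose output would be far from $\ket{w_x}$ to already have acceptance below $s'$, while still leaving an inverse-exponential but positive promise gap.
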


In addition, we prove that \stateQCMA{}, which is a variant of \stateQMA{} in which only classical witnesses (i.e., binary strings) are considered for both completeness and soundness conditions, can achieve perfect completeness: 

\begin{theorem}[\stateQCMA{} achieves perfect completeness, informal version of \Cref{thm:stateQCMA-perfect-completeness}]
    \label{thm:informal-stateQCMA-eq-stateQCMA1}
    For any $c(n) - s(n) \geq 1/\poly(n)$ and $0 \leq c(n), s(n) \leq 1$, we have
    \[\stateQCMA_{\delta}[c,s] \subseteq \stateQCMA_{\delta}[1,s']\] 
    for some $s'$ such that $1-s'(n) \geq 1/\poly(n)$. 
\end{theorem}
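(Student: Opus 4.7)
The plan is a two-stage transformation: first, doubly-preserving error reduction; second, a perfect-completeness gadget modeled on the Jordan--Kobayashi--Nagaj--Nishimura construction establishing perfect completeness for \QCMA{}. I first invoke \Cref{thm:informal-error-reduction-stateQMA} specialized to \stateQCMA{}: the doubly-preserving error reduction is witness-preserving, and a classical witness can be freely duplicated by Arthur across the parallel subroutines, so the specialization is immediate. The result is a \stateQCMA{} protocol $V_x$ with completeness at least $1-2^{-n}$, soundness at most $2^{-n}$, and the same distance parameter $\delta$. For each classical witness $w$, let $p_{x,w}$ denote its acceptance probability and $\ket{\alpha_w}$ denote the post-selected accepting branch of $V_x\ket{0,w}$.

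In the second stage, Merlin's new classical witness is a pair $(w,\tilde p)$ where $\tilde p$ is a $\poly(n)$-bit dyadic approximation of $p_{x,w}$. Arthur rejects outright if $\tilde p<1/T^2$ for a fixed polynomial $T$; otherwise he runs the Brassard--Hoyer--Mosca--Tapp exact amplitude amplification parameterized by $\tilde p$, namely $k_{\tilde p}\leq T$ Grover iterations of $V_x$ on $w$ followed by a fractional rotation of angle $\phi_{\tilde p}=\pi/2-(2k_{\tilde p}+1)\arcsin\sqrt{\tilde p}$. The crucial feature is that every step is confined to the two-dimensional good/bad subspace of $V_x\ket{0,w}$, so the post-selected accepting branch remains the original $\ket{\alpha_w}$; only the acceptance probability changes. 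For honest Merlin sending $(w^\star,p_{x,w^\star})$ with $p_{x,w^\star}\geq 1-2^{-n}$, the amplification achieves acceptance probability exactly $1$, and the accepting state is $\ket{\alpha_{w^\star}}$, which is $\delta$-close to $\ket{\psi_x}$ by the original soundness guarantee applied at probability $p_{x,w^\star}>2^{-n}$.

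For soundness of the new protocol I split into two cases. If $p_{x,w}\geq 2^{-n}$, the original soundness forces $\ket{\alpha_w}$ to be $\delta$-close to $\ket{\psi_x}$, so the accepting branch is automatically correct regardless of the new acceptance probability. If $p_{x,w}<2^{-n}$, a trigonometric identity yields the new acceptance probability
\[
p'_{x,w,\tilde p}=\cos^2\!\bigl((2k_{\tilde p}+1)(\arcsin\sqrt{\tilde p}-\arcsin\sqrt{p_{x,w}})\bigr),
\]
and since $(2k_{\tilde p}+1)\arcsin\sqrt{\tilde p}\approx \pi/2$ by construction while $(2k_{\tilde p}+1)\arcsin\sqrt{p_{x,w}}\leq (2T+1)\cdot 2^{-n/2}$ is exponentially small, the argument of the cosine sits close to $\pi/2$, forcing $p'_{x,w,\tilde p}=O(T^2\cdot 2^{-n})$, well below any $1-1/\poly(n)$.

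The main technical obstacle is ensuring that perfect completeness holds at \emph{equality}, rather than at $1-\negl(n)$: honest Merlin's $\tilde p^\star$ must match $p_{x,w^\star}$ exactly, which requires the gate set used to implement $V_x$ to be restricted (or augmented by a small calibration subcircuit driven by additional classical bits in the witness) so that $p_{x,w^\star}$ lies in a polynomial-size dyadic extension. Once this is arranged, coordinating the polynomial $T$, the error-reduction parameters, and the precision of $\tilde p$, as well as verifying the boundary regime $p_{x,w}\approx 2^{-n}$, reduces to routine trigonometric bookkeeping.
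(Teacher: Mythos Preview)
Your second stage has a genuine gap. You correctly flag that the gateset for $V_x$ must be restricted so that $p_{x,w^\star}$ is exactly specifiable by Merlin, but you overlook a second exactness requirement: the fractional rotation by $\phi_{\tilde p}=\pi/2-(2k_{\tilde p}+1)\arcsin\sqrt{\tilde p}$ in the BHMT procedure must itself be implemented \emph{exactly} by Arthur. Even when $\tilde p$ is dyadic or of the form $k/5^{l}$, the angle $\arcsin\sqrt{\tilde p}$ is transcendental for all but finitely many such $\tilde p$, so no finite gateset with algebraic entries can synthesize the phase $e^{i\phi_{\tilde p}}$ exactly. Without that, BHMT gives completeness $1-\negl(n)$ at best, not $1$. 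This is not ``routine trigonometric bookkeeping''; it is precisely why the decision-class proof of $\QCMA=\QCMA_1$ in \cite{JKNN12} does not simply invoke BHMT either.

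The paper sidesteps the transcendental rotation entirely, adapting \cite{JKNN12} while additionally arranging that the post-selected resulting state is preserved. After converting to the Pythagorean gateset so that $p_{x,w}=k_{x,w}/5^{l(n)}$, Merlin's witness is $(w,k)$ with the claim $k=k_{x,w}$. Arthur \emph{dilutes} the acceptance probability to exactly $k_{x,w}/(2k)$ by tensoring an auxiliary register holding the uniform superposition over $2k$ basis states and ANDing the accept flag with the predicate ``index $<5^{l(n)}$''; both steps are exactly implementable with rational-entry gates, and the AND structure leaves the state on the original output wires unchanged. When Merlin is honest the diluted probability is exactly $1/2$, and a single iteration of Watrous's quantum rewinding---which uses only $Q$, $Q^\dagger$, and the reflection $2\ket{\bar 0}\bra{\bar 0}-I$, no irrational phases---maps $1/2$ to $1$. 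Soundness follows because a dishonest $(w,k)$ passing the threshold $k\geq c\cdot 5^{l(n)}$ but with $k_{x,w}\leq s\cdot 5^{l(n)}$ yields a diluted ratio $t:=k_{x,w}/k\leq s/c$, and the rewinding map $t\mapsto \tfrac{1}{2}t^3-2t^2+\tfrac{5}{2}t$ is monotone on $[0,1]$ and sends $s/c$ to a value bounded away from $1$ by $1/\poly(n)$.
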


This result is analogous to the $\QCMA=\QCMA_1$ theorem~\cite{JN11,JKNN12} for synthesizing quantum states. 
In addition, it is worth noting that \Cref{thm:informal-stateQCMA-eq-stateQCMA1} also straightforwardly extends to \statePreciseQCMA{}. 

\subsection{Proof techniques}

The proof of \Cref{thm:informal-error-reduction-stateQMA} and \Cref{thm:informal-error-reduction-bounded-stateQMA} employs the quantum linear algebra techniques developed by Gilyén, Su, Low, and Wiebe~\cite{GSLW19}, specifically the quantum singular value discrimination. 

\paragraph{Error reduction for \stateQMA{} by manipulating singular values.} To elaborate on the intuition, we begin by briefly reviewing the witness-preserving error reduction for \QMA{}~\cite{MW05,NWZ09}. Consider a \QMA{} verification circuit $V_x$ that takes a quantum witness state $\ket{w}$ (on the register $\sfW$) and ancillary qubits in the state $\ket{\bar{0}}$ as input. The corresponding acceptance probability is $\|\ket{1}\bra{1}_{\Out} V_x \ket{w}\ket{\bar{0}}\|_2^2$, which is equal to a quadratic form $\bra{w} M_x \ket{w}$ where the matrix $M_x\coloneqq \bra{\bar{0}} V_x^{\dagger} \ket{1}\bra{1}_{\Out} V_x \ket{\bar{0}}$. It is not hard to see the maximum acceptance probability of $V_x$ is the largest eigenvalue of $M_x$. 
We then view $M_x=\Pi_{\In}\Pi\Pi_{\In}$ as a product of Hermitian projectors $\Pi_{\In}$ and $\Pi$ where $\Pi_{\In}=I_{\sfW}\otimes\ket{\bar{0}}\bra{\bar{0}}$ and $\Pi=V_x^{\dagger} \ket{1}\bra{1}_{\Out} V_x$. Remarkably, there exists an orthogonal decomposition of the Hilbert space, which the projectors $\Pi_{\In}$ and $\Pi$ act on, into \textit{one-dimensional} and \textit{two-dimensional} common invariant subspaces. This elegant decomposition property is referred as to the Jordan lemma~\cite{Jordan75}.\footnote{See~\cite{Reg06} for the detailed statement of the Jordan lemma, as well as a simple proof.} Marriott and Watrous~\cite{MW05} then take advantage of the Jordan lemma and present error reduction for \QMA{} that preserves the quantum witness state. 

However, this error reduction technique does not automatically preserve the resulting state, as required in \stateQMA{}, we thus need a more sophisticated technique, namely the quantum singular value transformation~\cite{GSLW19}. This technique generalizes the qubitization technique introduced by Low and Chuang~\cite{LC19} that was inspired by the aforementioned decomposition property.
Moving on to the maximum acceptance probability of a \stateQMA{} verifier $V_x$, it corresponds to the square root of the largest singular value of the matrix $A_x=\Pi_{\Out} V_x \Pi_{\In}$ where $\Pi_{\Out}\coloneqq \ket{1}\bra{1}_{\Out}$ is the final measurement. In Section 3.2 of~\cite{GSLW19}, the authors extend the Jordan lemma to the singular value scenarios. In particular, $\Img(\Pi_{\In})$ and $\Img(\Pi_{\Out})$ can be decomposed into one-dimensional or two-dimensional common invariant subspaces. Now let us focus on the specific case of \stateQMA{}, we notice that the right singular vectors of $A_n$ correspond to the quantum witness state $\ket{w}$, and the left singular vectors correspond to the resulting state $\rho_{x,w}$. Therefore, we obtain \textit{doubly-preserving error reduction} for \stateQMA (\Cref{thm:informal-error-reduction-stateQMA}) by manipulating the singular values accordingly.\footnote{Concretely speaking, the analysis of error reduction based on majority votes essentially corresponds to obtaining tail bounds for the Binomial distribution. By leveraging the central limit theorem, it becomes sufficient to estimate tail bounds for the normal distribution, referred to as the error function $\erf(x)$. The approximation polynomials of the sign function in~\cite{LC17} then achieve this task. } 

It is noteworthy that \Cref{thm:informal-error-reduction-stateQMA} differs from Theorem 38 in~\cite{GSLW19} since our construction is based on the \textit{projected unitary encoding} (e.g., the presented matrix $A_n$) instead of the block-encoding. 
Furthermore, for establishing \Cref{thm:informal-error-reduction-bounded-stateQMA}, we make use of an \textit{exponential-degree} approximation polynomial of the sign function where all coefficients are computable in \PSPACE{} to within \textit{exponential precision}~\cite{MY23}. We additionally observe that the proof techniques in~\cite{MY23} can be straightforwardly adapted to \textit{projected unitary encodings} instead of the block-encodings originally utilized in their work. 

\paragraph{Applications of error reduction for \stateQMA{}.} Along the line of Theorem 3.13 in~\cite{MW05}, which states $\QMA[\log]=\BQP$, with \Cref{thm:informal-error-reduction-stateQMA}, it seems to straightforwardly make for \Cref{corr:informal-stateQMAlog-in-stateBQP}. 
Nevertheless, we need a careful analysis on the resulting state to achieve the statement. 
Specifically, utilizing the error reduction for \stateQMA{}, we begin with a verifier with completeness $1-2^{-p(n)}$ and soundness $2^{-p(n)}$ where $p$ is a polynomial of $n\coloneqq |x|$. Then we replace the short quantum witness state $\ket{w}$ with a completely mixed state $I_{\sfW}/\dim(I_{\sfW})$, which gives us a computation meeting the soundness condition such that the soundness $s$ is preserved and the gap between the completeness and the soundness shrinks to some inverse-polynomial of $n\coloneqq |x|$. Although the new resulting state $\rho_{I_{\sfW}}$ may greatly differ from $\rho_{x,w}$, the definition of \stateQMA{} guarantees that $\rho_{I_{\sfW}}$ is also close to the target state because the acceptance probability of the verifier with $I_{\sfW}$ is greater than the soundness $s$. This proof also easily extends to \Cref{corr:informal-stateQMA-in-statePSPACE} employing \Cref{thm:informal-error-reduction-bounded-stateQMA}. 
In addition, it is noteworthy that \stateBQP{} achieves perfect completeness with a worsening distance parameter $\delta'$. By incorporating error techniques for both \stateBQP{} and $\stateUPSPACE$, the difference between the new distance parameter $\delta'$ and the original one can be made \textit{exponentially small}. 
Furthermore, we remark that \stateBQP{} is contained in \stateQMA{}.\footnote{See \Cref{subsec:stateQMA-and-more} (in particular, \Cref{prop:stateBQP-in-stateQMA}) for a detailed elaboration. } 
We therefore complete the other direction in \Cref{corr:informal-stateQMAlog-in-stateBQP}. 

\paragraph{\UQMA{} witness family is in \stateQMA{}.} To establish \Cref{thm:informal-UQMA-witness-in-stateQMA}, we start by employing a weighted circuit-to-Hamiltonian construction~\cite{BC18,BCNY19} to generate a ground state (also known as the history state) that provides a $\delta_1$-close approximation of the output state produced by a \UQMA{} verifier $V_x$ with maximum acceptance probability $1-\nu$. The resulting Hamiltonian $H^{(x)}$ possesses several crucial properties: the ground energy of $H^{(x)}$ is $\nu$, and the spectral gap satisfies $\Delta(H^{(x)}) \geq q(\nu \delta_1/T^3)$ for some polynomial $q$, where $T$ is the size of $V_x$. 

Next, we construct a $\stateQMA_{\delta}[c',s']$ verifier primarily based on the one-bit precision phase estimation~\cite{Kitaev95}, often referred to as Hadamard test~\cite{AJL09}. Through a fairly complicated analysis, we demonstrate that the promise gap $c'-s' \geq q'(v\delta_1 t/T^3)-2\epsilon$ for some polynomial $q'$ and the distance parameter $\delta\leq O(\max\{(1-\cos{\nu})/2,\delta_1,q(\nu \delta_1 t/T^3)\})$. Here, $\epsilon$ represents the implementation error of $\exp(-iH^{(x)}t)$ guaranteed by space-efficient Hamiltonian simulation~\cite[Theorem 12]{FL18}. We then achieve \Cref{thm:informal-UQMA-witness-in-stateQMA} by appropriately choosing parameters.

\paragraph{\stateQCMA{} achieves perfect completeness.} Our proof of \Cref{thm:informal-stateQCMA-eq-stateQCMA1} takes inspiration from \cite{JN11,JKNN12}, but it requires several modifications. Note that our concern in \stateQCMA{} is not only the maximum acceptance probability but also the resulting state after performing the verification circuit and the final measurement. To meet these requirements, we must choose a specific universal gateset $\calS$ such that $\calS$ can generate a dense subgroup of $\mathrm{SU}(2^n)$ where $n$ is the number of qubits utilized in the quantum circuit, and all quantum states generated by these gates in $\calS$ have rational entries. For this reason, we opt for the ``Pythagorean gateset''\footnote{See \Cref{remark:choices-of-gatesets} for the details.}~\cite{JN11,CK22}.
To ensure that the resulting state is indeed close to the target state, we slightly adjust the construction outlined in~\cite{JKNN12}. 

\subsection{Related works}

In addition to the state-synthesizing complexity classes explored in prior works~\cite{RY22,INNSY22,MY23} and the present study, there are other investigations focusing on the quantum state synthesizing problem from diverse perspectives, including cryptography and distributed computing. From a cryptographic standpoint, a recent work~\cite{CGK21} examines non-interactive zero-knowledge (NIZK) proofs for quantum states, with a relevance to the setting of \stateQCMA{}. Another concurrent study~\cite{CMS23} addresses zero-knowledge proofs for quantum states, considering scenarios that pertinent to both \stateQMA{} and \stateQIP{} scenarios. On the other hand, in the realm of distributed computing, another recent work~\cite{LGMN22} delves into quantum state synthesizing through the utilization of distributed quantum Merlin-Arthur protocols. 

\subsection{Discussion and open problems}

\paragraph{Reduction and completeness in state-synthesizing complexity theory.} In the context of state-synthesizing complexity theory, including prior works~\cite{INNSY22,MY23,RY22} and our own, the concepts of \textit{reduction} and \textit{completeness} have not been defined. However, these concepts hold significant importance in (quantum) complexity theory. The immediate challenge lies in appropriately defining these concepts, such as reduction, in a manner that ensures the resulting states exhibit reasonable behavior before and after the application of the reduction. 
Nevertheless, a very recent work~\cite{BEMPQY23} introduced notions of reduction and completeness within the context of \textit{unitary-synthesising} complexity theory. 

\paragraph{The computational power of \statePreciseQMA{}.} Although Corollary \ref{corr:informal-stateQMA-in-statePSPACE} establishes that a $\statePSPACE$ containment of $\statePreciseQMA$, the reverse inclusion, namely $\statePSPACE \subseteq \statePreciseQMA$, remains an open problem. The main challenge lies in adapting existing proof techniques that demonstrate $\PSPACE \subseteq \PreciseQMA$~\cite{FL16,FL18,Li22}, as these techniques heavily rely on notions of \textit{completeness} or \textit{reduction} for the class $\PSPACE$. 

\section{Preliminaries}

In this paper we assume the reader is familiar with the basic concepts and notions of quantum computation (see, e.g.,~\cite{NC10} for the reference). We begin by introducing some useful linear-algebraic notations. Let $M$ be an Hermitian matrix, and we denote the largest eigenvalue of $M$ as $\lambda_{\max}(M)$, as well as the largest singular value of $M$ as $\sigma_{\max}(M)$. Also, $I_k$ is the identity matrix of size $2^k \times 2^k$. In addition, we will employ three Schatten norms of matrices in this paper, which include: 
\begin{itemize}
    \item Trace norm $\|M\|_1\coloneqq \Tr \sqrt{M^{\dagger} M}$;
    \item Frobenius norm $\|M\|_2\coloneqq \sqrt{\Tr(M^{\dagger} M)}$; 
    \item Operator norm $\|M\|_{\infty}\coloneqq \sigma_{\max}(M) = \sqrt{\lambda_{\max}(M^{\dagger} M)}$. 
\end{itemize}

\subsection{Distances for quantum states}
\label{subsec:quantum-distances}

For any (in general mixed) quantum states $\rho_0$ and $\rho_1$, we measure the closeness between these states by the \textit{trace distance} $\td(\rho_0,\rho_1)\coloneqq \frac{1}{2}\|\rho_0-\rho_1\|_1$. 
Moreover, we will employ the following properties of the trace distance, which are explained in detail in Section 9.2.1 of~\cite{NC10}:
\begin{description}
    \item[Contractivity] For any quantum channel $\Phi$, we have $\td(\Phi(\rho_0),\Phi(\rho_1)) \leq \td(\rho_0,\rho_1)$; 
    \item[Convexity] For any non-negative coefficients $\{p_i\}_i$ such that $\sum\nolimits_{i} p_i=1$, we have \[\td\big( \rho_0, \sum\nolimits_i p_i \rho_1^{(i)} \big) \leq \sum\nolimits_i p_i \td\big(\rho_0,\rho_1^{(i)}\big).\] 
\end{description}

In addition, we utilize the (Uhlmann) \textit{fidelity} defined as \textit{fidelity} $\F(\rho_0,\rho_1)\coloneqq \Tr\sqrt{\sqrt{\rho_0} \rho_1 \sqrt{\rho_0}}$. 
Importantly, when considering a pure state $\rho_0\coloneqq \ket{\psi_0}\bra{\psi_0}$ and a mixed state $\rho_1$, the fidelity simplifies to $\F(\rho_0,\rho_1)=\sqrt{\bra{\psi_0} \rho_1 \ket{\psi_0}}$. 
Furthermore, the well-known Fuchs-van de Graaf inequality~\cite{FvdG99} states the following:
\begin{lemma}[Trace distance vs.~Fidelity,\cite{FvdG99}]
    \label{lemma:td-vs-fidelity}
    For any quantum states $\rho_0$ and $\rho_1$, \[1-\F(\rho_0,\rho_1) \leq \td(\rho_0,\rho_1) \leq \sqrt{1-\F^2(\rho_0,\rho_1)}.\]
\end{lemma}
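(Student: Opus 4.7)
The plan is to reduce both inequalities to the pure-state case, where they follow from a direct calculation, and then lift each bound to mixed states using a different technique. For the pure-state base case, if $\rho_i=\ket{\psi_i}\bra{\psi_i}$, a two-dimensional spectral computation (working in the span of $\ket{\psi_0}$ and $\ket{\psi_1}$) yields $\td(\rho_0,\rho_1)=\sqrt{1-|\innerprod{\psi_0}{\psi_1}|^2}$ while $\F(\rho_0,\rho_1)=|\innerprod{\psi_0}{\psi_1}|$, so both directions reduce to the elementary inequality $1-x\leq\sqrt{1-x^2}$ on $x\in[0,1]$.

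For the upper bound, I would invoke Uhlmann's theorem to select purifications $\ket{\tilde\psi_0}$, $\ket{\tilde\psi_1}$ of $\rho_0,\rho_1$ on a common enlarged Hilbert space $\calH\otimes\calH_B$ satisfying $|\innerprod{\tilde\psi_0}{\tilde\psi_1}|=\F(\rho_0,\rho_1)$. Contractivity of the trace distance under the partial-trace channel $\Tr_B$ then gives
\[
\td(\rho_0,\rho_1) \;\leq\; \td\bigl(\ket{\tilde\psi_0}\bra{\tilde\psi_0}, \ket{\tilde\psi_1}\bra{\tilde\psi_1}\bigr) \;=\; \sqrt{1-|\innerprod{\tilde\psi_0}{\tilde\psi_1}|^2} \;=\; \sqrt{1-\F^2(\rho_0,\rho_1)},
\]
which is the right-hand inequality.

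For the lower bound, contractivity would point the wrong way, so instead I would use the operational (Fuchs-Caves) characterization of fidelity: $\F(\rho_0,\rho_1)=\min_{\{E_m\}}\sum_m\sqrt{p_m q_m}$, where the minimum ranges over POVMs and $p_m:=\Tr(E_m\rho_0)$, $q_m:=\Tr(E_m\rho_1)$. For any single POVM, the elementary identity $|p-q|=|\sqrt{p}-\sqrt{q}|(\sqrt{p}+\sqrt{q})\geq(\sqrt{p}-\sqrt{q})^2$ gives the classical Fuchs-van de Graaf inequality $\tfrac{1}{2}\sum_m|p_m-q_m|\geq 1-\sum_m\sqrt{p_m q_m}$. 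Picking the POVM that achieves the Fuchs-Caves minimum, and combining with the variational characterization $\td(\rho_0,\rho_1)=\max_{\{E_m\}}\tfrac{1}{2}\sum_m|p_m-q_m|$, one obtains $\td(\rho_0,\rho_1)\geq 1-\F(\rho_0,\rho_1)$. The main obstacle is precisely this asymmetry: the upper bound is essentially monotonicity of trace distance under partial trace, whereas the lower bound genuinely requires an operational or variational statement about fidelity, and one must take care not to accidentally invoke contractivity in the wrong direction. Both Uhlmann's theorem and the Fuchs-Caves theorem are standard and may be cited rather than reproved.
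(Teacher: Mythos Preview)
Your proposal is correct and follows the standard textbook argument (Uhlmann's theorem plus contractivity for the upper bound; the Fuchs--Caves measurement characterization for the lower bound). However, there is nothing to compare against: the paper does not prove this lemma at all. It is stated as a cited fact from \cite{FvdG99} (and implicitly from Section~9.2 of Nielsen--Chuang, which the paper references for trace-distance properties), with no accompanying proof. So your write-up, while sound, goes beyond what the paper supplies; for the purposes of this paper a one-line citation would suffice.
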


\subsection{Gatesets matter for synthesizing quantum states}

A \textit{gateset} is a finite set of unitary matrices each of which acts on a finite-dimensional quantum system.
A gateset $\calS$ is \textit{universal} if the subgroup generated by $\cal{S}$ is dense in $\mathrm{SU}(2^n)$ for large enough $n$. 
We mainly use the common universal gateset $\{\CNOT, \Hadamard, \T\}$ for convenience, and further note that complexity classes remain unchanged for all reasonable choices of gatesets that all entries are algebraic numbers,\footnote{For a comprehensive explanation of the conditions on choosing gatesets, see Theorem 2.10 in~\cite{Kuperberg15}.} owing to the Solovay-Kitaev theorem~\cite{Kitaev97} and its space-efficient variant~\cite{vMW12}. 

Additionally, to achieve perfect completeness, we require a particular \textit{``Pythagorean'' gateset} in \Cref{sec:stateQCMA}, introduced by Jordan and Nagaj\cite{JN11}, which consists of \CNOT{} and ``Pythagorean'' gates
$\frac{1}{5}
\begin{psmallmatrix}
4 & -3 \\
3 & 4 
\end{psmallmatrix}$ and 
$\frac{1}{5}
\begin{psmallmatrix}
4 & 3i \\
3i & 4 
\end{psmallmatrix}$.\footnote{An analogous gateset is also utilized in \cite{CK22} with a slightly different ``Pythagorean'' gate.} 

We end with \Cref{lemma:changing-gateset-errs} which states changing the gateset of a family of quantum circuits that prepares a certain family of quantum states will make the resulting state negligibly far from the target state. \Cref{lemma:changing-gateset-errs} highlights the robustness of the state-synthesizing procedure with respect to the choice of gateset. For convenience, we will define $\psi\coloneqq \ket{\psi}\bra{\psi}$. 

\begin{lemma}[Changing gateset worsens the distance parameter]
\label{lemma:changing-gateset-errs}
Consider a family of (verification) circuits $\{Q_x\}_{x\in\calL}$ that prepares the corresponding family of resulting states $\{\rho_x\}_{x\in\calL}$.\footnote{In particular, we first perform the (verification) circuit $Q_x$ on the input state, which is not necessarily the all-zero state, then we measure the designated output qubit. If the measurement outcome is $1$, we obtain the resulting state corresponding to $Q_x$ on the remaining qubits.}
Then there exists a circuit family $\{Q'_x\}_{x\in\calL}$ that prepares $\{\rho'_x\}_{x\in\calL}$, using a designated universal gateset $\calG$ that is closed under adjoint, such that $\td(\rho_x,\rho'_x) \leq \exp(-\poly(n))$ where $n\coloneqq |x|$. 
In particular, the deterministic algorithm that constructs $\{Q'_x\}_{x\in\calL}$ runs in polynomial time and polynomial space.
\end{lemma}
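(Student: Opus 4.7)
The plan is to apply the Solovay--Kitaev theorem~\cite{Kitaev97} (and its space-efficient variant~\cite{vMW12} when $\{Q_x\}$ is only space-uniform polynomial-space rather than polynomial-size) gate by gate: since $\calG$ is closed under adjoint and universal, each elementary gate of $Q_x$ admits an $\epsilon$-approximation in operator norm by a word over $\calG$ of length $O(\poly(\log(1/\epsilon)))$. Taking $\epsilon := \exp(-r(n))$ for a sufficiently large polynomial $r$, each compiled block still uses only polynomially many $\calG$-gates, so the resulting circuit $Q'_x$ stays polynomial-size in the \stateQMA{} setting and polynomial-space in the $\stateQMAPSPACEoff$ setting.

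Next I would control the global error $\|Q_x - Q'_x\|_\infty$ via the standard telescoping inequality across the $T = \poly(n)$ elementary gates, which aggregates the per-gate precision to $\|Q_x - Q'_x\|_\infty \leq T\epsilon$. Applied to the (arbitrary) input $\ket{\phi_x}$, this immediately gives the pre-measurement $\ell_2$-bound $\|Q_x\ket{\phi_x} - Q'_x\ket{\phi_x}\| \leq T\epsilon$.

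The main obstacle is converting this \emph{unconditional} $\ell_2$-closeness into a trace-distance bound on the \emph{post-selected} resulting states $\rho_x$ and $\rho'_x$. Decomposing $Q_x\ket{\phi_x} = \sqrt{1-p_x}\,\ket{0}\ket{\alpha_x} + \sqrt{p_x}\,\ket{1}\ket{\beta_x}$ so that $\rho_x = \ketbra{\beta_x}{\beta_x}$, and analogously for $Q'_x$, I would project onto the accept branch and use $\bigl|\|u\|-\|v\|\bigr| \leq \|u-v\|$ to obtain $|\sqrt{p_x}-\sqrt{p'_x}| \leq T\epsilon$; a short triangle-inequality calculation then yields $\td(\rho_x,\rho'_x) \leq 2T\epsilon/\sqrt{p_x}$. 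In every regime where this lemma is applied, $p_x$ is at least inverse-exponential (otherwise $\rho_x$ is not operationally well-defined), so choosing $\epsilon$ exponentially smaller than $\sqrt{p_x}/T$ absorbs the $1/\sqrt{p_x}$ amplification from conditioning while keeping the Solovay--Kitaev overhead polynomial, giving $\td(\rho_x,\rho'_x) \leq \exp(-\poly(n))$ as claimed. I expect this post-selection estimate to be the only delicate point, since naive closeness of the unconditional outputs blows up by $1/\sqrt{p_x}$, but the polylogarithmic overhead of Solovay--Kitaev comfortably absorbs this factor.
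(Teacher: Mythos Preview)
Your approach is correct and shares the paper's high-level strategy: compile $Q_x$ into $Q'_x$ via (space-efficient) Solovay--Kitaev so that $\|Q_x-Q'_x\|_\infty$ is exponentially small, then push this through to the post-selected outputs. The paper, however, handles the post-selection step by declaring $\Phi(\rho)=\Pi_{\Out}\rho\Pi_{\Out}/\Tr(\Pi_{\Out}\rho\Pi_{\Out})$ a ``quantum channel'' and invoking contractivity of the trace distance. That map is not CPTP (the normalization is nonlinear), and conditioning on acceptance can genuinely inflate trace distance by a factor of order $1/p_x$; so the paper's bound $\td(\rho_x,\rho'_x)\leq\|Q_x-Q'_x\|_\infty$ is not justified as written. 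Your explicit accounting for the $1/\sqrt{p_x}$ blow-up, together with the observation that in every use of the lemma $p_x$ is at least inverse-exponential and hence can be swallowed by choosing the Solovay--Kitaev precision $\epsilon$ accordingly (at only polylogarithmic overhead), is exactly the argument needed to make the conclusion rigorous. In short, your route is not merely different---it repairs a subtle gap in the paper's proof while arriving at the same statement.
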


\begin{proof}
Consider a quantum channel $\Phi(\rho)\coloneqq \frac{\Pi_{\Out} \rho \Pi_{\Out}}{\Tr\left( \Pi_{\Out} \rho \Pi_{\Out} \right)}$ where $\Pi_{\Out}\coloneqq \ket{1}\bra{1}_{\Out}$ that post-selects the output qubit to be $1$, and let $\ket{w}$ be a (quantum) witness,\footnote{If $Q_x$ and $Q'_x$ are not verification circuits, then this witness state is simply the all-zero state.} we have derived that
\begin{align*}
    \td(\rho_x,\rho'_x)&=\td\left(\Phi\left(Q_x(\ket{w}\bra{w}\otimes\ket{\bar{0}}\bra{\bar{0}})Q_x^\dagger\right), \Phi\left(Q'_x(\ket{w}\bra{w}\otimes\ket{\bar{0}}\bra{\bar{0}})(Q'_x)^{\dagger}\right)\right)\\
    &\leq \td\left(Q_x(\ket{w}\bra{w}\otimes\ket{\bar{0}}\bra{\bar{0}})Q^\dagger_x, Q'_x(\ket{w}\bra{w}\otimes\ket{\bar{0}}\bra{\bar{0}})(Q'_x)^{\dagger}\right)\\
    &= \sqrt{1-\left|\bra{w}\bra{\bar{0}}Q_x^{\dagger}Q'_x\ket{w}\ket{\bar{0}}\right|^2}\\
    &\leq \|Q_x\ket{w}\ket{\bar{0}}-Q'_x\ket{w}\ket{\bar{0}}\|^2_2\\
    &= \|Q_x-Q'_x\|^2_{\infty} \|\ket{w}\ket{\bar{0}}\|^2_2,
\end{align*}
with the second line owing to the contractive property, and the fourth line due to the fact that $(\innerprod{\psi}{\phi}-1)^{\dagger}(\innerprod{\psi}{\phi}-1) \geq 0$ for any pure states $\ket{\psi}$ and $\ket{\phi}$.

Note that the gates of $\mathcal{G}$ are closed under adjoint. By a space-efficient Solovay-Kitaev theorem (i.e., Theorem 4.3 in~\cite{vMW12}), we can construct $Q'_x$ such that $\|Q_x-Q'_x\|_{\infty}\leq \epsilon$ by a deterministic algorithm running in time $\poly\log(1/\epsilon)$ and space $O(\log(1/\epsilon))$. By choosing $\epsilon^2=\exp(-\poly(n))$, we thus obtain $\td(\rho_x,\rho'_x) \leq \|Q_x-Q'_x\|^2_{\infty}\ \leq \exp(-\poly(n))$ as desired. 
\end{proof}

\subsection{Unique-witness \QMA{}}

\begin{definition}[Unique Quantum Merlin-Arthur, $\UQMA$, adapted from~\cite{ABOBS22,JKKSSZ12}]
    A promise problem $\calL=(\calL_{\yes},\calL_{\no})$ is in $\UQMA[c,r,s]$ if there exists a uniformly generated polynomial-size quantum circuit $\{V_x\}_{x\in \calL}$ having $k(n)$ qubits for the witness register and requiring $m(n)$ ancillary qubits initialized to $\ket{0^m}$, where $k$ and $m$ are polynomials of $n\coloneqq |x|$ such that 
    \begin{itemize}
        \item \textbf{\emph{Completeness}}. If $x \in \calL_{\yes}$, then there exists a witness $\ket{w}$ such that $\Pr{V_x \text{ accepts } \ket{w}} \geq c(n)$. Furthermore, for all states $\ket{\phi}$ orthogonal to $\ket{w}$, $\Pr{V_x \text{ accepts } \ket{\phi}} \leq r(n)$. 
        \item \textbf{\emph{Soundness}}. If $x \in \calL_{\no}$, then for all states $\ket{\phi}$, $\Pr{V_x \text{ accepts } \ket{\phi}} \leq s(n)$. 
    \end{itemize}
    Here, $c,r,s\colon \bbN \rightarrow [0,1]$ are efficiently computable functions such that $c(n) > \max\{r(n),s(n)\}$. 
\end{definition}

Following the witness-preserving error reduction for \QMA{} by Marriott and Watrous~\cite{MW05}, we know that \UQMA{} admits error reduction. 

\begin{theorem}[Error reduction for \UQMA{}, {\cite[Theorem 3.6]{JKKSSZ12}}]
\label{thm:UQMA-error-reduction}
Consider $x\in \calL$ where $\calL \in \UQMA$. 
Let $c,r,s\colon \bbN \rightarrow [0,1]$ be efficiently computable functions such that $c(n) > \max\{r(n),s(n)\}+1/q(n)$ for some polynomial $q$ of $n\coloneqq |x|$. 
Then for any polynomial $l(n)$, we have that 
\[\UQMA[c,r,s] \subseteq \UQMA\left[1-2^{-l},2^{-l},2^{-l}\right].\] 
The witness length of the new \UQMA{} verifier remains preserved.
\end{theorem}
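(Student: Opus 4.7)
The plan is to follow the witness-preserving error reduction of Marriott and Watrous~\cite{MW05} for $\QMA$ and verify carefully that the unique-witness structure is amplified alongside the usual soundness. First, I would set up the standard framework by defining the projectors $\Pi_{\In} := I_{\sfW} \otimes \ket{0^m}\bra{0^m}$ and $\Pi := V_x^\dagger (\ket{1}\bra{1}_{\Out} \otimes I) V_x$ on the combined witness--ancilla register, and forming the unitary $U := (2\Pi - I)(2\Pi_{\In} - I)$. The acceptance probability of any input $\ket{\phi}\otimes\ket{0^m}$ equals $\bra{\phi\,0^m} M \ket{\phi\,0^m}$, where $M := \Pi_{\In} \Pi \Pi_{\In}$ is Hermitian.

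By Jordan's lemma, $\Img(\Pi_{\In})$ decomposes into an orthonormal basis $\{\ket{v_j}\}_j$ such that each $\ket{v_j}$ lies in a one- or two-dimensional subspace jointly invariant under $\Pi_{\In}$ and $\Pi$; on each two-dimensional subspace, $U$ acts as a rotation with eigenvalues $e^{\pm 2 i \theta_j}$, and the acceptance probability of $\ket{v_j}$ equals $\cos^2(\theta_j)$. Since the $\{\ket{v_j}\}$ are precisely the eigenvectors of $M$, the $\UQMA[c,r,s]$ promise translates directly to a spectral gap: in a yes-instance, $M$ has a top eigenvalue at least $c$ whose eigenvector coincides with $\ket{w}\otimes\ket{0^m}$ (up to global phase), while every remaining eigenvalue is at most $r$; in a no-instance, every eigenvalue is at most $s$.

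Next, I would apply phase estimation of $U$ with $O(\log q)$ bits of precision to $\ket{\phi}\otimes\ket{0^m}$, outputting an accept bit when the estimate of $\theta_j$ is closer to $0$ than to $\pi/2$. A single phase-estimation round already produces an $\Omega(1/q)$ gap between the witness's accept rate and that of any other Jordan component. I would then repeat this procedure $k = O(l \cdot q^2)$ times in sequence on fresh ancilla registers, uncomputing each phase-estimation register after use so that the Jordan-basis content of the input is preserved, and take the majority vote of the $k$ indicator bits. A standard Chernoff bound amplifies the single-round gap to error $2^{-l}$, so that $\ket{w}$ is accepted by the new verifier with probability at least $1-2^{-l}$ while no new witness qubits are introduced.

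The main obstacle is verifying the uniqueness clause for the amplified verifier: every state $\ket{\phi}$ orthogonal to $\ket{w}$ must be accepted with probability at most $2^{-l}$. Because $U$ is block-diagonal in the Jordan decomposition, the phase-estimation subroutines (and hence the full amplified circuit) do not mix distinct $\ket{v_j}$, so the amplified accept probability on $\ket{\phi}\otimes\ket{0^m} = \sum_j \beta_j \ket{v_j}$ reduces to the convex combination $\sum_j |\beta_j|^2 \, \Pr{\textnormal{accept}\mid \ket{v_j}}$. The orthogonality assumption $\ket{\phi} \perp \ket{w}$ forces the coefficient on the top Jordan vector $\ket{v_*}$ to vanish, and every remaining $\ket{v_j}$ satisfies $\cos^2(\theta_j) \leq r$, hence $\Pr{\textnormal{accept}\mid \ket{v_j}} \leq 2^{-l}$; convexity then yields the required bound. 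An identical argument using $\cos^2(\theta_j) \leq s$ for all $j$ handles the no-instance soundness, and since only phase-estimation ancillas are introduced the witness register is preserved exactly as required.
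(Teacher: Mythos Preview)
The paper does not give its own proof of this statement; it is quoted as a cited result from~\cite{JKKSSZ12}, introduced only by the remark that it follows from the witness-preserving error reduction of Marriott and Watrous~\cite{MW05}. Your proposal is precisely that Marriott--Watrous route (in its phase-estimation incarnation~\cite{NWZ09}), so the overall approach matches what the paper points to.

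There is one genuine gap. You assert that in a yes-instance the top eigenvector of $M$ ``coincides with $\ket{w}\otimes\ket{0^m}$ (up to global phase),'' and later rely on this when you write ``$\ket{\phi}\perp\ket{w}$ forces the coefficient on the top Jordan vector $\ket{v_*}$ to vanish.'' But the $\UQMA$ definition does not force the promised $\ket{w}$ to be an eigenvector of $M$: one can take $\ket{w}$ to be a nontrivial superposition of the top two eigenvectors and still satisfy the promise, in which case a state orthogonal to $\ket{w}$ may have nonzero overlap with $\ket{v_*}$. The fix is twofold. First, argue directly that $\lambda_2(M)\leq r$: if $\lambda_2>r$ then every unit vector in $\span\{\ket{v_1},\ket{v_2}\}$ has acceptance at least $\lambda_2>r$, yet this two-dimensional subspace necessarily meets the hyperplane $\ket{w}^{\perp}$ in a nonzero vector, contradicting the uniqueness clause. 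Second, declare the top eigenvector $\ket{v_*}$ (not the original $\ket{w}$) to be the unique witness of the amplified verifier; the theorem only demands that the witness \emph{length} be preserved, not the witness state itself. With these two adjustments your convexity argument goes through verbatim: any $\ket{\phi}\perp\ket{v_*}$ is supported on Jordan vectors with original acceptance at most $r$, hence amplified acceptance at most $2^{-l}$.
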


Furthermore, we can similarly define a variant of $\UQMA$ that corresponds to a promise gap, specifically $c-\max\{r,s\}$, exponentially approaching zero. This variant is denoted as $\PreciseUQMA$. Additionally, we introduce the notations $\UQMA_{1-\nu}$ and $\PreciseUQMA_{1-\nu}$ when the completeness error of these classes is $\nu$:
\begin{align*}
    \UQMA_{1-\nu}&\coloneqq \cup_{1-\nu-\max\{r(n),s(n)\} \geq 1/\poly(n)} \UQMA[1-\nu,r,s],\\
    \PreciseUQMA_{1-\nu}&\coloneqq \cup_{1-\nu-\max\{r(n),s(n)\} \geq \exp(-\poly(n))} \UQMA[1-\nu,r,s].
\end{align*}
In particular, if $\nu$ is exponentially small (negligible), we use the notations $\UQMA_{1-\negl}$ and $\PreciseUQMA_{1-\negl}$. 

\subsection{Tools from random walks on graphs}

In this subsection, we recapitulate several useful lemmas concerning random walks on an undirected path graph $G=(V,E)$ and their associated spectral properties. These lemmas have been employed in weighted circuit-to-Hamiltonian constructions, such as~\cite{BC18} and~\cite[Section 6]{BCNY19}. A random walk on such a graph corresponds to a reversible Markov chain $(\Omega,P,\pi)$ on the state space $\Omega=V$ with a stationary distribution $\pi$ and a transition matrix $P$. For detailed definitions of these terminologies, we refer to~\cite[Chapter 1]{LP17}. 

\begin{lemma}[Adapted from {\cite[Lemma 4]{BC18}}]
    \label{lemma:RW-on-path}
    Consider a $(T+1)$-node path graph $G=(V,E)$ and a strictly positive probability distribution $\pi$ on $V$.\footnote{A probability distribution $\pi$ is \textit{strictly positive} if all components $\pi_t$ are positive.} There exists a Markov chain $(V,P,\pi)$ with Metropolis transition probabilities $P$ that is reversible with respect to the given stationary distribution $\pi$ as below:     
    \begin{itemize}
        \item For $0 < t < T$, $P_{t,t+1}\coloneqq \frac{1}{4}\min\big\{1,\frac{\pi_{t+1}}{\pi_t}\big\}$, $P_{t+1,t}\coloneqq \frac{1}{4}\min\big\{ 1,\frac{\pi_{t-1}}{\pi_t} \big\}$, and $P_{t,t}\coloneqq 1-P_{t,t+1}-P_{t+1,t}$; 
        \item For $t=0$, $P_{0,1}\coloneqq \frac{1}{2} \min\big\{ 1,\frac{\pi_1}{\pi_0} \big\}$, $P_{0,-1}\coloneqq 0$, and $P_{0,0}\coloneqq 1-P_{0,1}$; 
        \item For $t=T$, $P_{T,T-1}\coloneqq \frac{1}{2}\min\big\{ 1,\frac{\pi_{T-1}}{\pi_T} \big\}$, $P_{T,T+1}\coloneqq 0$, and $P_{T,T}\coloneqq 1-P_{T,T-1}$. 
    \end{itemize}
    
    Furthermore, we have a corresponding stoquastic frustration-free\footnote{We say that a Hamiltonian is \textit{stoquastic} if all off-diagonal entries in the corresponding Hermitian matrix are real and non-positive. Additionally, we say that a Hamiltonian is \textit{frustration-free} if its ground energy is zero, and the ground states of the Hamiltonian $H=\sum_i H_i$ are also ground states of all local terms $H_i$.} $O(\log n)$-local Hamiltonian $H^{(T)}_{\path}\coloneqq \sum_{t=0}^T H_{\path}(t)$ such that the spectral gap $\Delta(H^{(T)}_{\path})$ is the same as the spectral gap $\Delta_P$.\footnote{Specifically, $\Delta(H^{(T)}_{\path})$ is the difference between the second smallest eigenvalue of $H^{(T)}_{\path}$ and the smallest eigenvalue, whereas $\Delta_P$ is the difference between the largest eigenvalue of $P$ and the second largest eigenvalue. } Here, the local term $H_{\path}(t)$ for any $0 \leq t \leq T$ is defined as: 
    \[H_{\path}(t) \coloneqq  (1-P_{t,t})\ket{t}\bra{t}
    - \tfrac{\sqrt{\pi_t}}{\sqrt{\pi_{t-1}}} P_{t,t-1} \ket{t}\bra{t-1}
    - \tfrac{\sqrt{\pi_{t-1}}}{\sqrt{\pi_t}} P_{t-1,t} \ket{t-1}\bra{t}\]
\end{lemma}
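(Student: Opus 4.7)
The plan is to prove the lemma in three stages: (i) verify that the Metropolis rule yields a well-defined reversible Markov chain with stationary distribution $\pi$, (ii) exhibit an explicit similarity transformation showing that $H_{\path}^{(T)}$ and the transition matrix $P$ have spectra that differ only by the map $\lambda \mapsto 1-\lambda$, and (iii) read off the stoquastic, frustration-free, and locality properties from the construction.

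For the first stage, I would check that for $0 \le t < T$ the detailed balance condition $\pi_t P_{t,t+1} = \pi_{t+1} P_{t+1,t}$ holds using the identity $\pi_t \min\{1,\pi_{t+1}/\pi_t\} = \min\{\pi_t,\pi_{t+1}\} = \pi_{t+1}\min\{1,\pi_t/\pi_{t+1}\}$, noting that the interior prefactor $1/4$ and the boundary prefactor $1/2$ both cancel consistently since $P_{0,-1} = P_{T,T+1} = 0$. Each row of $P$ sums to $1$ by construction of $P_{t,t}$, so $P$ is a genuine stochastic matrix, and detailed balance implies $\pi P = \pi$.

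The second and key stage is the spectral identification. Let $D := \diag(\sqrt{\pi_0},\sqrt{\pi_1},\ldots,\sqrt{\pi_T})$, which is invertible since $\pi$ is strictly positive, and set $\hat{P} := D P D^{-1}$, so that $\hat{P}_{s,t} = \sqrt{\pi_s/\pi_t}\,P_{s,t}$. Detailed balance immediately gives $\hat{P}_{s,t} = \hat{P}_{t,s}$, hence $\hat{P}$ is symmetric. A direct entrywise comparison with the stated local terms yields
\[
H_{\path}^{(T)} \;=\; I - \hat{P},
\]
where the diagonal matches because $\sum_t (1-P_{t,t})\ket{t}\bra{t} = I - \mathrm{diag}(P)$, and the off-diagonal entry of $H_{\path}^{(T)}$ at position $(t,t-1)$ equals $-\sqrt{\pi_t/\pi_{t-1}}\,P_{t,t-1} = -\hat{P}_{t,t-1}$ (and symmetrically for $(t-1,t)$). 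Since $\hat{P}$ is similar to $P$, they share eigenvalues; the top eigenvalue of $P$ is $1$ with eigenvector supported on $\pi$, corresponding to the zero eigenvalue of $H_{\path}^{(T)}$ with eigenvector $\sum_t \sqrt{\pi_t}\ket{t}$. The second-largest eigenvalue $1-\Delta_P$ of $P$ then maps to the second-smallest eigenvalue $\Delta_P$ of $H_{\path}^{(T)}$, giving $\Delta(H_{\path}^{(T)}) = \Delta_P$ as claimed.

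For the final stage, stoquasticity is immediate since every off-diagonal entry of $H_{\path}^{(T)}$ is $-\sqrt{\pi_t/\pi_{t-1}}\,P_{t,t-1} \le 0$. Frustration-freeness follows by verifying that $H_{\path}(t)\ket{\pi^{1/2}} = 0$ for each $t$ (using $(1-P_{t,t})\sqrt{\pi_t} = \sqrt{\pi_t}(P_{t,t-1}+P_{t,t+1})$ together with detailed balance to cancel the off-diagonal contributions), so the common zero eigenvector of the global Hamiltonian also annihilates each local term. Finally, since $T$ is polynomial in $n$, the path vertices can be encoded with $O(\log n)$ qubits and each $H_{\path}(t)$ acts nontrivially only on the basis vectors $\ket{t-1}$ and $\ket{t}$, so each local term is supported on $O(\log n)$ qubits in this encoding. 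I expect the main subtlety to be handling the boundary terms $t=0$ and $t=T$ uniformly — one needs to track that the $1/2$ prefactor (rather than $1/4$) is precisely what is required so that the identity $H_{\path}^{(T)} = I - \hat{P}$ continues to hold at the endpoints, and that $H_{\path}(0)$ and $H_{\path}(T)$ still annihilate $\ket{\pi^{1/2}}$.
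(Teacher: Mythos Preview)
Your proposal is correct and follows essentially the same route as the paper: verify detailed balance, form the symmetrized matrix $\hat P = D P D^{-1}$ with $D=\diag(\sqrt{\pi_t})$ (the paper writes this as $A=D^{1/2}PD^{-1/2}$ with $D=\diag(\pi_t)$), identify $H^{(T)}_{\path}=I-\hat P$, and read off the spectral gap equality plus stoquasticity and frustration-freeness. You supply more detail than the paper does---the explicit ground vector $\sum_t\sqrt{\pi_t}\ket{t}$, the per-term annihilation argument for frustration-freeness, and the $O(\log n)$-locality via binary encoding of the clock---but none of this departs from the paper's line of reasoning.
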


\begin{proof}
Our construction and the analysis closely parallel the proof of~\cite[Lemma 4]{BC18}. Through a direct calculation, we can establish that the matrix $P$ satisfies the condition of being a transition matrix: $\sum_{t=0}^T \pi_t P_{t,t'} = \pi_{t'}$. Additionally, the constructed Markov chain is reversible as the transition probabilities satisfy $\pi_t P_{t,t'} = \pi_{t'} P_{t',t}$ for all $0 \leq t,t' \leq T$.

According to the equivalent conditions for a reversible Markov chain (see~\cite{SJ89}), this construction gives rise to a symmetric matrix $A\coloneqq D^{1/2} P D^{-1/2}$, where $D\coloneqq {\rm diag}(\pi_0,\pi_1,\cdots,\pi_T)$. 
Since we obtain $A$ by applying a similarity transformation to $P$, all eigenvalues of $A$ coincide with those of $P$. Consequently, we arrive at $H^{(T)}_{\path}=I-A$ where $I$ is a $(T+1)\times (T+1)$ identity matrix. Notably, $H^{(T)}_{\path}$ is both stoquastic and frustration-free because $P$ is entry-wise non-negative and the largest eigenvalue of $P$ is $1$.
\end{proof}

Then we will provide a lower bound for the spectral gap $\Delta_P$ based on the \textit{conductance} of the underlying graph $G=(V,E)$, which characterizes the graph's connectivity. In particular, for any nonempty subset $S \subseteq V$, we define the conductance as $\Phi_{G}(S)\coloneqq \frac{1}{\min\{\pi(S),\pi(\bar{S})\}} \sum_{x \in S, y\in \bar{S}} \pi_x P_{x,y}$, and the conductance of the graph $G$ is denoted by $\Phi_{G}\coloneqq \min_{S \subseteq V} \Phi(S)$. 
The well-known Cheeger's inequality demonstrates a spectral gap lower bound as follows. 
\begin{lemma}[Cheeger's inequality, adapted from~{\cite[Lemma 3.3]{SJ89}}]
    \label{lemma:Cheeger-inequality}
    For a reversible Markov chain with the underlying (weighted) graph $G$, the spectral gap $\Delta_P$ of the transition matrix $P$ satisfies
    $\tfrac{1}{2} \Phi_G^2 \leq \Delta_P$. 
\end{lemma}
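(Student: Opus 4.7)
The plan is to follow the classical two-step proof of Cheeger's inequality, adapted to the reversible Markov-chain setting. First I would recast the spectral gap via a Dirichlet form. Since the symmetric matrix $A := D^{1/2} P D^{-1/2}$ is isospectral with $P$ and the top eigenvalue of $P$ equals one (with eigenvector proportional to $\pi^{1/2}$), a standard Rayleigh-quotient argument yields
\[
\Delta_P \;=\; \inf_{\substack{f \colon V \to \bbR \\ \mathrm{Var}_\pi(f) > 0}} \frac{\mathcal{E}(f,f)}{\mathrm{Var}_\pi(f)}, \qquad \mathcal{E}(f,f) := \tfrac{1}{2} \sum_{x,y} \pi_x P_{x,y} \bigl(f(x) - f(y)\bigr)^2,
\]
with $\mathrm{Var}_\pi(f) := \sum_x \pi_x (f(x) - \bbE_\pi f)^2$.

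Next, I would take $f$ to be a second eigenfunction and shift by a $\pi$-median $m$, writing $g := f - m = g_+ - g_-$ with $g_\pm \geq 0$, disjoint supports, and $\pi(\supp{g_\pm}) \leq 1/2$ by the choice of median. The shift preserves $\mathcal{E}$, and expanding $(g(x) - g(y))^2$ via case analysis on the signs of $g(x), g(y)$ shows that the cross contributions are non-negative, giving $\mathcal{E}(g_+, g_+) + \mathcal{E}(g_-, g_-) \leq \mathcal{E}(g,g) = \mathcal{E}(f,f)$. Combined with $\|g_+\|_\pi^2 + \|g_-\|_\pi^2 = \|g\|_\pi^2 \geq \mathrm{Var}_\pi(f)$ and the mediant inequality, at least one of $h \in \{g_+, g_-\}$ satisfies $\mathcal{E}(h,h)/\|h\|_\pi^2 \leq \Delta_P$. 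Hence it suffices to prove that every non-negative $h$ with $\pi(\supp{h}) \leq 1/2$ obeys $\mathcal{E}(h,h) \geq \tfrac{1}{2} \Phi_G^2 \|h\|_\pi^2$.

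The core of the argument is a co-area estimate combined with Cauchy--Schwarz. Setting $B_t := \{x \in V : h(x)^2 > t\}$ and using $h(x)^2 = \int_0^\infty \mathbf{1}[x \in B_t]\, dt$, I would swap sum and integral to obtain
\[
\sum_{x,y} \pi_x P_{x,y} \bigl|h(x)^2 - h(y)^2\bigr| \;=\; 2 \int_0^\infty Q(B_t, \bar{B_t})\, dt \;\geq\; 2 \Phi_G \int_0^\infty \pi(B_t)\, dt \;=\; 2 \Phi_G \|h\|_\pi^2,
\]
where $Q(A, \bar{A}) := \sum_{x \in A,\, y \notin A} \pi_x P_{x,y}$ and I used $\pi(B_t) \leq \pi(\supp{h}) \leq 1/2$. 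Factoring $|h(x)^2 - h(y)^2| = |h(x) - h(y)| \cdot (h(x) + h(y))$ and applying Cauchy--Schwarz bounds the left-hand side above by $\sqrt{2 \mathcal{E}(h,h)} \cdot \sqrt{4 \|h\|_\pi^2}$; the second factor uses $(a+b)^2 \leq 2(a^2 + b^2)$ together with $\sum_y P_{x,y} = 1$ and the detailed-balance identity $\sum_x \pi_x P_{x,y} = \pi_y$. Squaring and chaining with the bound from the previous paragraph delivers $\Delta_P \geq \tfrac{1}{2} \Phi_G^2$.

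The main obstacle is the sign-decomposition step: one has to justify by case analysis that $\mathcal{E}(g_+, g_+) + \mathcal{E}(g_-, g_-) \leq \mathcal{E}(g,g)$, since this is what legitimately reduces the problem to a non-negative function supported on at most half the $\pi$-measure. The co-area integration is a well-packaged device, but tracking the constant $\tfrac{1}{2}$ is delicate --- it rests on first linearizing the jump as $|h(x)^2 - h(y)^2|$ (rather than squaring $h(x) - h(y)$) and only then invoking Cauchy--Schwarz.
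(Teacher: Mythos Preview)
The paper does not prove this lemma; it is quoted as a black box from Sinclair--Jerrum~\cite{SJ89} and used later in the proof of \Cref{lemma:Hamiltonian-from-UQMAverifier}. So there is no ``paper's own proof'' to compare against.

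Your proposal is the standard proof of the easy (lower-bound) direction of Cheeger's inequality for reversible chains, and the outline is correct. The variational formula for $\Delta_P$ via the Dirichlet form, the median shift, the $g_+/g_-$ decomposition with the cross-term nonnegativity, the mediant reduction to a nonnegative $h$ supported on $\pi$-mass at most $1/2$, the level-set/co-area identity, and the Cauchy--Schwarz step with $(a+b)^2 \le 2(a^2+b^2)$ all go through as you describe, and they land exactly on the constant $\tfrac{1}{2}$. One small point worth making explicit when you write it up: the infimum in the variational formula is attained by a second eigenfunction because the state space is finite, which is what lets you pass from ``$\inf$'' to ``there exists an $f$ with equality'' before doing the median shift.
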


\subsection{Space-efficient Hamiltonian simulation}

We adopt the following definition of \textit{efficient encodings} of matrices adapted from~\cite{FL18}:
\begin{definition}[Efficient encodings, adapted from~{\cite[Definition 10]{FL18}}]
    Let $M$ be a $2^{k(n)}\times 2^{k(n)}$ matrix, and let $\calA$ be a classical algorithm (e.g., a Turing machine) specified using $n$ bits.
    We say that $\calA$ provides an efficient encoding of $M$ if, given an input $i \in \binset^{k(n)}$, $\calA$ outputs the indices and contents of the nonzero entries of the $i$-th row, using at most $\poly(n)$ time and $O(k)$ workspace (excluding the output size).
    Consequently, $M$ has at most $\poly(n)$ nonzero entries in each row.
\end{definition}

In the following scenario, we will specify a matrix $M$ in the input by providing an efficient encoding of $M$, where the size of the encoding is the input size denoted by $n$.

\vspace{1em}
We now proceed with a Hamiltonian simulation procedure achieving exponential precision with only polynomial space. The techniques come from \cite{BCCKS14,BCCKS15,BCK15}, with an observation on the space complexity made in~\cite{FL18}. 
\begin{lemma}[Adapted from~{\cite[Theorem 12]{FL18}}]
    \label{lemma:space-efficient-Hamiltonian-simulation}
    Consider the size-$n$ efficient encoding of a $2^{k(n)} \times 2^{k(n)}$ Hermitian matrix $H$ as an input, and treat this matrix $H$ as a Hamiltonian. Then the time evolution $\exp(-iHt)$ can be simulated using $\poly(n, k, \|H\|_{\infty}, t, \log(1/\epsilon))$ operations and $O(k + \log(t/\epsilon))$ space, where $\epsilon$ is the error in operator norm. 
\end{lemma}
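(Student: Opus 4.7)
The plan is to follow the strategy of Fefferman and Lin (Theorem 12 in~\cite{FL18}), adapting the truncated-Taylor-series Hamiltonian simulation of Berry, Childs, Cleve, Kothari, and Somma~\cite{BCCKS14,BCCKS15} together with the quantum walk decomposition of~\cite{BCK15}. The crucial feature we must preserve throughout is that the workspace never exceeds $O(k + \log(t/\epsilon))$ qubits, which forces every intermediate component---queries to $H$, the linear-combination ancillas, and the amplitude-amplification reflections---to be implemented space-efficiently.

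First, I would split the total evolution time into $r = O(\|H\|_{\infty} t)$ segments of equal length, chosen so that each segment Hamiltonian has $\ell_1$-norm at most a fixed constant. Within one segment, I would approximate $\exp(-iH t/r)$ by its truncated Taylor series of order $K = O(\log(r/\epsilon)/\log\log(r/\epsilon))$, which is enough for per-segment error $\epsilon/r$. Second, I would realize this truncated series as a Linear Combination of Unitaries (LCU) over products of sparse decompositions of $H$; the sparse decompositions are extracted from the efficient encoding, which lists the nonzero entries of any given row in $\poly(n)$ time and $O(k)$ scratch space. The LCU is executed by the standard $\mathsf{PREPARE}$/$\mathsf{SELECT}$ pair, where $\mathsf{PREPARE}$ loads classically precomputed Taylor coefficients into an $O(\log K)$-qubit ancilla register and $\mathsf{SELECT}$ applies the corresponding product of sparse elementary unitaries conditioned on that register. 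Third, I would apply oblivious amplitude amplification from~\cite{BCCKS15} to turn the LCU block into a deterministic approximation of the segment evolution, and finally concatenate the $r$ segments, whose errors add to give total error $\epsilon$ and total operation count $\poly(n, k, \|H\|_{\infty}, t, \log(1/\epsilon))$.

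The main obstacle is establishing the $O(k + \log(t/\epsilon))$ space bound, since textbook presentations of LCU-based simulation tend to let the ancilla size scale with the number of time segments or with the per-row sparsity. The resolution, due to Fefferman and Lin, is threefold: the $r$ segments are executed sequentially on a shared workspace, so $r$ contributes to the time cost but not the space cost; the $\mathsf{PREPARE}$/$\mathsf{SELECT}$ ancilla register uses only $O(\log K) = O(\log\log(t/\epsilon))$ qubits; and every query to the efficient encoding of $H$ can be performed and subsequently uncomputed within $O(k)$ scratch space by the definition of efficient encoding. The only remaining subtlety is verifying that the reflection operators inside oblivious amplitude amplification and the multi-controlled selectors inside $\mathsf{SELECT}$ themselves respect this space budget; for these components I would invoke the careful bookkeeping of Fefferman--Lin rather than rederiving it, yielding the claimed $O(k + \log(t/\epsilon))$ space and $\poly(n, k, \|H\|_{\infty}, t, \log(1/\epsilon))$ operation count.
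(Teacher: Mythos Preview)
The paper does not supply its own proof of this lemma: it is stated in the preliminaries as a direct adaptation of Theorem~12 in~\cite{FL18}, with only the sentence ``The techniques come from \cite{BCCKS14,BCCKS15,BCK15}, with an observation on the space complexity made in~\cite{FL18}'' by way of justification. Your proposal is therefore not competing against any argument in the paper; it is a faithful sketch of exactly the Fefferman--Lin proof that the paper cites, and the overall strategy (segment the evolution, truncate the Taylor series, realize each segment via LCU on a sparse decomposition, finish with oblivious amplitude amplification, and reuse workspace across segments) is correct.

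One small imprecision worth flagging: the $\mathsf{PREPARE}/\mathsf{SELECT}$ ancilla in the BCCKS construction is not $O(\log K)$ qubits but rather $O(K \log L)$ qubits, since one needs $K$ index registers (one per factor in the order-$K$ product) each addressing the $L$ terms of the sparse decomposition of $H$. With $K = O(\log(r/\epsilon)/\log\log(r/\epsilon))$ and $L = \poly(n)$ this still lands inside the stated $O(k + \log(t/\epsilon))$ budget once the remaining factors are absorbed, so the conclusion survives, but the intermediate accounting you wrote is off.
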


\section{Basic state-synthesizing complexity classes}

In this section, we will define state-synthesizing complexity classes involved in this paper. 

\subsection{State-synthesizing complexity classes with bounded time or space}

We begin with two crucial notations on circuit families.
A family of quantum circuits $\{Q_x\}_{x\in \calL}$ associated with the language $\calL \subseteq \binset^*$ is \textit{uniformly generated} if there exists a deterministic polynomial-time Turing machine (and thus polynomially space-bounded) that on input $x \in \calL$ outputs the description of $Q_x$. 
Likewise, a family of quantum circuits $\{Q_x\}_{x\in \calL}$ is \textit{space-uniformly generated} if $Q_x$ acts on polynomially many qubits and there exists a deterministic polynomial-space Turing machine that on input $x \in \calL$ outputs the description of $Q_x$. 
Then we move on to the class that captures efficiently preparable quantum state families. 

\begin{definition}[$\stateBQP_{\delta}{[\gamma]}$]
\label{def:stateBQP}
A family of quantum states $\{\ket{\psi_x}\}_{x\in\calL}$ is in $\stateBQP_{\delta}[\gamma]$ if each $\ket{\psi_x}$ is an $n$-qubit state where $n\coloneqq |x|$, and there exists a uniformly generated quantum circuit family $\{Q_x\}_{x \in \calL}$, which takes no inputs, such that for all $x \in \calL$: 
\begin{itemize}
    \item The circuit $Q_x$ outputs a (in general mixed) quantum state $\rho_x$ satisfying $\td(\rho_x,\psi_x)\leq \delta(n)$ if $Q_x$ succeeds. Here, we define the success of circuit $Q_x$ as the measurement outcome of the designated output qubit being $1$. 
    \item The success probability of $Q_x$ is at least $\gamma$. 
\end{itemize}
\end{definition}

For convenience, we define $\stateBQP_{\delta}\coloneqq \stateBQP_{\delta}[2/3]$. Similarly, we define \stateBQP{} with exponentially small success probability in \Cref{def:statePreciseBQP}.
\begin{definition}[$\statePreciseBQP_{\delta}$]
\label{def:statePreciseBQP}
$\statePreciseBQP_{\delta} \coloneqq  \cup_{\gamma\geq \exp(-\poly(n))} \stateBQP_{\delta}[\gamma]$.
\end{definition}
Afterwards, we consider a state-synthesizing counterpart of unitary quantum polynomial space $\UnitaryBQPSPACE$ in~\cite{FL18}. In particular, we define \statePSPACE{} in terms of \textit{unitary quantum computation}, denoted as $\stateUPSPACE$. 

\begin{definition}[$\stateUPSPACE_{\delta}{[\gamma]}$]
\label{def:statePSPACE}
A family of quantum states $\{\ket{\psi_x}\}_{x\in\calL}$ is in $\stateUPSPACE_{\delta}[\gamma]$ if each $\ket{\psi_x}$ is an $n$-qubit state where $n\coloneqq |x|$, and there exists a space-uniform family of unitary quantum circuits $\{Q_x\}_{x \in \calL}$, which takes no inputs, such that for all $x \in \calL$:
\begin{itemize}
    \item The circuit $Q_x$ outputs a (in general mixed) quantum state $\rho_x$ satisfying $\td(\rho_x,\psi_x)\leq \delta(n)$ if $Q_x$ succeeds. Here, we define the success of circuit $Q_x$ as the measurement outcome of the designated output qubit being $1$. 
    \item The success probability of $Q_x$ is at least $\gamma$. 
\end{itemize}
\end{definition}

We denote $\stateUPSPACE_{\delta} \coloneqq  \stateUPSPACE_{\delta}[2/3]$. 
It is worth noting that there are three differences between our definition  $\stateUPSPACE$ and \statePSPACE{} as defined in~\cite{RY22,MY23}: 1) our definition only admits unitary quantum computation whereas \statePSPACE{} also permits general quantum circuits (see \Cref{remark:state-synthesizing-with-unitary}); 2) our definition does not assume perfect completeness, even though this property is always achievable for our definition with a worsening distance parameter; and 3) our definition merely deals with pure state families, while \statePSPACE{} in~\cite{MY23} also accommodates mixed state families. 

\begin{remark}[Synthesizing states by general quantum circuits]
    \label{remark:state-synthesizing-with-unitary}
    General quantum circuits allow not only unitary quantum gates but also intermediate measurements and resetting qubits. 
    Because of the principle of deferred measurement, admitting intermediate measurements does not make the time-efficient class more powerful. Recent developments on eliminating intermediate measurements in quantum logspace~\cite{FR21,GRZ21,GR22} improve this to the space-efficient scenario. Nevertheless, these new techniques in~\cite{FR21,GRZ21} do not automatically extend to state-synthesizing classes since they do not (even approximately) preserve the resulting state, and another result~\cite{GR22} only apply to a subclass of general quantum computation. 
\end{remark}

\paragraph{\stateBQP{} and $\stateUPSPACE$ achieve perfect completeness.}
It is worth noting that the family of states constructed by the simulator used in quantum statistical zero-knowledge~\cite{Wat02} is an instance of \stateBQP{}. Interestingly, one can assume that the corresponding simulator achieves perfect completeness with a worsening distance parameter $\delta'$.\footnote{To be specific, this fact is demonstrated in the \QSZK{}-hardness proof of the Quantum State Distinguishability Problem, as presented in Theorem 6 in~\cite{Wat02}.\label{footnote:QSZK-simulator}} To achieve this, one can simply replace the designated output qubit of the \QSZK{} simulator, or a \stateBQP{} circuit in general, with a qubit in state $\ket{1}$. This technique applies to any state-synthesizing complexity class with bounded time or space, including \stateBQP{} and $\stateUPSPACE$:

\begin{proposition}
    \label{prop:stateBQP-perfect-completeness}
    For any $0 \leq \delta(n),\gamma(n) \leq 1$, choosing $\delta'\coloneqq \gamma \delta+1-\gamma$, then 
    \[\stateBQP_{\delta}[\gamma]\subseteq \stateBQP_{\delta'}[1] \text{ and } \stateUPSPACE_{\delta}[\gamma] \subseteq \stateUPSPACE_{\delta'}[1].\] 
\end{proposition}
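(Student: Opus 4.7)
The strategy is to suppress the measurement of the original output qubit altogether and let a freshly prepared $\ket{1}$ ancilla play that role, so that the circuit trivially succeeds and the only cost is a worse distance parameter that absorbs the lost gap.

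Given a circuit family $\{Q_x\}$ witnessing $\{\ket{\psi_x}\} \in \stateBQP_{\delta}[\gamma]$ with designated output qubit $o$ and designated $n$-qubit output register $R$, I would define $Q'_x$ to act on the qubits of $Q_x$ together with one fresh ancilla $a$ initialised to $\ket{0}$. The circuit $Q'_x$ applies $Q_x$ followed by an $X$ gate on $a$, declares $a$ to be its new designated output qubit, and keeps $R$ as its designated output register. Since $a$ is deterministically in $\ket{1}$ at the end, the success probability of $Q'_x$ is exactly $1$.

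For the distance analysis, let $p \geq \gamma$ denote the success probability of $Q_x$, let $\rho_x$ denote the post-selected state on $R$ conditioned on the success outcome at $o$ (so $\td(\rho_x,\psi_x) \leq \delta$), and let $\sigma_x$ denote the state on $R$ conditioned on the failure outcome (well-defined when $p < 1$; the case $p = 1$ degenerates trivially). Tracing $o$ out of the global output of $Q_x$ kills the off-diagonal blocks on $o$ and yields the marginal on $R$
\[\tau_x = p\,\rho_x + (1-p)\,\sigma_x,\]
which is exactly the resulting state of $Q'_x$ on $R$. By convexity of the trace distance together with the trivial bound $\td(\sigma_x,\psi_x) \leq 1$,
\[\td(\tau_x,\psi_x) \leq p\,\td(\rho_x,\psi_x) + (1-p)\,\td(\sigma_x,\psi_x) \leq p\delta + (1-p) = 1 - p(1-\delta).\]
Since $\delta \leq 1$, this expression is non-increasing in $p$, so the worst case $p = \gamma$ gives $\td(\tau_x,\psi_x) \leq 1 - \gamma(1-\delta) = \gamma\delta + 1 - \gamma = \delta'$, establishing the first containment.

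The same construction gives the second containment: appending one ancilla and one $X$ gate preserves space-uniformity, the polynomial qubit count, and (since $X$ is unitary) the unitary structure of the family, so $Q'_x$ is a legal $\stateUPSPACE$ circuit whenever $Q_x$ is. There is no real obstacle; the only bookkeeping subtlety is that the original output qubit $o$ of $Q_x$ is now simply part of the ancillary register traced out when extracting the reduced state on $R$, which leaves the dimension of $R$ and the computation of $\tau_x$ above unchanged.
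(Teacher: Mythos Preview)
Your proof is correct and follows essentially the same approach as the paper: replace the designated output qubit by a fresh qubit deterministically set to $\ket{1}$, observe that the resulting state on the output register is the convex combination $p\rho_x + (1-p)\sigma_x$, and bound its distance to $\psi_x$ via convexity of the trace distance. Your version is in fact slightly more careful than the paper's, which silently identifies the actual success probability with the lower bound $\gamma$; you correctly distinguish $p \geq \gamma$ from $\gamma$ and add the monotonicity step $1 - p(1-\delta) \leq 1 - \gamma(1-\delta)$ to close the argument.
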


Moreover, we note that achieving perfect completeness has little effect on the distance parameter when combined with the error reduction techniques employed by the aforementioned classes. Specifically, the new distance parameter $\delta'$ can be made \textit{exponentially close} to $\delta$. 

\begin{proof}[Proof of \Cref{prop:stateBQP-perfect-completeness}]
Let $\rho$ be the resulting state upon acceptance with respect to either a \stateBQP{} circuit or a $\stateUPSPACE$ circuit, denoted as $Q_x$, and similarly let $\rho_*$ be the resulting state upon rejection. 
Also, let $\hat{\rho}$ be the resulting state when we replace the output qubit with a qubit in state $\ket{1}$. 
Let $\gamma'$ be the exact success probability of $Q_x$. Recall that $\ket{\psi_x}$ is the target state that $Q_x$ aims to approximate, then we obtain: 
\begin{align*}
\td(\ket{\psi_x}\bra{\psi_x},\hat{\rho}) &= 
\td(\ket{\psi_x}\bra{\psi_x},\gamma'\cdot\rho+(1-\gamma')\cdot \rho_*)\\
&\leq \gamma'\cdot \td(\ket{\psi_x}\bra{\psi_x},\rho) + (1-\gamma')\cdot \td(\ket{\psi_x}\bra{\psi_x},\rho_*)\\
&= \gamma\cdot \td(\ket{\psi_x}\bra{\psi_x},\rho)+ (\gamma'-\gamma)\cdot \td(\ket{\psi_x}\bra{\psi_x},\rho)  + (1-\gamma')\cdot \td(\ket{\psi_x}\bra{\psi_x},\rho_*)\\
&\leq \gamma\delta+1-\gamma. 
\end{align*}
Here, the second line follows from the convexity of the trace distance, and the last line holds because $\td(\psi_x,\rho) \leq \delta(n)$, with the trace distance bounded by $1$. 
\end{proof}

\subsection{Quantum Merlin-Arthur proof systems for synthesizing quantum states}
\label{subsec:stateQMA-and-more}

We will now discuss \stateQMA{}, which is a state-synthesizing counterpart of the class \NP{}. Moreover, it is a natural subclass of \stateQIP{}, as defined~\cite{RY22}, since we merely admit one message from the prover to the verifier. It is worth noting that quantum state-synthesizing complexity classes can be viewed as quantum analogs of classical function classes, such as \textsf{TFNP},\footnote{The class \textsf{TFNP} consists of function problems of the following form: given an input $x$ and a polynomial predicate $F(x,y)$, output any $y$ such that $F(x,y)$ holds, with the promise that such a $y$ exists.} which do not contain \textit{no} instances.

\begin{definition}[$\stateQMA_{\delta}{[c,s]}$]
    \label{def:stateQMA}
    Consider efficiently computable functions $c(n),s(n),\delta(n)$ satisfying $c(n)-s(n)\geq 1/\poly(n)$, $0 \leq s(n) < c(n) \leq 1$, and $0 \leq \delta(n) \leq 1$. 
    A family of quantum states $\{\ket{\psi_x}\}_{x \in \calL}$ is in $\stateQMA_{\delta}[c,s]$ if each $\ket{\psi_x}$ is an $n$-qubit state where $n\coloneqq |x|$, and there is a uniformly generated polynomial-size unitary quantum circuits family $\{V_x\}_{x \in \calL}$ acting on $m+k$ qubits, where $m$ is the number of workspace qubits and $k$ is the number of ancillary qubits where both $m(n)$ and $k(n)$ are polynomials of $n$, such that the following conditions hold: 
    \begin{itemize}
    \item \textbf{\emph{Completeness.}} There is an $m$-qubit state $\ket{w}$ such that 
    $\Pr{V_x \text{ accepts } \ket{w}} \geq c(n)$. 
    \item \textbf{\emph{Soundness.}} For any $m$-qubit state $\ket{w}$ such that $\td(\rho_{x,w}, \ket{\psi_x}\bra{\psi_x}) \geq \delta(n)$, we have $\Pr{V_x \text{ accepts } \ket{w}} \leq s(n)$, where $\rho_{x,w}$ is the resulting state of $V_x$ on the input $\ket{w}\ket{0^k}$ conditioned on the measurement outcome of the output qubit being $1$.  Here, we allow the verifier to trace out some qubits in the resulting state for convenience. 
\end{itemize}
In addition, we use the notation $\stateQMA_{\delta}[l,c,s]$ to represent \stateQMA{} with witness states of bounded length, where $l(n)$ is the number of qubits employed by the witness states. If $l(n)$ is a polynomial function of $n\coloneqq |x|$, we will omit the parameter $l$.
\end{definition}

We define $\stateQMA_{\delta}\coloneqq \stateQMA_{\delta}{[2/3,1/3]}$. Additionally, we also define the variant of \stateQMA{} with exponentially small promise gap\footnote{To be specific, the gap between the acceptance probability in completeness and soundness is inverse-exponential.} and the variant of \stateQMA{} with a logarithmic-length witness in \Cref{def:statePreciseQMA}. 

\begin{definition}[\stateQMA{} with tiny promise gap and short-length witness]
    \label{def:statePreciseQMA}
    \begin{align*}
    \statePreciseQMA_{\delta} &\coloneqq  \cup_{c-s\geq \exp(-\poly(n))} \stateQMA_{\delta}[c,s].\\ 
    \stateQMA_{\delta}[\log,c,s]&\coloneqq  \cup_{l(n) \leq O(\log{n})} \stateQMA_{\delta}[l,c,s].
    \end{align*}
\end{definition}

Furthermore, we need to define $\stateQMAPSPACEoff$, a polynomial-space-bounded variant of \stateQMA{}, which corresponds to state families that are preparable by a polynomial-space-uniform unitary quantum circuit family $\{V_x\}_{x\in\calL}$. In particular, we utilize circuit families outlined in \Cref{def:statePSPACE} as verification circuits, where an $m(n)$-qubit witness state and $k(n)$ ancillary qubits (instead of the all-zero state) serve as input, with the witness length $m(n)$ polynomial in $n$. It is noteworthy that $\stateQMAPSPACEoff$ is not a state-synthesizing counterpart of \NPSPACE{}, as explained in \Cref{remark:online-vs-offline}. 

\begin{remark}[Online vs.\ offline access to quantum witness state]
    \label{remark:online-vs-offline}
    Online and offline access to a classical witness in the class \NP{} are equivalent, but the models significantly differs in space-bounded classical computation and so do quantum counterparts.\footnote{See Section 5.3.1 in~\cite{Goldreich2008} for elaborations on space-bounded classical computation. Regarding quantum scenarios, recent advancements in~\cite{GR23} signifies that quantum analogues of \NPSPACE{} with online access to an exponential-length witness are more powerful than the offline-access variant $\mathsf{QMAPSPACE}$ defined in \cite{FR21}, implying a quantum analogue of Savitch's theorem is unlikely to hold.} The class \NPSPACE{} has online access to an \textit{exponential-length} classical witness, whereas $\stateQMAPSPACEoff$ has offline access to a \textit{polynomial-length} quantum witness, suggesting $\stateQMAPSPACEoff$ is not a state-synthesizing analogue of \NPSPACE{}. 
\end{remark}

By restricting witnesses for both completeness and soundness conditions in \Cref{def:stateQMA} to be classical (i.e., binary strings), we obtain the class \stateQCMA{}, and similarly for the precise variant \statePreciseQCMA{} that has an exponentially small promise gap. 
Likewise, we define $\stateQCMA_{\delta}\coloneqq \stateQCMA_{\delta}[2/3,1/3].$ 

\paragraph{Relation between \stateBQP{} and \stateQMA{}.}
Finally, we remark that while $\stateBQP_{\delta}[\gamma]$ is trivially contained in $\stateQMA_{\delta}[\gamma,\gamma]$ by letting the verifier simply trace out the witness state, proving inclusion in $\stateQMA_{\delta}[\gamma,\gamma']$ with $\gamma(n)-\gamma'(n) \geq 1/\poly(n)$ is more complicated.
In particular, we measure the input quantum state of $Q_x$ (also viewed as a quantum witness state for a \stateQMA{} verifier) on the computational basis, and reject if the outcome is not an all-zero string. We now proceed with a formal statement with a proof:

\begin{proposition}
    \label{prop:stateBQP-in-stateQMA}
    For any $1/\poly(n) \leq \gamma(n) \leq 1$ and $0 \leq \delta \leq 1$,
    $\stateBQP_{\delta}[\gamma] \subseteq \stateQMA_{\delta}[\gamma,\gamma']$ for some $\gamma'(n) > 0$ such that $\gamma(n) - \gamma'(n) \geq 1/\poly(n)$.
\end{proposition}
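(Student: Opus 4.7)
The idea is to turn the \stateBQP{} circuit $Q_x$ into a \stateQMA{} verifier $V_x$ that uses the witness only as a consistency flag, so that the resulting state upon acceptance is always (a purification of) the output of $Q_x$. Concretely, introduce a witness register $\sfW$ of, say, $m(n)=1$ qubit, an ancilla register $\sfA$ large enough to run $Q_x$, and a designated output qubit $\sfO$. The verifier $V_x$ (i) runs $Q_x$ on $\sfA$, producing an internal output qubit $o'$ of $Q_x$ together with the remaining qubits of $\sfA$, and (ii) computes $\sfO := o' \wedge (\sfW = 0^m)$ with a single multi-controlled Toffoli. This is manifestly a polynomial-size unitary quantum circuit family that is uniformly generated whenever $\{Q_x\}_{x\in\calL}$ is.

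For the completeness analysis, plug in the honest witness $\ket{0^m}$: the AND on $\sfO$ reduces to $o'$, so $V_x$ accepts with probability exactly $\gamma(n)$, and conditioned on acceptance the reduced state on $\sfA$ (after tracing out $\sfW$ and $o'$, which the definition of \stateQMA{} permits) equals the \stateBQP{} resulting state $\rho_x$, which is $\delta$-close to $\ket{\psi_x}$ by hypothesis. For the soundness analysis, write an arbitrary witness as $\ket{w}=\alpha_0\ket{0^m}+\sum_{k\neq 0^m}\alpha_k\ket{k}$. Because the AND on $\sfO$ is zero on every computational basis component with $k\neq 0^m$, the only contribution to $\sfO=1$ comes from the $\ket{0^m}$ component, so $\Pr{V_x\text{ accepts }\ket{w}} = |\alpha_0|^2\,\gamma(n)$, and whenever $\alpha_0\neq 0$ the post-measurement state on $\sfA$ (after tracing $\sfW$ and $o'$) again coincides exactly with $\rho_x$, hence is $\delta$-close to $\ket{\psi_x}$.

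The key observation, then, is that for \emph{every} witness the resulting state is $\delta$-close to the target, so there is no $\ket{w}$ with $\td(\rho_{x,w},\psi_x)\geq \delta$ and the soundness condition in \Cref{def:stateQMA} is vacuously satisfied for any choice of $s$. It therefore suffices to pick any efficiently computable $\gamma'(n)>0$ with $\gamma(n)-\gamma'(n)\geq 1/\poly(n)$; for instance $\gamma'(n):=\gamma(n)/2$ works, since $\gamma(n)\geq 1/\poly(n)$ gives both $\gamma'(n)>0$ and a polynomial gap. Combining the completeness bound $c=\gamma$ with this choice of $s=\gamma'$ yields $\{\ket{\psi_x}\}_{x\in\calL}\in\stateQMA_\delta[\gamma,\gamma']$ with the same distance parameter $\delta$.

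The only point that requires a little care is implementing step (ii) unitarily and verifying that tracing out $\sfW$ and $o'$ does not alter the reduced state on the rest of $\sfA$; this is immediate because the multi-controlled Toffoli only \emph{copies} the AND into the fresh qubit $\sfO$ and never touches the data qubits of $\sfA$, so after post-selecting on $\sfO=1$ both $\sfW$ and $o'$ are in definite computational basis states and factor out cleanly. No nontrivial obstacle remains.
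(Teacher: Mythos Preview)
Your proposal is correct and takes essentially the same approach as the paper: both run the \stateBQP{} circuit and accept only when the witness register is (or projects onto) $\ket{0^m}$, so that the post-selected resulting state is always the \stateBQP{} output $\rho_x$ and soundness becomes vacuous. The only differences are cosmetic---the paper measures the witness register directly while you compute a coherent AND into a fresh output qubit, and the paper uses an $m(n)$-qubit witness rather than a single qubit---and a minor slip where you write ``exactly $\gamma(n)$'' when the \stateBQP{} guarantee is only ``at least $\gamma(n)$,'' which does not affect the argument.
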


\begin{proof}
For any given $\stateBQP_{\delta}[\gamma]$ circuit $C_x$ with $m(n)$ workspace qubits, we construct a new $\stateQMA{}$ verification circuit $V_x$. We first introduce a new register $\sfW$ for the $m(n)$-qubit witness state $\ket{w}$, and the remaining is the given \stateBQP{} circuit. Then the construction of $V_x$ follows from \Cref{algo:stateBQP-in-stateQMA}. 

	\begin{algorithm}[ht!]
        1. Perform the \stateBQP{} circuit $C_x$ without the final measurement\; 
        2. Measure all qubits in $\sfW$ (the witness state $\ket{w}$) on the computational basis. \\ ~~~~Reject if the measurement outcome is not $0^m$\;
        3. Measure the designated output qubit in the \stateBQP{} circuit. \\ ~~~~Accept and produce the resulting state if the measurement outcome is $1$, otherwise reject. 
        \BlankLine
		\caption{\stateQMA{} verification circuit $V_x$}
		\label[algorithm]{algo:stateBQP-in-stateQMA}
	\end{algorithm}

It suffices to analyze this protocol. For the completeness condition, it is evident that the optimal witness is $0^m$. Then guaranteed by the given \stateBQP{} circuit, we obtain $\Pr{V_x \text{ accepts } 0^m}\geq\gamma$. For the soundness condition, since any witness $\ket{w}$ that is orthogonal to $\ket{0^m}$ will be simply rejected, then we have derived that 
    \[\Pr{V_x \text{ accepts } \ket{w}}
    =|\innerprod{w}{0^m}|^2\cdot \Pr{C_x \text{ accepts}} \geq \gamma |\innerprod{w}{0^m}|^2.\]
Therefore, for any witness state $\ket{w}$, we conclude that $\Pr{V_x \text{ accepts } \ket{w}} > 0$, and the resulting state $\rho_{x,w}$ satisfies $\td(\rho_{x,w},\ket{\psi_x}\bra{\psi_x})=\td(\rho_{x,0^m},\ket{\psi_x}\bra{\psi_x})\leq \delta$ as desired. 
\end{proof}

\section{Quantum singular value discrimination, revisited}

In this section, we aim to prove \Cref{thm:meta-singular-value-discrimination}, which serves as a ``meta theorem'' for singular value discrimination using different approximation polynomials with varying parameters. 
Our approach differs from Theorem 3.2.9 in~\cite{Gilyen19} as we employ the \textit{projected unitary encoding} with an \textit{odd polynomial} to preserve the output of the verification circuit, as specified in \Cref{def:stateQMA}. 

\vspace{1em}
We will start by defining a \textit{projected unitary encoding}. We say that $U$ is an $\varepsilon$-projected unitary encoding of linear operator $A$ if $\|A- \tilde{\Pi} U \Pi\|_{\infty} \leq \varepsilon$ where orthogonal projectors $\tilde{\Pi}$ and  $\Pi$ may also act on some ancillary qubits. In particular, we refer to $U$ as an exact projected unitary encoding of $A$ when $\varepsilon=0$. 

\begin{restatable}[Singular value discrimination with bounded time and space]{theorem}{singularValueDiscrimination}
    \label{thm:meta-singular-value-discrimination}
    Consider $0 \leq a < b \leq 1$ and an exact projected unitary encoding $A \coloneqq  \tilde{\Pi}U\Pi$, where $U$ acts on $s_1(n)$ qubits. Let $\ket{\psi}$ be an unknown quantum state, a right singular vector of $A$ with a singular value either below $a$ or above $b$.
    Suppose there exists a degree-$d$ odd polynomial $S$ satisfying $|S(x) - \sign(x)| \leq \epsilon$ for all $x \in [-1,1] \setminus (-\delta,\delta)$, where $d = O\left(\delta^{-1}\log(1/\epsilon)\right)$ and $\delta$ linearly depends on $b-a$. 
    We can implement the QSVT associated with the polynomial $S$ to distinguish between the two cases, ensuring that the singular value of the given right singular vector $\ket{\psi}$ is either below $a$ or above $b$, with error probability at most $\epsilon$. 
    
    \noindent In particular, the description of quantum circuit implementation (classical prepossessing) can be computed in deterministic time $\poly(d)$ and space $s_2(n)$, and the circuit implementation takes $\poly(d)$ quantum gates and acts on at most $s_1(n)+O(\log{d})+1$ qubits, where $s_2(n)$ is decided by the construction of $S$.
\end{restatable}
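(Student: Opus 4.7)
The plan is to realize the polynomial transformation $S$ on the singular values of the projected unitary encoding $A = \tilde{\Pi} U \Pi$ via the quantum singular value transformation (QSVT) framework of Gilyén, Su, Low, and Wiebe~\cite{GSLW19}. Since $S$ is odd of degree $d$ with $|S(x)|\leq 1$ on $[-1,1]$, QSVT (specifically, the projected-unitary-encoding variant of Corollary~18 in~\cite{GSLW19}) produces a unitary $U_\Phi$ on $s_1(n)+O(\log d)+1$ qubits such that
\[
\tilde{\Pi}\, U_\Phi\, \Pi \;=\; \sum_k S(\sigma_k)\,\ket{\tilde{\phi}_k}\bra{\phi_k},
\]
where $\{(\sigma_k,\ket{\phi_k},\ket{\tilde{\phi}_k})\}_k$ is the SVD of $A$. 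The odd parity of $S$ is essential: it matches the alternating $U,U^\dagger,\ldots$ structure of the QSVT iterate and yields an operator that maps $\Img(\Pi)$ into $\Img(\tilde{\Pi})$, which is exactly what is needed to preserve the verifier-output structure required by \Cref{def:stateQMA}.

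First I would spell out the QSVT circuit: it interleaves $d$ calls to $U$ and $U^\dagger$ with controlled single-qubit rotations parametrized by a phase sequence $\Phi=(\phi_1,\ldots,\phi_d)$ extracted from $S$. Computing $\Phi$ from the Chebyshev-coefficient expansion of $S$ is done offline by an iterative root-finding procedure; this classical preprocessing runs in deterministic $\poly(d)$ time, and the working space it requires, $s_2(n)$, is dictated by the precision to which the coefficients of $S$ are supplied — polynomial in $n$ in the time-efficient setting, and also polynomial in $n$ in the $\stateUPSPACE$ setting using the bit-precision techniques for the approximate sign polynomial of~\cite{MY23}. I would then run $U_\Phi$ on the unknown right singular vector $\ket{\psi}$ (via the projector $\Pi$) together with $O(\log d)+1$ ancillae, and measure $\tilde{\Pi}$ against its orthogonal complement. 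The acceptance probability equals $|S(\sigma)|^2$; by the approximation $|S(x)-\sign(x)|\leq \epsilon$ on $[-1,-\delta]\cup[\delta,1]$, this is within $O(\epsilon)$ of $1$ or within $O(\epsilon)$ of $0$ depending on whether $\sigma\geq b$ or $\sigma\leq a$, once the singular-value axis has been rescaled by a factor linear in $b-a$ so that the two promise intervals fall inside $[-1,-\delta]\cup[\delta,1]$. A single-shot measurement therefore distinguishes the two cases with error at most $\epsilon$.

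The main obstacle I expect is reconciling the \emph{oddness} required by QSVT on projected unitary encodings with the fact that singular values lie in the \emph{non-negative} interval $[0,1]$: a naive additive shift placing the discrimination threshold at $(a+b)/2$ would destroy oddness. The workaround is to exploit the promise that $(a,b)$ contains no singular value, together with a linear rescaling that can be absorbed into the encoding of $A$ (at worst a constant-factor blowup of $d$), so that $[0,a]$ and $[b,1]$ are mapped into the two disjoint regions where the odd polynomial $S$ assigns the correct binary answer. With this in place, the gate count $\poly(d)$, the qubit count $s_1(n)+O(\log d)+1$, and the classical preprocessing bounds $(\poly(d),s_2(n))$ all follow routinely from the QSVT construction combined with the stated degree bound $d=O(\delta^{-1}\log(1/\epsilon))$.
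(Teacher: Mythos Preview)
Your proposal correctly invokes the QSVT framework on a projected unitary encoding and correctly identifies the central obstacle, but the workaround you sketch does not actually resolve it. A multiplicative rescaling of $A$ keeps all singular values non-negative, so the odd polynomial $S$ (which approximates $\sign$) is close to $+1$ on \emph{both} promise intervals $[0,a]$ and $[b,1]$ once they are bounded away from $0$; your proposed measurement would then accept with probability $|S(\sigma)|^2\approx 1$ in both cases. An additive shift $x\mapsto x-t$ cannot be ``absorbed into the encoding'' of a projected unitary $A=\tilde{\Pi}U\Pi$ by any standard operation (for $\Pi\neq\tilde{\Pi}$, adding a multiple of the identity is not even well-defined), and in any case shifting would destroy the oddness you need for the alternating phase sequence.

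The fix used in the paper is different: one does \emph{not} modify the encoding at all, but instead builds from the given $S$ a new odd polynomial $P$ that approximates the odd target
\[
\tfrac{1}{2}\bigl[(1-\epsilon)\,\sign(x+t)+(1-\epsilon)\,\sign(x-t)+2\epsilon\,\sign(x)\bigr],
\qquad t=\tfrac{a+b}{2},\quad \delta=\tfrac{b-a}{2}.
\]
This combination is odd because $\sign(-x+t)+\sign(-x-t)=-\bigl(\sign(x-t)+\sign(x+t)\bigr)$, and on the non-negative axis it takes values in $[0,\epsilon]$ on $[0,t-\delta]$ and in $[1-\epsilon,1]$ on $[t+\delta,1]$. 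Hence the threshold shift lives entirely inside the polynomial; applying \Cref{lemma:phase-modulation} to $P$ and measuring the appropriate projector then yields acceptance probability close to $0$ or close to $1$ as required. With this ingredient in hand, the gate count, qubit count, and classical preprocessing bounds follow as you describe.
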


Our construction is heavily influenced by~\cite{GSLW19} (and \Cref{lemma:space-efficient-approxSign} taken from~\cite{MY23}) and we utilize specific statements from~\cite{Gilyen19} for ease of use. We will show that, for any $\epsilon$ equal to $1/\poly(n)$ or $1/\exp(-\poly(n))$, when equipped with different approximation polynomials for the sign function, the space complexity parameters $s_1(n)$ and $s_2(n)$ remain $\poly(n)$ whereas the time complexity parameter $t(n)$ may differ greatly. 

\subsection{A general framework}
\label{sec:singular-value-discrimination}

Let us now define the singular value decomposition for projected unitary encodings. 

\begin{lemma}[Singular value decomposition of a projected unitary, adapted from Definition 2.3.1 in \cite{Gilyen19}]
    \label{def:SVD-projected-unitary}
    Consider a projected unitary encoding of $A$, denoted by $U$, associated with orthogonal projectors $\Pi$ and $\tilde{\Pi}$ on a finite-dimensional Hilbert space $\calH_U$. 
    Namely, $A=\tilde{\Pi} U \Pi$. 
    Then there exist orthonormal bases of $\Pi_i$ and $\tilde{\Pi}_i$ as follows: for $\Pi_i$, the orthonormal basis of the subspace $\Img(\Pi_i)$ is $\left\{ \ket{\psi_{i}}: i \in [d] \right\}$, where $d\coloneqq \rank(\Pi)$. Similarly, for $\tilde{\Pi}_i$, the orthonormal basis of the subspace $\Img(\tilde{\Pi}_i)$ is $\big\{ \ket{\tilde{\psi}_i}: i\in[\tilde{d}] \big\}$, where $\tilde{d}\coloneqq \rank(\tilde{\Pi})$.
    These bases ensure that there is a singular value decomposition $A = \sum_{i=1}^{\min\{d,\tilde{d}\}} \sigma_i \ket{\tilde{\psi}_i}\bra{\psi_i}$, where singular values $\sigma_i > \sigma_j$ for any $i < j \in [\min\{d,\tilde{d}\}]$. 
\end{lemma}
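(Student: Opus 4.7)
The statement is essentially the standard singular value decomposition theorem, applied not to $U$ itself but to its restriction between the two subspaces $\Img(\Pi)$ and $\Img(\tilde\Pi)$. My plan is therefore to reduce to the textbook SVD as follows: first interpret $A=\tilde\Pi U\Pi$ as a linear map between those two image subspaces, then invoke the finite-dimensional SVD, and finally lift the resulting decomposition back to an identity on $\calH_U$ consistent with the form stated in the lemma.

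First I would record two elementary facts about $A$. Because $\Pi$ is an orthogonal projector, $A\ket v=0$ whenever $\ket v\in\Img(\Pi)^\perp$, so the non-trivial action of $A$ is entirely captured by its behavior on $\Img(\Pi)$. Dually, $A\ket v=\tilde\Pi(U\ket v)\in\Img(\tilde\Pi)$ for every $\ket v$, so the range of $A$ lies inside $\Img(\tilde\Pi)$. Consequently the prescription $\hat A\ket v:=A\ket v$ defines a well-posed linear map $\hat A:\Img(\Pi)\to\Img(\tilde\Pi)$ between two finite-dimensional Hilbert spaces of dimensions $d=\rank(\Pi)$ and $\tilde d=\rank(\tilde\Pi)$.

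Next I would apply the standard SVD theorem to $\hat A$. This yields orthonormal bases $\{\ket{\psi_i}\}_{i=1}^{d}$ of $\Img(\Pi)$ and $\{\ket{\tilde\psi_i}\}_{i=1}^{\tilde d}$ of $\Img(\tilde\Pi)$ together with non-negative real numbers $\sigma_1\geq\sigma_2\geq\cdots\geq\sigma_{\min\{d,\tilde d\}}\geq 0$ such that
\[
\hat A\;=\;\sum_{i=1}^{\min\{d,\tilde d\}}\sigma_i\,\ket{\tilde\psi_i}\bra{\psi_i},
\]
obtained in the usual way by diagonalizing the positive semidefinite operator $\hat A^{\dagger}\hat A$ on $\Img(\Pi)$, taking $\ket{\psi_i}$ to be its orthonormal eigenvectors with eigenvalues $\sigma_i^2$ ordered decreasingly, and then setting $\ket{\tilde\psi_i}:=\sigma_i^{-1}\hat A\ket{\psi_i}$ for the indices where $\sigma_i>0$ (with the bases completed arbitrarily inside $\Img(\tilde\Pi)$ for indices where $\sigma_i=0$, or where $\tilde d>d$). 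Orthonormality of the $\ket{\tilde\psi_i}$'s follows from $\bra{\psi_i}\hat A^{\dagger}\hat A\ket{\psi_j}=\sigma_j^2\delta_{ij}$.

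Finally I would lift back to $\calH_U$: since $A\ket v=\hat A\,\Pi\ket v$ for every $\ket v\in\calH_U$, and $\Pi=\sum_{i=1}^{d}\ketbra{\psi_i}{\psi_i}$ expresses $\Pi$ in the chosen basis, substituting gives $A=\sum_{i=1}^{\min\{d,\tilde d\}}\sigma_i\ket{\tilde\psi_i}\bra{\psi_i}$ as claimed. The last wrinkle is the strict ordering "$\sigma_i>\sigma_j$ for $i<j$" written in the statement; I would simply read this as the usual $\sigma_i\geq\sigma_{j}$ ordering that the SVD always provides (breaking ties by choosing any orthonormal basis of each eigenspace of $\hat A^{\dagger}\hat A$), as strict inequality is not generally achievable and is not needed for any downstream use in the paper. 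There is no genuine obstacle here; the only subtle point worth noting in the write-up is that when $d\neq\tilde d$ the decomposition naturally truncates at $\min\{d,\tilde d\}$, and the remaining basis vectors of the larger of the two subspaces simply do not appear in the sum (equivalently, they correspond to zero singular values).
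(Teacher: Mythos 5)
Your proposal is correct and complete. Note, however, that the paper itself gives no proof of this lemma: it is imported essentially verbatim from Definition 2.3.1 of Gilyén's thesis, where it is stated as a definition of the singular value decomposition associated with a projected unitary rather than derived. Your reduction to the textbook SVD of the restricted map $\hat A:\Img(\Pi)\to\Img(\tilde\Pi)$, followed by the lift $A=\hat A\,\Pi$, is precisely the standard justification, and your decision to read the strict ordering $\sigma_i>\sigma_j$ as the usual non-increasing ordering $\sigma_i\geq\sigma_j$ (with ties broken by an arbitrary orthonormal basis of each eigenspace of $\hat A^{\dagger}\hat A$) is the right call, since strict inequality is not achievable in general and nothing downstream in the paper relies on it.
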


With these definitions in place, we present the alternating phase modulation as \Cref{lemma:phase-modulation}, which serves as the key ingredient for implementing quantum singular value transformations. 
It is worth noting that Lemma \ref{lemma:phase-modulation} deviates from Theorem 2.3.7 in \cite{Gilyen19} due to our assumption that the sequence of rotation angles is already provided. The rotation angles $\Phi$ in this context correspond to the polynomial $P$ and will be utilized in computing the descriptions of quantum circuits implementing QSVT corresponding to $P$.

\begin{lemma}[QSVT by alternating phase modulation, adapted from Theorem 2.3.7 in~\cite{Gilyen19}]
    \label{lemma:phase-modulation}
    Let $P\in\bbC[x]$ be an odd polynomial, and let $\Phi\in\bbR^n$ be the corresponding sequence of rotation angles. There exists an explicit implementation of $P^{\rm (SV)}(\tilde{\Pi} U \Pi) = \tilde{\Pi} U_{\Phi} \Pi$ when $n$ is odd, using a single ancillary qubit. Specifically, $U_{\Phi}$ represents a quantum circuit that implements the QSVT associated with $P$, using the corresponding rotation angles specified by $\Phi$. Moreover, the notation $P^{\rm (SV)}(A)$ indicates that the polynomial $P$ is applied to the singular values of the matrix $A$, rather than to its eigenvalues.
\end{lemma}

We will now proceed with proving \Cref{thm:meta-singular-value-discrimination}. 

\begin{proof}[Proof of \Cref{thm:meta-singular-value-discrimination}]

    Given an exact projected unitary encoding $\tilde{\Pi}U\Pi$, with a singular value decomposition $W\Sigma V^\dagger=\tilde{\Pi}U\Pi$, and using with \Cref{lemma:phase-modulation}, it suffices to construct an odd polynomial $P$ associated with a sequence of angles $\Phi\in\bbR^m$, where $m=O(\delta^{-1}\log(1/\varepsilon))$. The goal is to ensure that the following conditions hold: 
    \begin{align*}
        \big\|\tilde{\Pi}_{\geq t+\delta} U_{\Phi} \Pi_{\geq t+\delta} - I\otimes \sum_{i:\sigma_i \geq t+\delta} \ket{\tilde{\psi}_i}\bra{\psi_i} \big\| \leq \varepsilon \text{ and }
        \|\big( \bra{+}\otimes\tilde{\Pi}_{\leq t-\delta}\big) U_{\Phi} \left( \ket{+}\otimes \Pi_{\leq t-\delta} \right) - 0\| \leq \varepsilon. 
    \end{align*}
    
    Next, we define singular value threshold projectors as $\Pi_{\geq \delta}\coloneqq \Pi V \Sigma_{\geq \delta} V^{\dagger} \Pi$, and similarly $\Pi_{\leq\delta}\coloneqq \Pi V \Sigma_{\leq \delta} V^{\dagger} \Pi$. In the same way, we define $\tilde{\Pi}_{\geq\delta}\coloneqq \Pi U \Sigma_{\geq \delta} U^{\dagger} \Pi$ and  $\tilde{\Pi}_{\leq\delta} \coloneqq \Pi U \Sigma_{\leq \delta} U^{\dagger} \Pi$. 
    Using these constructions, we apply an $\epsilon$-approximate singular value projector by choosing $t=(a+b)/2$ and $\delta=(b-a)/2$. Then, we measure $\ketbra{+}{+}\otimes \Pi$: 
    \begin{itemize}
        \item If the final state lies in $\Img(\ketbra{+}{+}\otimes \Pi)$, there exists a singular value $\sigma_i$ above $b$; 
        \item Otherwise, all singular values $\sigma_i$ must be below $a$. 
    \end{itemize}
    
    The remaining task is to implement singular value threshold projectors. It is noteworthy that $U_{\Phi}$ can be implemented using a single ancillary qubit, which requires $m$ uses of $U$,$U^\dagger$, $\rm C_\Pi NOT$, $\rm C_{\tilde{\Pi}} NOT$, and single qubit gates. 

    \paragraph{Implementing singular value threshold projectors.} We begin by constructing an odd polynomial $P\in\bbR[x]$ of degree $m=O(\log(1/\epsilon^2)/\delta)$, which approximates the function 
    \[\frac{1}{2}[(1-\epsilon)\cdot\sign(x+t) + (1-\epsilon)\cdot\sign(x-t) + 2\epsilon \cdot\sign(x)]\] 
    with $\epsilon^2/4$ precision on the interval $[-1,1] \setminus (-t-\delta,-t+\delta) \cup (t-\delta,t+\delta)$. 

    This construction of $P$ is based on the degree-$d$ approximation polynomial $S(x)$, which is specified in the statement of the theorem. This polynomial $S(x)$ satisfies the condition: 
    \[\forall x \in [-1,1] \setminus (-\delta,\delta), |S(x)-\sign(x)| \leq \epsilon, \text{ where } d=O(\delta^{-1} \log(1/\epsilon)).\] 
    In addition, it holds that $|P(x)| \leq 1$ for any $-1 \leq x \leq 1$, with the following properties: 
    \begin{align*}
        (-1)^z P(x) \in [0,\epsilon] &\text{ when } (-1)^z x \in [0,t-\delta],\\
        (-1)^z P(x) \in [1-\epsilon,1] &\text{ when } (-1)^z \in [t+\delta,1], \text{ for } z \in \binset. 
    \end{align*}
    
    Finally, it remains to construct a projected unitary encoding $U$ such that 
    \[\|\tilde{\Pi} U_{\Phi}\Pi - P^{\rm (SV)}(\tilde{\Pi} U \Pi)\| \leq \epsilon.\] This task can be accomplished using \Cref{lemma:phase-modulation}, provided that the rotation angles $\Phi$, corresponding to $P(x)$, can be computed in deterministically $\poly(d)$ time and $s_2(n)$ space. 
    Furthermore, the gate complexity follows from Lemma 2.3.9 in~\cite{Gilyen19}, which provides an explicit implementation and analysis of the gate complexity for the alternating phase modulation procedure described in \Cref{lemma:phase-modulation}. 
\end{proof}

\subsection{Scenarios in time-efficient and space-bounded}

Equipped with \Cref{thm:meta-singular-value-discrimination}, utilizing a polynomial approximation of the sign function in Corollary 6 of~\cite{LC17}, as well as the angle finding algorithms in Theorems 2.2.1-2.2.3 of ~\cite{Gilyen19}, we conclude the time-efficient singular value discrimination. 

\begin{theorem}[Time-efficient singular value discrimination]
    \label{thm:time-efficient-singular-value-discrimination}
    Consider $0\leq a,b \leq 1$ and an exact projected unitary encoding $A\coloneqq \tilde{\Pi} U \Pi$. Let $\ket{\psi}$ be an unknown state, a right singular vector of $A$ with a singular value either below $a$ or above $b$ such that $b-a \geq 1/\poly(n)$. Employing the quantum singular value transform (QSVT) with a degree-$O\big(\frac{\log{1/\epsilon}}{b-a}\big)$ odd polynomial, one can distinguish the two cases with error probability at most $\epsilon$. Moreover, the time complexity of implementing this QSVT is $\poly\log(1/\epsilon)\cdot\poly(1/(b-a))$. We can also compute the description of this quantum circuit implementation in deterministic time $\poly\log(1/\epsilon)\cdot\poly(1/(b-a))$.     
\end{theorem}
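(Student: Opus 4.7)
The plan is to specialize the meta theorem (Theorem \ref{thm:meta-singular-value-discrimination}) to the time-efficient regime by supplying a concrete polynomial approximation of the sign function and checking that the rotation-angle synthesis is polynomial-time. First, I would invoke the Low--Chuang approximation \cite[Corollary 6]{LC17}, which yields an odd polynomial $S$ of degree $d = O(\delta^{-1}\log(1/\epsilon))$ with $|S(x)-\sign(x)|\leq \epsilon$ on $[-1,1]\setminus(-\delta,\delta)$. Taking $\delta = (b-a)/2$ matches the hypothesis of the meta theorem and gives $d = O(\log(1/\epsilon)/(b-a))$. Under the assumption $b-a\geq 1/\poly(n)$, this $d$ is polynomial in $n$ whenever $\log(1/\epsilon)$ is.

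Next, I would feed $S$ directly into Theorem \ref{thm:meta-singular-value-discrimination}. The meta theorem immediately produces a QSVT-based discrimination procedure using $\poly(d)$ quantum gates and a single ancillary qubit, erring with probability at most $\epsilon$. Substituting the bound on $d$ gives a quantum circuit of size $\poly\log(1/\epsilon)\cdot\poly(1/(b-a))$, exactly the time complexity claimed. Completeness and soundness of the discrimination step are inherited verbatim from the meta theorem; no additional analysis on the projected-unitary encoding is required.

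It remains to verify that the classical preprocessing sits in \Ptime{}. The meta theorem reduces this to computing the rotation-angle sequence $\Phi$ corresponding to the polynomial $P$ built from $S$, in time $\poly(d)$ and space $s_2(n)$ controlled by the choice of $S$. Here I would invoke the angle-finding algorithms of \cite[Theorems 2.2.1--2.2.3]{Gilyen19}, which deterministically output $\Phi$ in time polynomial in $d$ and in the requested bit precision. Since the coefficients of the Low--Chuang polynomial are themselves computable in polynomial time to any polynomial bit precision, the entire preprocessing runs in time $\poly\log(1/\epsilon)\cdot\poly(1/(b-a))$, which is polynomial in $n$.

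The main obstacle, modest but worth flagging, is the numerical-precision bookkeeping: I must ensure that finite-precision representations of the coefficients of $S$ and of the angles $\Phi$ perturb $\tilde{\Pi} U_\Phi \Pi$ by at most $O(\epsilon)$ in operator norm, so that the error guarantee of the meta theorem survives. This is a standard $\poly\log(1/\epsilon)$-bit analysis handled in \cite{Gilyen19}, and it absorbs cleanly into the stated asymptotics without altering either the circuit size or the preprocessing complexity.
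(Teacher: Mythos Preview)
Your proposal is correct and essentially identical to the paper's own proof sketch: both instantiate the meta theorem with the Low--Chuang sign-function approximation from \cite{LC17} and then appeal to the polynomial-time angle-finding algorithms (the paper cites \cite{Haah19,CDG+20} alongside \cite{Gilyen19}, but this is the same step). Your added paragraph on numerical-precision bookkeeping is a reasonable caveat that the paper leaves implicit.
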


\begin{proof}[Proof Sketch]
It suffices to construct a degree-$d$ approximation polynomial of the sign function and to find the rotation angles used in \Cref{lemma:phase-modulation} in deterministic time $\poly\log(1/\epsilon)\cdot\poly(1/(b-a))$. Lastly, the construction of the approximation polynomial is provided in~\cite{GSLW19,Gilyen19}, along with \Cref{prop:poly-approx-sign}. 

\begin{proposition}[Polynomial approximation of the sign function, Corollary 6 in \cite{LC17}]
    \label{prop:poly-approx-sign}
    For all $\delta > 0$, $\epsilon \in (0,1/2)$, there exists an efficiently computable odd polynomials $P \in \bbR[x]$ of degree $n=O(\log(1/\epsilon)/\delta)$ s.t.  
    $\forall x \in [-2,2], |P(x)| \leq 1$ and $\forall x \in[-2,2] \setminus (-\delta, \delta): |P(x)-\sign(x)|\leq \epsilon$.
\end{proposition}

Regarding finding angles, this is achievable in time $\tilde{O}(d^3\poly\log(1/\epsilon))$ using the recent developments~\cite{Haah19,CDG+20}. This results in a quantum circuit implementing the desired QSVT, and the description of this quantum circuit implementation can be computed in \Ptime{}.  
\end{proof}

Now we move to \textit{the space-bounded scenario}, stated in \Cref{thm:space-bounded-singular-value-discrimination}. 

\begin{theorem}[Space-bounded singular value discrimination]
    \label{thm:space-bounded-singular-value-discrimination}
    Consider $0\leq a,b \leq 1$ and an exact projected unitary encoding $A\coloneqq \tilde{\Pi} U \Pi$. Let $\ket{\psi}$ be an unknown state, a right singular vector of $A$ with a singular value either below $a$ or above $b$ such that $b-a \geq \exp(-\poly(n))$. Employing the quantum singular value transform (QSVT) with a degree-$O\big(\frac{\log{1/\epsilon}}{b-a}\big)$ odd polynomial, one can distinguish the two cases with error probability at most $\epsilon$. 
    In addition, quantum circuits implementing this QSVT utilize $\poly(n)$ qubits, and we can compute the description of this quantum circuit implementation in \PSPACE{}.  
\end{theorem}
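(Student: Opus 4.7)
The plan is to instantiate the meta theorem (\Cref{thm:meta-singular-value-discrimination}) with parameter regime $\delta = \Theta(b-a) \geq \exp(-\poly(n))$, and replace each ingredient of the time-efficient proof (\Cref{thm:time-efficient-singular-value-discrimination}) with a polynomial-space counterpart. Concretely, to invoke \Cref{thm:meta-singular-value-discrimination} we need: (i) a degree $d = O(\delta^{-1}\log(1/\epsilon))$ odd polynomial $S$ approximating $\sign$ to error $\epsilon$ on $[-1,1]\setminus(-\delta,\delta)$, whose coefficients are computable to exponential precision by a \PSPACE{} Turing machine; and (ii) a rotation-angle sequence $\Phi\in\bbR^d$ realizing $S$ via alternating phase modulation (\Cref{lemma:phase-modulation}), again computable in \PSPACE{}. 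Given both, the meta theorem yields a QSVT circuit on $s_1(n)+O(\log d)+1 = \poly(n)$ qubits whose \emph{description} is output by a classical algorithm using $s_2(n) = \poly(n)$ space, which is exactly what the statement asks.

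For ingredient (i) I would invoke the space-efficient approximation of $\sign$ taken from \cite{MY23} (the one referenced as \Cref{lemma:space-efficient-approxSign} earlier in the paper): it gives, for any $\delta$ and any $\epsilon$ as small as $\exp(-\poly(n))$, an odd polynomial of the correct degree $d = O(\delta^{-1}\log(1/\epsilon)) = \exp(\poly(n))$ whose coefficients, truncated to $\poly(n)$ bits of precision, are computable in deterministic space $\poly(n)$. The key point is that although $d$ is exponential, each individual coefficient (indexed by an integer of $\poly(n)$ bits) can be produced by a \PSPACE{} machine that enumerates the terms of the underlying integral/series representation and accumulates them in place; crucially the \emph{output size} (the coefficient itself) is only $\poly(n)$ bits, so this fits the \PSPACE{} transducer model.

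For ingredient (ii) I would appeal to a space-efficient angle-finding procedure. The standard angle-finding algorithms of \cite{Haah19,CDG+20} used in the time-efficient case run in time $\tilde O(d^3\poly\log(1/\epsilon))$; once $d$ is exponential they no longer run in polynomial time, but upon inspection they proceed by a polynomial-depth recursion on a polynomial-size linear-algebraic state (a pair of low-degree Laurent polynomials), so each intermediate object has $\poly(n)$ bits and the whole computation can be simulated by a \PSPACE{} machine that reuses workspace between recursive calls. Alternatively, one can recast ``given the coefficients of $S$ and an index $i\in[d]$, output the $i$-th angle'' as a polynomial-space computable function by observing that all arithmetic involved is over numbers of $\poly(n)$ bits and the combinatorial structure is a balanced recursion of depth $\poly(n)$.

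The main obstacle is precisely this last step: unlike the time-efficient setting, $d$ itself has $\poly(n)$ bits but is exponentially large, so one cannot afford to store the full list of angles; the \PSPACE{} transducer must, on demand, recompute each angle $\Phi_i$ from scratch while only writing the relevant portion of the circuit description onto its output tape. Provided this on-demand computation is available, stitching everything together via \Cref{thm:meta-singular-value-discrimination} gives a circuit on $\poly(n)$ qubits whose description is produced in \PSPACE{}, and whose discrimination error is at most $\epsilon$, completing the proof.
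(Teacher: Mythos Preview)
Your ingredient (i) is right and matches the paper. The gap is in ingredient (ii): the claim that standard angle-finding can be carried out in \PSPACE{} when the degree $d$ is exponential is not justified, and your description of those algorithms as a ``polynomial-depth recursion'' is incorrect. The Haah and Chao--Ding--Gily\'en--et al.\ procedures reduce the degree by a constant at each step, so the recursion has depth $\Theta(d)=\exp(\poly(n))$; even if each level manipulates only $\poly(n)$-bit numbers, you cannot maintain an exponential-depth call stack in polynomial space, and there is no obvious way to recompute level $i$ on demand without first reproducing levels $1,\dots,i-1$. The polynomial-completion step (finding $Q$ with $|P|^2+|Q|^2=1$ on the circle) additionally requires root-finding on a degree-$2d$ polynomial, which is again not obviously in \PSPACE{}. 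So ``inspect the time-efficient angle finder and observe it is \PSPACE{}'' is exactly the step that does not go through.

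The paper avoids angle-finding altogether. The point of \Cref{lemma:space-efficient-approxSign} is that it already hands you the sign approximation \emph{in the Chebyshev basis}, $P^{\sign}_d=\sum_i c_i T_i$, with \PSPACE{}-computable coefficients and $\|c\|_1=O(\log d)$. For each Chebyshev polynomial $T_i$ the QSVT phase sequence is known in closed form (Lemma~2.2.7 in~\cite{Gilyen19}), so no angle-finding is needed for the individual terms. The linear combination $\sum_i c_i T_i$ is then realized via the LCU technique, whose space-efficient version (Lemma~3.6 in~\cite{MY23}) uses only $O(\log d)=\poly(n)$ extra qubits and adapts to projected unitary encodings. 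This LCU-plus-explicit-Chebyshev-phases route is the missing idea; once you have it, plugging into \Cref{thm:meta-singular-value-discrimination} gives the desired $\poly(n)$-qubit circuit with a \PSPACE{}-computable description.
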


The proof of \Cref{thm:space-bounded-singular-value-discrimination} is closely derived from the \textit{space-bounded} quantum singular value transformation techniques in~\cite[Lemma 3.13]{MY23}. This is because their proof techniques can be straightforwardly adapted to the context of projected unitary encodings. In addition, we require the following lemma to proceed with the proof:

\begin{lemma}[Exponentially good approximation to the sign function, adapted from Lemma 2.10 in~\cite{MY23}]
    \label{lemma:space-efficient-approxSign}
    For any $\epsilon \geq 2^{-\poly(n)}$, there exists a degree $d=O(\epsilon^{-1}\log(1/\epsilon))=O(2^{\poly(n)})$ and \PSPACE{}-computable coefficients $c_0,\cdots,c_d$ such that 
    $\forall x \in [-1,1] \setminus [-\epsilon,\epsilon]$, $\left|\sign(x)-P^{\sign}_d(x)\right| \leq \epsilon$  where  $P_d^{\sign}\coloneqq \sum_{i=0}^d c_i T_i$ and $T_i(x)$ is the Chebyshev polynomial (of the first kind).\footnote{The Chebyshev polynomials (of the first kind) $T_k(x)$ are defined via the following recurrence relation: $T_0(x)=1$, $T_1(x)=x$, and $T_{k+1}(x)=2x T_k(x)-T_{k-1}(x)$. For $x \in [-1,1]$, an equivalent definition is $T_k(\cos \theta) = \cos(k \theta)$.}   
    Furthermore, the coefficient vector $c=(c_1,\cdots,c_d)$ has norm bounded by $\|c\|_1 \leq O(\log d)$.
\end{lemma}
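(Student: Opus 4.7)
The plan is to follow the construction strategy from \cite{GSLW19} and adapt it to produce Chebyshev coefficients that are \PSPACE{}-computable. First, I would write down an explicit polynomial approximating $\sign(x)$ by starting from the truncated Taylor expansion of $\erf(kx) = \frac{2}{\sqrt{\pi}} \int_0^{kx} e^{-t^2}\, dt$ with $k = \Theta(\epsilon^{-1}\sqrt{\log(1/\epsilon)})$ (so that $\erf(kx)$ is $\epsilon/2$-close to $\sign(x)$ on $[-1,1]\setminus[-\epsilon,\epsilon]$), and truncating the series at degree $d = O(\epsilon^{-1}\log(1/\epsilon))$. Integrating term-by-term yields a polynomial $\tilde{P}(x) = \sum_{n=0}^{d} a_n x^{2n+1}$ with monomial coefficients $a_n = \frac{2(-1)^n k^{2n+1}}{\sqrt{\pi}\,n!(2n+1)}$, which is odd and bounded by $1$ on $[-1,1]$ up to a rescaling factor.

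Next, I would convert to the Chebyshev basis via the standard identity $x^m = 2^{1-m}\sum_{k} \binom{m}{(m-k)/2} T_k(x)$, producing $P_d^{\sign} = \sum_{k=0}^{d} c_k T_k$ with closed-form coefficients $c_k$ expressible as finite sums of at most $d$ terms, each an explicit rational times a power of $k$ divided by factorials. Since $d = O(2^{\poly(n)})$, each $c_k$ involves integers of at most $\exp(\poly(n))$ bits, which can be computed and manipulated with standard PSPACE arithmetic: factorials and powers are built up by iterated multiplication, and the overall sum is accumulated in place. Provided we carry out the arithmetic to precision $\exp(-\poly(n))$, this yields \PSPACE{}-computable $c_k$ of the required accuracy.

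For the $\ell_1$ bound $\|c\|_1 \leq O(\log d)$, I would use the substitution $x = \cos\theta$, which transforms Chebyshev coefficients into Fourier coefficients of $P_d^{\sign}(\cos\theta)$. The latter approximates $\sign(\cos\theta)$, a square wave whose odd Fourier coefficients decay like $1/k$; summing $|c_k|$ up to degree $d$ gives a harmonic-type sum of magnitude $O(\log d)$. A careful perturbative argument is needed to confirm that the polynomial approximation does not inflate this sum beyond constants: since the $L^2$ error is $O(\epsilon)$ and there are only $d+1$ coefficients, Cauchy-Schwarz gives an additional contribution of $O(\epsilon \sqrt{d})$, which is $o(1)$ for our parameter regime.

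The main obstacle I expect is the \emph{precision propagation} in the \PSPACE{} computation. Because $P_d^{\sign}$ is a sum of $d = \exp(\poly(n))$ Chebyshev terms evaluated on inputs up to magnitude $1$, individual coefficient errors of size $\eta$ can aggregate to $d\eta$ in the worst case. Hence each $c_k$ must be computed to precision $\epsilon/d = \exp(-\poly(n))$, which remains achievable in polynomial space but requires a careful error analysis that tracks the blow-up through both the factorials in $a_n$ (whose magnitudes can be exponentially large before cancellation) and the Chebyshev basis change. Verifying that the standard cancellation-aware evaluation of these series remains accurate without exponentially many bits of intermediate storage is the delicate technical step.
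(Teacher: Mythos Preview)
The paper does not prove this lemma; it is stated as a black box adapted from \cite{MY23}. So there is no ``paper's own proof'' to compare against, and your proposal should be judged on its own merits.

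Unfortunately, there is a genuine gap in your construction. The raw Taylor truncation of $\erf(kx)$ at degree $d=O(\epsilon^{-1}\log(1/\epsilon))$ does \emph{not} approximate $\erf(kx)$ uniformly on $[-1,1]$. With $k=\Theta(\epsilon^{-1}\sqrt{\log(1/\epsilon)})$, the argument $kx$ at $x=\pm 1$ has magnitude $k$, and the Taylor series of $\erf$ at argument $k$ needs roughly $k^2=\Theta(\epsilon^{-2}\log(1/\epsilon))$ terms before the factorials dominate and the tail decays; your degree $d\approx k\sqrt{\log(1/\epsilon)}\ll k^2$ is far short of that. Concretely, at the truncation point $n\approx d/2$ your monomial coefficient $|a_n|\sim k^{2n+1}/n!\sim (ek^2/n)^n$ is still doubly exponential in the input size, so the partial sum at $x=1$ is astronomically far from $\erf(k)\approx 1$. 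This is the same phenomenon you flag in your ``main obstacle'' paragraph about intermediate values needing exponentially many bits: it is not a precision-tracking nuisance, it is a symptom of the series not having converged. No amount of cancellation-aware evaluation will fix it, because the polynomial you wrote down is simply not close to $\sign$ on $[-1,1]\setminus[-\epsilon,\epsilon]$.

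The constructions in \cite{LC17,GSLW19} that achieve degree $O(\epsilon^{-1}\log(1/\epsilon))$ do not Taylor-truncate $\erf(kx)$ directly; they first build a low-degree polynomial approximation of the Gaussian $e^{-t^2}$ on a bounded interval (where such approximations are well-behaved) and then integrate, or use a Fourier/Chebyshev-side construction from the start. For the \PSPACE{}-computability of the Chebyshev coefficients, the route taken in \cite{MY23} avoids the monomial basis entirely: one computes each $c_k$ directly from the integral formula $c_k=\tfrac{2}{\pi}\int_{-1}^{1} f(x)T_k(x)(1-x^2)^{-1/2}\,dx$ (equivalently a Fourier coefficient of $f(\cos\theta)$) by exponential-time, polynomial-space numerical quadrature applied to a function whose values are themselves \PSPACE{}-computable to exponential accuracy. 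This keeps every intermediate quantity bounded by $O(1)$ in magnitude, which is what makes the polynomial-space bound go through. Your $\ell_1$ argument via the square-wave Fourier decay is the right intuition for the bound $\|c\|_1=O(\log d)$, but it too should be run on the correctly constructed polynomial rather than on the divergent Taylor truncation.
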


We now outline the proof sketch of \Cref{thm:space-bounded-singular-value-discrimination}: 

\begin{proof}[Proof Sketch of \Cref{thm:space-bounded-singular-value-discrimination}]
Similar to the time-efficient scenario (e.g., \Cref{thm:time-efficient-singular-value-discrimination}), it suffices to implement the QSVT corresponding to the sign function, provided that we have an exponentially good (bounded) polynomial approximation of the sign function, as stated in \Cref{lemma:space-efficient-approxSign}. 

However, instead of directly constructing the sequence of rotation angles in the time-efficient scenario, we employ the linear combination of unitaries (LCU) technique~\cite{BCCKS14}. This technique can be readily adapted to space-bounded scenarios (as demonstrated in Lemma 3.6 in~\cite{MY23}) and is also applicable to projected unitary encodings. Next, we focus on implementing the quantum singular value transformation (QSVT) for the Chebyshev polynomials $T_i(x)$ with odd $i\in [1,d]$ utilized in~\Cref{lemma:space-efficient-approxSign}. To achieve this, we examine the proof of Lemma 2.2.7 in~\cite{Gilyen19}, which specifies the rotation angles corresponding to Chebyshev polynomials, and directly adapt it to the context of projected unitary encodings. Consequently, implementing an exponential-degree QSVT corresponding to the sign function requires polynomially many qubits, and the description of this circuit implementation can be computed in \PSPACE{}.
\end{proof}

\section{Doubly-preserving error reduction for state-synthesizing classes}
\label{sec:error-reduction-for-stateQMA}

In this section, we will present error reduction for \stateQMA{} and its variants which preserves not only \textit{the (quantum) witness state} but also \textit{the resulting state} that well-approximates the target state.  

\subsection{Doubly-preserving error reduction for \stateQMA{} and more}

We start by stating error reduction for \stateQMA{}, this leads to 
\[\stateQMA{}_\delta = \cup_{c-s\geq 1/\poly(n)} \stateQMA_{\delta}[c,s].\] 

\begin{theorem}[Error reduction for \stateQMA{}]
    \label{thm:error-reduction-stateQMA}
    For any efficiently computable $c(n),s(n),\delta(n)$ such that $0 \leq s(n) < c(n) \leq 1$, $c(n)-s(n)\geq 1/\poly(n)$, we have that for any polynomial $l(n)$, 
    \[\stateQMA_{\delta}[c,s] \subseteq \stateQMA_{\delta}[1-2^{-l},2^{-l}].\] 
\end{theorem}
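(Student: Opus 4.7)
My plan is to view the $\stateQMA_{\delta}[c,s]$ verifier through a projected unitary encoding whose singular vectors simultaneously encode witnesses (right) and resulting states (left), and then invoke \Cref{thm:time-efficient-singular-value-discrimination} to apply QSVT with an odd polynomial approximating the sign function, amplifying singular values while leaving both families of singular vectors intact. Concretely, form $A_x := \Pi_{\Out} V_x \Pi_{\In}$ with $\Pi_{\In} := I_{\sfW} \otimes \ketbra{0^k}{0^k}$ and $\Pi_{\Out} := \ketbra{1}{1}_{\Out}$, and take the singular value decomposition $A_x = \sum_i \sigma_i \ketbra{\tilde\psi_i}{\psi_i}$ from \Cref{def:SVD-projected-unitary}. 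Each right singular vector $\ket{\psi_i} = \ket{w_i}\ket{0^k}$ encodes a pure witness $\ket{w_i}$; feeding $\ket{w_i}$ into $V_x$ accepts with probability $\sigma_i^2$ and yields the (post-selected, output qubit discarded) resulting state $\ket{\tilde\psi_i}$. The original promise then becomes (i) some $\sigma_i \geq \sqrt{c}$, and (ii) every $\sigma_i > \sqrt{s}$ satisfies $\td(\ketbra{\tilde\psi_i}{\tilde\psi_i}, \psi_x) < \delta$.

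Next, I would set $a := \sqrt{s}$ and $b := \sqrt{c}$, so that $b - a \geq (c-s)/2 \geq 1/\poly(n)$, and invoke \Cref{thm:time-efficient-singular-value-discrimination} with threshold $(a+b)/2$ and precision $\epsilon = 2^{-l - O(\log(1/\delta))}$. This produces a degree-$d$ odd polynomial $P$ with $d = O(l/(\sqrt{c} - \sqrt{s}))$ satisfying $|P| \leq 1$ on $[-1,1]$, $|P(x)| \leq \epsilon$ on $[0,a]$, and $P(x) \geq 1 - \epsilon$ on $[b,1]$, together with a uniformly generated unitary $V'_x$ and projectors $\Pi'_{\In}, \Pi'_{\Out}$ such that $\|\Pi'_{\Out} V'_x \Pi'_{\In} - \sum_i P(\sigma_i) \ketbra{\tilde\psi_i}{\psi_i}\|_{\infty} \leq \epsilon$. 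Because $P$ is odd, the transformed encoding inherits the right and left singular vectors of $A_x$ with singular values mapped $\sigma_i \mapsto P(\sigma_i)$, so $V'_x$ remains both witness- and resulting-state-preserving.

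For completeness, the original completeness witness $\ket{w_{i^*}}$ with $\sigma_{i^*} \geq \sqrt{c}$ makes $V'_x$ accept with probability $\geq (1-\epsilon)^2 \geq 1 - 2^{-l}$ while still producing $\ket{\tilde\psi_{i^*}}$, which is $\delta$-close to $\psi_x$. For soundness, consider any $\ket{w} = \sum_i \alpha_i \ket{w_i}$ with new acceptance $p' > 2^{-l}$; the resulting state is $\ket{\phi'} \propto \sum_i \alpha_i P(\sigma_i) \ket{\tilde\psi_i}$. Decomposing along $G := \{i : \sigma_i > \sqrt{s}\}$ and $B := \{i : \sigma_i \leq \sqrt{s}\}$ writes $\ket{\phi'} = c_G \ket{\phi'_G} + c_B \ket{\phi'_B}$ with $|c_B|^2 \leq \epsilon^2/p' \leq 2^{-l}$. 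A direct calculation shows that the normalized good part $\ket{\phi'_G}$ equals \emph{exactly} the post-selected resulting state of the original $V_x$ on the reverse-engineered witness $\ket{\hat w} \propto \sum_{i \in G} (\alpha_i P(\sigma_i)/\sigma_i) \ket{w_i}$, and that the acceptance probability of $V_x$ on $\ket{\hat w}$ exceeds $s$, because $\sigma_i^2 > s$ on $G$ forces $\sum_{i \in G} |\alpha_i P(\sigma_i)/\sigma_i|^2 < (1/s)\sum_{i \in G}|\alpha_i P(\sigma_i)|^2$. Invoking the original soundness then gives $\td(\ketbra{\phi'_G}{\phi'_G}, \psi_x) < \delta$, and combining with \Cref{lemma:td-vs-fidelity} to absorb the $2^{-l/2}$-small bad component yields the required $\td(\ketbra{\phi'}{\phi'}, \psi_x) < \delta$; the additional $O(\log(1/\delta))$ slack in $\epsilon$ only inflates the polynomial degree by an $O(\log(1/\delta))$ factor, which remains within the claimed $O(l/(\sqrt{c}-\sqrt{s}))$ repetition count.

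The hardest step is the soundness direction: a naive triangle inequality over the individual bounds $\td(\ketbra{\tilde\psi_i}{\tilde\psi_i}, \psi_x) < \delta$ fails because orthogonal singular vectors cannot jointly lie in a small $\delta$-ball around $\ket{\psi_x}$. The reverse-engineering trick above bypasses this entirely by reducing soundness of $V'_x$ on an arbitrary superposition witness to the original soundness hypothesis applied to a single concrete witness, which is precisely the mechanism by which QSVT-based amplification becomes simultaneously witness-preserving and resulting-state-preserving.
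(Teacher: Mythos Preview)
Your approach coincides with the paper's: form $A_x=\Pi_{\Out}V_x\Pi_{\In}$, set $a=\sqrt{s}$, $b=\sqrt{c}$, $\epsilon=2^{-l}$, and invoke \Cref{thm:time-efficient-singular-value-discrimination} so that QSVT replaces $\sum_i\sigma_i\ketbra{\tilde\psi_i}{\psi_i}$ by $\sum_iP(\sigma_i)\ketbra{\tilde\psi_i}{\psi_i}$ with both singular-vector families intact. The paper's proof stops at exactly that observation (``since both left and right singular vectors are invariant in $f(M_x)$, the resulting state and the witness state are clearly preserved''); your explicit soundness analysis for superposition witnesses via the reverse-engineered $\ket{\hat w}\propto\sum_{i\in G}(\alpha_iP(\sigma_i)/\sigma_i)\ket{w_i}$, together with the check that $V_x$ accepts $\ket{\hat w}$ with probability strictly exceeding $s$, goes beyond what the paper spells out and is correct as far as it goes.

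The final step does not close, however. From $\td(\ketbra{\phi'_G}{\phi'_G},\psi_x)<\delta$ and $\td(\ketbra{\phi'}{\phi'},\ketbra{\phi'_G}{\phi'_G})\le|c_B|$ the triangle inequality gives only $\td(\ketbra{\phi'}{\phi'},\psi_x)<\delta+|c_B|$. Shrinking $\epsilon$ to $2^{-l-O(\log(1/\delta))}$ makes $|c_B|$ small but strictly positive, and the original soundness hypothesis supplies no quantitative slack below $\delta$ into which a positive perturbation can be absorbed: if the dominant left singular vector already sits at distance $\delta-\eta$ from $\psi_x$ for arbitrarily small $\eta>0$ while a sub-threshold component points away from $\psi_x$, the new resulting state can land at distance $\ge\delta$. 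What your argument actually establishes is $\stateQMA_\delta[c,s]\subseteq\stateQMA_{\delta+2^{-\Omega(l)}}[1-2^{-l},2^{-l}]$. The paper's one-line justification does not confront this point either, and for the downstream uses (\Cref{thm:stateQMAlog-in-stateBQP} and \Cref{thm:statePreciseQMA-in-statePSPACE}) the negligible additive loss is immaterial, but your claim that $\delta$ is preserved \emph{exactly} is not supported by the argument as written.
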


It is noteworthy that the proof of \Cref{thm:error-reduction-stateQMA} can be understood as a sequential repetition of the original \stateQMA{} verifier in a clever manner. In particular, the number of repetitions is $O(l/(\sqrt{c}-\sqrt{s}))$, implying that the circuit size of the resultant \stateQMA{} verifier remains polynomial as long as promise gap $c-s$ is at least polynomially small. 

\begin{proof}[Proof of \Cref{thm:error-reduction-stateQMA}] Our proof closely relates to Theorem 38 in~\cite{GSLW19} besides an additional analysis on resulting states. 

\paragraph{Amplifying the promise gap by QSVT.}
Note that the acceptance probability of a \stateQMA{} verifier $V_x$ taking $\ket{w}$ as a quantum witness is 
$\Pr{V_x \text{ accepts } \ket{\psi}} = \| \ket{1}\bra{1}_{\Out} V_x \ket{0^k}\ket{w} \|_2^2 \geq c$ or $\leq s$.
Then consider a projected unitary encoding $M_x\coloneqq \Pi_{\Out} V_x \Pi_{\In}$ such that
$\| M_x \| \geq \sqrt{c}$ or $\sqrt{s}$ where $\Pi_{\In}\coloneqq  (\ketbra{0}{0}^{\otimes k}\otimes I_m)$ and $\Pi_{\Out}\coloneqq  \ket{1}\bra{1}_{\Out} \otimes I_{m+k-1}$. 
Since $\|M_x\|=\sigma_{\max}(M_x)$ where $\sigma_{\max}(M_x)$ is the largest singular value of $M_x$, it suffices to distinguish the cases where the largest singular value of $M_x$ is either below $\sqrt{s}$ or above $\sqrt{c}$. By setting $a=\sqrt{s},b=\sqrt{c}$, and $\varepsilon=2^{-l(n)}$, this task is a straightforward corollary of the time-efficient singular value discrimination (\Cref{thm:time-efficient-singular-value-discrimination}). 

\paragraph{QSVT preserves both the witness state and the resulting state.} 
Utilizing notations of \Cref{def:SVD-projected-unitary}, we notice that the construction in the proof of \Cref{thm:time-efficient-singular-value-discrimination} essentially maps $M_x=\sum_i \sigma_i \ket{\tilde{\psi}_i}\bra{\psi_i}$ to $f(M_x)=\sum_i f(\sigma_i) \ket{\tilde{\psi}_i}\bra{\psi_i}$, for an odd polynomial $f$, such that $f(x)\in[1-\varepsilon,1]$ if $x \geq b$ and $f(x)\in[0,\varepsilon]$ if $x \leq a$ for any $0 \leq x \leq 1$. 
Since both left and right singular vectors are invariant in $f(M_x)$, the resulting state and the witness state are clearly preserved after reducing errors. 
\end{proof}

Using the space-bounded quantum singular value discrimination (\Cref{thm:space-bounded-singular-value-discrimination}), we will now improve \Cref{thm:error-reduction-stateQMA} to $\stateQMAPSPACEoff$, which may have \textit{exponential precision}. This is partially analogous to~\cite{FKLMN16} since \Cref{thm:space-bounded-singular-value-discrimination} works merely for \textit{polynomial space}.\footnote{To achieve a state-synthesizing analog of error reduction for unitary quantum logspace~\cite{FKLMN16}, a \textit{space-efficient} version of \Cref{thm:space-bounded-singular-value-discrimination} is necessary. Such a space-efficient variant of QSVT~\cite{LGLW23} became available only after our work. By replacing \Cref{thm:space-bounded-singular-value-discrimination} with this technique in the proof of \Cref{thm:error-reduction-bounded-stateQMA} and \Cref{thm:error-redution-stateUPSPACE}, we can establish error reduction for state-synthesizing unitary quantum \textit{logspace} classes $\mathsf{stateBQ_{U}L}_{\delta}$ and $\mathsf{stateQMA_{U}L}_{\delta}^\mathsf{off}$, as well as the equivalence $\mathsf{stateQMA_{U}L}_{\delta}^\mathsf{off}=\mathsf{stateBQ_{U}L}_{\delta}$, which is fully analogous to~\cite{FKLMN16}. }

\begin{theorem}[Error reduction for $\stateQMAPSPACEoff$]
    \label{thm:error-reduction-bounded-stateQMA}
    Given any efficiently computable functions $c(n)$ and $s(n)$, also $\delta(n)$ such that $0 \leq s(n) < c(n) \leq 1$ and $c(n)-s(n) \geq \exp(-\poly(n))$, then for any polynomial $l(n)$,
    \[\stateQMAPSPACEoff_{\delta}[c,s] \subseteq \stateQMAPSPACEoff_{\delta}[1-2^{-l},2^{-l}].\] 
\end{theorem}

In addition, by forcing the input state of the ``verification circuit'' to be an all-zero state in the proof of \Cref{thm:error-reduction-stateQMA}, namely replacing the projector $\Pi_{\In}\coloneqq  (\ketbra{0}{0}^{\otimes k}\otimes I_m)$  with $\Pi'_{\In}\coloneqq  \ketbra{0}{0}^{\otimes k+m}$, we will straightforwardly result in error reduction for \stateBQP{}. 

\begin{theorem}[Error reduction for \stateBQP{}]
\label{theorem:error-reduction-stateBQP}
For any polynomials $p(n)$ such that $1/p(n) \leq \gamma(n) < 1$, we have that for any polynomial $l(n)$, 
\[\stateBQP_{\delta}[\gamma] \subseteq \stateBQP_{\delta}[1-2^{-l}].\] 
\end{theorem}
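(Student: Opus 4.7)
The plan is to adapt the proof of \Cref{thm:error-reduction-stateQMA} by specializing the projected unitary encoding to the witness-free setting, exactly as hinted in the paragraph immediately preceding the statement. Given a circuit $Q_x$ for a family in $\stateBQP_{\delta}[\gamma]$ acting on $k+m$ qubits (where the ``witness register'' now plays no role), I would define the rank-one input projector $\Pi'_{\In} := \ketbra{0^{k+m}}{0^{k+m}}$, keep the acceptance projector $\Pi_{\Out} := \ket{1}\bra{1}_{\Out} \otimes I_{k+m-1}$, and consider the projected unitary encoding $A_x := \Pi_{\Out} Q_x \Pi'_{\In}$. Because $\Pi'_{\In}$ has rank one, $A_x$ has exactly one non-zero singular value, equal to $\sqrt{\gamma'}$ where $\gamma' \geq \gamma$ is the true success probability of $Q_x$, with right singular vector $\ket{0^{k+m}}$ and left singular vector proportional to $\Pi_{\Out} Q_x \ket{0^{k+m}}$.

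Next, I would invoke \Cref{thm:time-efficient-singular-value-discrimination} with thresholds $a = 0$ and $b = \sqrt{\gamma}$, whose gap $b-a = \sqrt{\gamma} \geq 1/\sqrt{p(n)}$ is inverse-polynomial by assumption. This yields an odd polynomial $f$ of degree $O(\log(1/\varepsilon)/\sqrt{\gamma})$ with $f(x) \in [1-\varepsilon, 1]$ for $x \geq \sqrt{\gamma}$ and $f(0) = 0$ (by oddness), and implementing the corresponding QSVT realizes the transformation $A_x \mapsto f(A_x) = f(\sqrt{\gamma'})\, \ket{\tilde{\psi}}\bra{0^{k+m}}$. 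Choosing $\varepsilon = 2^{-l-1}$ makes the new success probability at least $f(\sqrt{\gamma'})^2 \geq (1-\varepsilon)^2 \geq 1-2^{-l}$, while the number of repetitions remains $O(l/\sqrt{\gamma})$ as claimed.

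The essential observation, already isolated in the \stateQMA{} proof, is that QSVT preserves left singular vectors. Consequently the post-measurement state of the amplified circuit is proportional to $\Pi_{\Out} Q_x \ket{0^{k+m}}$, i.e., it coincides exactly with the original resulting state of $Q_x$, so the original $\delta$-closeness to the target $\ket{\psi_x}$ is retained without any worsening of the distance parameter. I do not anticipate a real obstacle: the construction is strictly simpler than the \stateQMA{} case because the rank-one input projector collapses the singular value decomposition to a single non-trivial mode, eliminating any need to reason about witness states or about a soundness threshold. The only minor care point is ensuring that the chosen polynomial sends the many trivial singular values of $A_x$ back to zero, which follows automatically from the oddness of $f$ already guaranteed by \Cref{thm:meta-singular-value-discrimination}.
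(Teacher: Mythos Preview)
Your proposal is correct and follows essentially the same approach as the paper, which merely states (in the paragraph preceding the theorem) that one replaces $\Pi_{\In}$ by the rank-one projector $\Pi'_{\In}=\ketbra{0}{0}^{\otimes k+m}$ in the proof of \Cref{thm:error-reduction-stateQMA}. You have in fact spelled out more detail than the paper does, including the observation that oddness of $f$ forces the trivial singular values to zero and that the rank-one input projector collapses the SVD to a single mode.
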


\Cref{theorem:error-reduction-stateBQP} implies that $\stateBQP_{\delta} = \cup_{1 \leq \gamma^{-1} \leq \poly(n)} \stateBQP_{\delta}[\gamma]$. Also, combining with \Cref{prop:stateBQP-perfect-completeness}, \stateBQP{} achieves perfect completeness with a slightly modified distance parameter $\delta'$. Particularly, $\stateBQP_{\delta}[\gamma] \subseteq \stateBQP_{\delta'}[1]$ where $\delta'(n)\coloneqq \delta(n)+\exp(-\poly(n))$. 

Likewise, together with the projector $\Pi'_{\In}\coloneqq  \ketbra{0}{0}^{\otimes k+m}$ and the space-bounded quantum singular value discrimination (\Cref{thm:space-bounded-singular-value-discrimination}), we will deduce error reduction for $\stateUPSPACE$.  

\begin{theorem}[Error reduction for $\stateUPSPACE$]
    \label{thm:error-redution-stateUPSPACE}
    For any efficiently computable functions $\gamma(n)$ and $\delta(n)$ such that $\exp(-\poly(n)) \leq \gamma(n) \leq 1$, we have that for any polynomial $l(n)$,
    \[\stateUPSPACE_{\delta}[\gamma] \subseteq \stateUPSPACE_{\delta}[1-2^{-l}].\] 
\end{theorem}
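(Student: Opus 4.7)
The plan is to mirror the proof structure of \Cref{thm:error-reduction-bounded-stateQMA} (space-bounded) combined with the ``no-witness'' trick from \Cref{theorem:error-reduction-stateBQP}. Given a $\stateUPSPACE_{\delta}[\gamma]$ circuit family $\{Q_x\}_{x \in \calL}$ acting on $k(n)$ qubits with all-zero input, the first step is to set up the projected unitary encoding
\[
    M_x := \Pi_{\Out} Q_x \Pi'_{\In}, \qquad \Pi'_{\In} := \ketbra{0}{0}^{\otimes k}, \qquad \Pi_{\Out} := \ketbra{1}{1}_{\Out} \otimes I,
\]
so that the success probability of $Q_x$ equals $\|\Pi_{\Out} Q_x \ket{0^k}\|_2^2 = \sigma_{\max}(M_x)^2 \geq \gamma$. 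Because $\Pi'_{\In}$ has rank one, $M_x$ has a single non-zero singular value $\sigma \geq \sqrt{\gamma}$, and its left singular vector coincides with the normalized resulting state of $Q_x$ upon acceptance.

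Next, I would apply the space-bounded singular value discrimination of \Cref{thm:space-bounded-singular-value-discrimination} with parameters $a = 0$, $b = \sqrt{\gamma}$, and error $\varepsilon = 2^{-l(n)}$. Since $\sqrt{\gamma} \geq \exp(-\poly(n))$, the gap condition $b - a \geq \exp(-\poly(n))$ holds. The resulting QSVT has degree $O(\log(1/\varepsilon)/(b-a)) = O(l/\sqrt{\gamma})$, which gives both the stated number of repetitions and, by \Cref{lemma:phase-modulation}, the single extra ancillary qubit. Because the approximating polynomial comes from \Cref{lemma:space-efficient-approxSign} (with \PSPACE{}-computable Chebyshev coefficients), and the LCU-based implementation used in the proof of \Cref{thm:space-bounded-singular-value-discrimination} acts on $\poly(n)$ qubits with a \PSPACE{}-computable circuit description, space-uniformity of the amplified circuit family is preserved, matching the definition of $\stateUPSPACE$.

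To conclude, I would verify that the output state is doubly preserved. The QSVT implements $f(M_x) = \sum_i f(\sigma_i) \ket{\tilde{\psi}_i}\bra{\psi_i}$ for an odd polynomial $f$ satisfying $f(x) \in [1-\varepsilon, 1]$ whenever $x \geq \sqrt{\gamma}$ and $|f(x)| \leq \varepsilon$ whenever $x \leq 0$. Hence the left singular vector $\ket{\tilde{\psi}_1}$ of $M_x$, which is precisely the normalized post-selected output of $Q_x$, remains the left singular vector associated with the amplified singular value $f(\sigma) \geq 1 - 2^{-l}$. The post-selected resulting state of the new circuit therefore coincides (up to $2^{-l}$-negligible perturbation absorbed into the success probability) with the original one, so the trace distance to $\ket{\psi_x}$ remains bounded by $\delta$.

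The main obstacle is the space-uniformity bookkeeping at \emph{exponential} QSVT degree: angle-finding algorithms used in the time-efficient setting (\Cref{thm:time-efficient-singular-value-discrimination}) do not run in polynomial space when $\gamma$ is exponentially small. This is precisely why one must route through the Chebyshev-plus-LCU construction underlying \Cref{thm:space-bounded-singular-value-discrimination}, rather than directly invoking \Cref{lemma:phase-modulation} with explicitly computed rotation angles. Once this substitution is made, the rest of the argument is a direct transcription of the \stateBQP{} error reduction with the stronger space-bounded QSVT primitive.
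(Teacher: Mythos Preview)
Your proposal is correct and follows essentially the same approach as the paper, which derives \Cref{thm:error-redution-stateUPSPACE} by combining the all-zero input projector $\Pi'_{\In}$ (as in the \stateBQP{} error reduction) with the space-bounded singular value discrimination of \Cref{thm:space-bounded-singular-value-discrimination}. One minor inconsistency: you justify the ``single extra ancillary qubit'' via \Cref{lemma:phase-modulation} but then (correctly) argue that the LCU-based route of \Cref{thm:space-bounded-singular-value-discrimination} is required in place of explicit angle-finding; the LCU selector actually costs $O(\log d)$ ancillas, though this is still $\poly(n)$ and the paper's own theorem statement is equally loose on this point.
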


A direct consequence of \Cref{thm:error-redution-stateUPSPACE} is that $\stateUPSPACE_{\delta} = \cup_{\gamma \geq 2^{-\poly(n)}} \stateUPSPACE_{\delta}[\gamma]$. Similarly, by utilizing \Cref{prop:stateBQP-perfect-completeness}, we can also conclude that 
\[\stateUPSPACE_{\delta}[\gamma] \subseteq \stateUPSPACE_{\delta'}[1] \text{ where } \delta'(n)=\delta(n)+\exp(-\poly(n)).\]

\subsection{Application 1: \stateQMA{} with a short message is as weak as \stateBQP{}}

Recall that $\stateQMA_{\log}$ is a variant of \stateQMA{} where the witness state is logarithmic-size. 
Analogous to Theorem 3.10 in~\cite{MW05}, a short message is also useless for \stateQMA{}. 
\begin{theorem}[A short message is useless for \stateQMA{}]
\label{thm:stateQMAlog-in-stateBQP}
For any $0 \leq s(n) < c(n) \leq 1$ and $c(n)-s(n) \geq 1/\poly(n)$, there exists a polynomial $q(n)$ such that
\[\stateQMA_{\delta}[\log,c,s] \subseteq \stateBQP_{\delta}[1/q].\] 
\end{theorem}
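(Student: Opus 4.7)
The plan is to follow the strategy suggested in the introduction's discussion of applications of error reduction: first amplify the verifier, then feed it the maximally mixed witness, which carries non-negligible overlap with the optimal witness precisely because the witness register is only logarithmic. I would begin by applying \Cref{thm:error-reduction-stateQMA} to the given $\stateQMA_{\delta}[\log,c,s]$ verifier $V_x$, obtaining an amplified verifier $V'_x$ with completeness $1-2^{-l(n)}$ and soundness $2^{-l(n)}$ for a polynomial $l(n)$ chosen below; crucially, error reduction preserves the witness length, so $V'_x$ still takes an $O(\log n)$-qubit witness.

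Next, I would define the \stateBQP{} circuit $Q_x$ to prepare the maximally mixed state $I_{\sfW}/2^{|w|}$ on the witness register (for instance, by generating $|w|$ EPR pairs between $\sfW$ and a fresh purifying register $\sfP$ that is subsequently traced out), then run $V'_x$ and output its accept signal. To bound the acceptance probability from below, I would invoke the quadratic-form observation used in the proof of \Cref{thm:error-reduction-stateQMA}: letting $M_x := \bra{\bar{0}}(V'_x)^\dagger \Pi_{\Out} V'_x \ket{\bar{0}}$, the largest eigenvalue of $M_x$ is at least $1-2^{-l}$ by the completeness of $V'_x$, hence
\[
\Pr{Q_x \text{ accepts}} \;=\; 2^{-|w|}\Tr(M_x) \;\geq\; 2^{-|w|}\lambda_{\max}(M_x) \;\geq\; 2^{-|w|}(1-2^{-l}),
\]
which is at least $1/q(n)$ for some polynomial $q$ since $|w|=O(\log n)$. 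Choosing $l := |w|+2$ additionally guarantees $\Pr{Q_x \text{ accepts}} > 2^{-l}$, so invoking the soundness condition of $V'_x$ on the mixed witness $I_{\sfW}/2^{|w|}$ yields $\td(\rho_x, \ket{\psi_x}\bra{\psi_x}) \leq \delta$ for the resulting state $\rho_x$ of $Q_x$, which is exactly what is needed for membership in $\stateBQP_{\delta}[1/q]$.

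The main subtlety — and where I expect the principal obstacle to lie — is justifying that the soundness condition of \stateQMA{} from \Cref{def:stateQMA}, literally stated for a witness ``$\ket{w}$'', applies to a mixed-state input such as $I_{\sfW}/2^{|w|}$. The cleanest justification is via purification: any mixed state $\sigma$ on $\sfW$ is the reduced state of some pure state on $\sfW\otimes\sfP$, and since $V'_x$ acts trivially on the purifying register $\sfP$, both the acceptance probability and the output's reduced state on the non-purifying registers coincide with those produced by running $V'_x$ on $\sigma$ directly; the pure-state soundness therefore transfers intact to the mixed witness $I_{\sfW}/2^{|w|}$. A direct convexity argument via an eigendecomposition of $M_x$ is also possible, but it introduces an additive error of order $2^{|w|-l}/(1-2^{-l})$ in the trace distance that would have to be argued away separately, so the purification route is preferable for matching the $\delta$ on both sides of the claimed inclusion.
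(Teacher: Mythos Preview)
Your overall strategy---amplify via \Cref{thm:error-reduction-stateQMA}, then feed the maximally mixed state on the $O(\log n)$-qubit witness register and lower-bound the acceptance probability by $2^{-|w|}\lambda_{\max}(M_x)$---is exactly the paper's proof. The paper in fact disposes of the closeness step in one line: it simply asserts that since the acceptance probability on $\tilde I_m$ exceeds the soundness threshold, \Cref{def:stateQMA} gives $\td(\rho_{\tilde I_m},\psi_x)\le\delta$, without ever addressing the pure-versus-mixed issue you raise.

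Your purification resolution, however, is circular. Applying the literal soundness clause of \Cref{def:stateQMA} to the purification $\ket{\Psi}$ on $\sfW\otimes\mathsf{P}$ would require that $V'_x\otimes I_{\mathsf{P}}$ satisfy the \stateQMA{} soundness condition for $2m$-qubit pure witnesses; but via the Schmidt decomposition that condition is \emph{equivalent} to the mixed-witness soundness of $V'_x$ on $\sfW$ you are trying to establish. And it is genuinely not automatic: one can build verifiers with pure-state soundness $s$ that accept some mixture with probability above $s$ while the conditional output sits more than $\delta$ away from the target, so ``transfers intact'' is false as stated. The convexity route you set aside is in fact the honest one: decomposing $\tilde I_m$ in the eigenbasis of $M_x$ gives $\td(\rho_{\tilde I_m},\psi_x)\le\delta+2^{|w|-l}/(1-2^{-l})$, and taking $l$ polynomially larger than $|w|$ drives the surplus to negligible---which is what both your argument and the paper's one-line appeal to \Cref{def:stateQMA} really amount to.
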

\begin{proof}
    Consider a state family $\{\ket{\psi_x}\}_{x\in\calL}$ that is in $\stateQMA_{\delta}[\log,c,s]$, we then notice that this state family is also in $\stateQMA_{\delta}[\log,1-2^{-p},2^{-p}]$ where $p(n)$ is a polynomial of $n\coloneqq |x|$ after performing error reduction (\Cref{thm:error-reduction-stateQMA}). Now let $\{V_x\}_{x \in \calL}$ be the family of quantum verifiers with negligible errors. 

    \paragraph{Removing the witness by a ``random-guess'' state.} 
    Now consider a \stateBQP{} algorithm that applies $V_x$ with the witness being a completely mixed state $\tilde{I}_m\coloneqq 2^{-m}I_m$ on $m(n)=O(\log{n})$ qubits. It accepts iff $V_x$ accepts. 
    For the analysis, we define $M_x\coloneqq (\ket{1}\bra{1}_{\Out}\otimes \ketbra{0}{0}^{\otimes m+k-1}) V_x (I_m\otimes \ketbra{0}{0}^{\otimes k})$. Then $\Pr{V_x \text{ accepts} \ket{w}}=\|M_x \ket{w}\|_2^2$, which implies that the acceptance probability of the \stateBQP{} algorithm is
    \begin{equation}
    \label{eq:pacc-stateQMAlog-in-stateBQP}
    \mathrm{Pr}\big[V_x \text{ accepts } \tilde{I}_m\big]=\Tr(M_x^{\dagger} M_x 2^{-m}I_m)=2^{-m} \Tr(M_x^{\dagger} M_x)=2^{-m}\lambda_{\max}(M_x^{\dagger}M_x),
    \end{equation}
    where $\lambda_{\max}(M_x^{\dagger}M_x)$ is the largest eigenvalue of $M_x^{\dagger}M_x$. 
    For the completeness, there exists $\ket{w}$ such that $\lambda_{\max}(M_x^{\dagger}M_x) \geq \Pr{V_x \text{ accepts } \ket{w}} \geq 1-2^{-p(n)}$. Plugging it into \Cref{eq:pacc-stateQMAlog-in-stateBQP}, we have $\Pr{V_x \text{ accepts } \tilde{I}_m} \geq 2^{-m(n)}(1-2^{-p(n)}) \geq 1/q(n)$, where $q(n)\coloneqq 2^{O(m(n))}$ is a polynomial of $n\coloneqq |x|$. 
    For the soundness, we obtain $\Pr{V_x \text{ accepts } \tilde{I}_m} \geq 2^{-p(n)}$, it guarantees that $\td(\rho_{\tilde{I}_m},\psi_x) \leq \delta$ owing to \Cref{def:stateQMA}. 
\end{proof}

We remark that \Cref{thm:stateQMAlog-in-stateBQP} straightforwardly adapts to any \stateQMA{} verifier with witness states of polynomial-size, which is a state-synthesizing counterpart of $\QMA{} \subseteq \PP$~\cite{Vyalyi03,MW05}.

\begin{proposition}[``$\stateQMA{} \subseteq \mathsf{statePP}$'']
$\stateQMA{}_\delta \subseteq \statePreciseBQP{}_\delta$. 
\end{proposition}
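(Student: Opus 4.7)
The plan is to mimic \Cref{thm:stateQMAlog-in-stateBQP}, replacing the polynomial-size witness by the maximally mixed state on the witness register; the only sacrifice compared with the logarithmic case is that the success probability now drops to $\exp(-\poly(n))$, which is precisely what \statePreciseBQP{} admits.

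Concretely, given $\{\ket{\psi_x}\}_{x\in\calL}\in\stateQMA_{\delta}[c,s]$ with $c-s\geq 1/\poly(n)$, I would first invoke \Cref{thm:error-reduction-stateQMA} to obtain a uniformly generated verifier family $\{V_x\}$ on $m(n)$ witness qubits and $k(n)$ ancillas with completeness $1-2^{-p(n)}$ and soundness $2^{-p(n)}$, where $p(n)$ is a polynomial to be chosen much larger than $m(n)$. I would then define a \statePreciseBQP{} algorithm $Q_x$ that prepares the maximally mixed state $\tilde{I}_m:=2^{-m}I_m$ on the witness register (via Hadamards on fresh ancillas that are then traced out) and then runs $V_x$, reusing its output qubit. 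Setting $M_x:=(\ket{1}\bra{1}_{\Out}\otimes I_{m+k-1})\, V_x\, (I_m\otimes \ketbra{0^k}{0^k})$, the same cyclic-trace computation as in \Cref{eq:pacc-stateQMAlog-in-stateBQP} gives
\[
\Pr{Q_x \text{ succeeds}} \;=\; 2^{-m}\Tr(M_x^\dagger M_x) \;\geq\; 2^{-m}\lambda_{\max}(M_x^\dagger M_x) \;\geq\; 2^{-m}(1-2^{-p(n)}),
\]
which is $\exp(-\poly(n))$, meeting the \statePreciseBQP{} success-probability requirement.

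The main obstacle is showing that the resulting state is still $\delta$-close to $\ket{\psi_x}$; unlike in \Cref{thm:stateQMAlog-in-stateBQP}, the number of potentially ``bad'' directions in the witness register is now $2^{m(n)}$, exponentially large, so a single invocation of the soundness promise on $\tilde{I}_m$ will not suffice. To handle this, I would take the singular value decomposition $M_x=\sum_i \sigma_i \ket{\tilde{v}_i}\bra{v_i}$ and decompose $\tilde{I}_m = 2^{-m}\sum_i \ketbra{v_i}{v_i}$ in the right singular basis, giving
\[
\rho_{x,\tilde{I}_m} \;=\; \frac{\sum_i \sigma_i^2\,\rho_{x,v_i}}{\sum_i \sigma_i^2}.
\]
Split the indices into $G:=\{i:\sigma_i^2>2^{-p(n)}\}$ and $B:=\{i:\sigma_i^2\leq 2^{-p(n)}\}$. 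The contrapositive of the $\stateQMA_{\delta}[1-2^{-p},2^{-p}]$ soundness condition yields $\td(\rho_{x,v_i},\psi_x)<\delta$ for every $i\in G$, while the error-reduced completeness forces $\sigma_{\max}^2\geq 1-2^{-p(n)}$, so $\sum_{i\in G}\sigma_i^2\geq 1-2^{-p(n)}$ whereas $\sum_{i\in B}\sigma_i^2\leq |B|\cdot 2^{-p(n)}\leq 2^{m(n)-p(n)}$. Viewing $\rho_{x,\tilde{I}_m}$ as a convex combination of the $G$-averaged and the $B$-averaged conditional outputs, the $B$-fraction is at most $O(2^{m(n)-p(n)})$; combining this with convexity of $\td$ on the $G$-part gives $\td(\rho_{x,\tilde{I}_m},\psi_x)\leq \delta + O(2^{m(n)-p(n)})$. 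Choosing $p(n) = m(n)+q(n)$ for any super-constant polynomial $q$ makes the overhead exponentially small, to be absorbed into $\delta$ under the same convention used in \Cref{prop:stateBQP-perfect-completeness} and \Cref{thm:stateQMAlog-in-stateBQP}.
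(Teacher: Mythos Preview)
Your proof is correct and follows the paper's approach exactly: error-reduce via \Cref{thm:error-reduction-stateQMA}, then feed the verifier the maximally mixed state on the witness register, just as in \Cref{thm:stateQMAlog-in-stateBQP} but now with polynomial $m(n)$. The paper's own proof is a two-line appeal to that analogy, with the chain $\stateQMA_{\delta}[m,c,s] \subseteq \stateQMA_{\delta}[m,1-2^{-m'},2^{-m'}] \subseteq \stateBQP_{\delta}[2^{-m}]$ for $m'=m\cdot n^2$.

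Your SVD/convexity argument is a welcome addition: it makes rigorous the step where the soundness promise is applied to the \emph{mixed} input $\tilde I_m$, whereas \Cref{def:stateQMA} is stated for pure witnesses; the paper simply invokes soundness on $\tilde I_m$ directly. One correction to your framing, though: this subtlety is not special to the polynomial-witness case. The exact same gap between pure-state soundness and a mixed input is already present in \Cref{thm:stateQMAlog-in-stateBQP}, and the paper treats it identically there (``it guarantees that $\td(\rho_{\tilde I_m},\psi_x)\leq\delta$ owing to \Cref{def:stateQMA}''). Your decomposition into $G$ and $B$ works equally well in both regimes and gives $\td\leq\delta+O(2^{m-p})$; the only difference is that here $2^m$ is exponential, forcing $p$ to dominate $m$, which your choice $p=m+q$ (or the paper's $m'=m\cdot n^2$) ensures.
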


\begin{proof}
    Consider a family of quantum states $\{\ket{\psi_x}\}_{n\in\bbN} \in \stateQMA_m[c,s,\delta]$ where $m(n)$ is the size of the witness state. Analogous to \Cref{thm:stateQMAlog-in-stateBQP}, we have derived that
    \[\stateQMA_{\delta}[m,c,s] \subseteq \stateQMA_{\delta}[m,1-2^{-m'}, 2^{-m'}] \subseteq \stateBQP_{\delta}[2^{-m}],\]
    where $m'(n)=m(n)\cdot n^2$. 
\end{proof}

\subsection{Application 2: \statePreciseQMA{} is in \statePSPACE{}}

Finally, we provide a state-synthesizing analogue of $\PreciseQMA\subseteq\BQPSPACE$~\cite{FL16,FL18}.

\begin{theorem}
    \label{thm:statePreciseQMA-in-statePSPACE}
    $\statePreciseQMA_{\delta} \subseteq \stateUPSPACE_{\delta}$.
\end{theorem}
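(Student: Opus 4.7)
The plan is to transplant the argument of \Cref{thm:stateQMAlog-in-stateBQP} into the exponential-precision, polynomial-space regime, using the ingredients already assembled in \Cref{sec:error-reduction-for-stateQMA}. The starting observation is that any polynomial-time verifier is a fortiori polynomial-space-uniform, so a $\statePreciseQMA_{\delta}[c,s]$ state family (with $c-s\geq \exp(-\poly(n))$) automatically sits in $\stateQMAPSPACEoff_{\delta}[c,s]$. In particular it is eligible for the polynomial-space error reduction of \Cref{thm:error-reduction-bounded-stateQMA}, which is the only tool strong enough to amplify exponentially small gaps without leaving \PSPACE{}.

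Concretely, I would proceed in three steps. First, amplify: given the $\statePreciseQMA$ verifier family $\{V_x\}$ with witness register of polynomial length $m(n)$, apply \Cref{thm:error-reduction-bounded-stateQMA} to obtain a polynomial-space-uniform verifier family $\{V'_x\}$ with completeness $1-2^{-p}$ and soundness $2^{-p}$ for some polynomial $p$ to be chosen shortly, while keeping the witness length polynomial and, crucially, the distance parameter $\delta$ unchanged. Second, remove the witness: run $V'_x$ on the completely mixed witness $\tilde{I}_m := 2^{-m}I_m$, exactly as in the proof of \Cref{thm:stateQMAlog-in-stateBQP}. Writing $M_x := (\ket{1}\bra{1}_{\Out}\otimes \ketbra{0}{0}^{\otimes m+k-1}) V'_x (I_m\otimes \ketbra{0}{0}^{\otimes k})$, the acceptance probability on $\tilde{I}_m$ equals $2^{-m}\Tr(M_x^\dagger M_x) \geq 2^{-m}\lambda_{\max}(M_x^\dagger M_x) \geq 2^{-m}(1-2^{-p})$, which is at least $\exp(-\poly(n))$ and, for $p>m+1$, strictly exceeds the soundness threshold $2^{-p}$ of $V'_x$. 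The soundness clause of $V'_x$ then guarantees that the (post-selected) resulting state is $\delta$-close to $\ket{\psi_x}$. Third, since the overall procedure is polynomial-space-uniform, takes no external input, and succeeds with inverse-exponential probability, it is a $\stateUPSPACE_\delta[\gamma]$ protocol with $\gamma\geq \exp(-\poly(n))$; invoking \Cref{thm:error-redution-stateUPSPACE} boosts the success probability while again leaving $\delta$ untouched.

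The main subtlety — and the reason this statement deserves its own proof rather than appealing to the $\stateQIP=\statePSPACE$ machinery of Metger--Yuen — is that the distance parameter $\delta$ must be preserved \emph{exactly}, whereas the naive chain $\statePreciseQMA\subseteq\stateQIP\subseteq\statePSPACE$ worsens $\delta$ by an inverse polynomial. I expect this to be the only delicate point: I need to check that each of the three operations above is genuinely $\delta$-preserving. For steps one and three this follows from the QSVT construction underlying \Cref{thm:error-reduction-bounded-stateQMA} and \Cref{thm:error-redution-stateUPSPACE}, which only reshapes singular values while leaving the corresponding left/right singular vectors (and hence the post-selected output state) intact. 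For step two it follows because the witness-replacement trick invokes the original soundness clause of $V'_x$ verbatim — once the acceptance probability exceeds the soundness threshold, closeness of the output to $\ket{\psi_x}$ is immediate from \Cref{def:stateQMA}. With this bookkeeping done and \Cref{thm:time-efficient-singular-value-discrimination} replaced everywhere by its space-bounded analogue \Cref{thm:space-bounded-singular-value-discrimination}, the inclusion $\statePreciseQMA_\delta \subseteq \stateUPSPACE_\delta$ follows.
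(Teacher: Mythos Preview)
Your proposal is correct and follows essentially the same three-step route as the paper: embed $\statePreciseQMA_\delta$ into $\stateQMAPSPACEoff_\delta$, apply the space-bounded doubly-preserving error reduction (\Cref{thm:error-reduction-bounded-stateQMA}) to drive the gap to $[1-2^{-p},2^{-p}]$, swap the witness for the maximally mixed state exactly as in \Cref{thm:stateQMAlog-in-stateBQP}, and then invoke \Cref{thm:error-redution-stateUPSPACE}. Your added discussion of why each step preserves~$\delta$ exactly (rather than incurring the $1/\poly$ loss of the $\stateQIP=\statePSPACE$ route) is a welcome elaboration of the point the paper makes more tersely.
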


\begin{proof}
    Consider a family of quantum states $\{\ket{\psi_x}\}_{x\in\calL}$ that is in $\statePreciseQMA_{\delta}[m,c,s]$ where $m(n)$ is the size of witness state. We begin by observing that 
    \[\statePreciseQMA_{\delta}[m,c,s] \subseteq \mathsf{stateQMA_{U}PSPACE}^\mathsf{off}_{\delta}[m,c,s].\] 
    Then we replace the witness with a ``random-guess'' state analogous to the proof of \Cref{thm:stateQMAlog-in-stateBQP}. Utilizing error reduction for $\stateQMAPSPACEoff$ (\Cref{thm:error-reduction-bounded-stateQMA}), we have derived that 
    \[\statePreciseQMA_{\delta}[m,c,s] \subseteq \mathsf{stateQMA_{U}PSPACE}^\mathsf{off}_{\delta}[m,1-2^{-m'},2^{-m'}] \subseteq \stateUPSPACE_{\delta}[2^{-m}],\]
    where $m'(n)=m(n)\cdot n^2$. Employing error reduction for $\stateUPSPACE$ (\Cref{thm:error-reduction-bounded-stateQMA}), we conclude that 
    \[\statePreciseQMA_{\delta}[m,c,s] \subseteq \stateUPSPACE_{\delta}[2^{-m}] \subseteq \stateUPSPACE_{\delta}[1-2^{-l}],\]
    where $l(n)$ is a polynomial of $n\coloneqq |x|$. This completes the proof. 
\end{proof}


\section{\UQMA{} witness family is in \stateQMA{}}

In this section, we will provide a natural example of \stateQMA{} state families. Specifically, the family of \textit{yes}-instance witness states of verifiers corresponding to $\calL_{\rm yes}$, where $(\calL_{\rm yes},\calL_{\rm no})\in \UQMA$, is in \stateQMA{}: 

\begin{theorem}[$\UQMA_{1-\nu}$ witness family is in \stateQMA{}]
    \label{thm:UQMA-witness-in-stateQMA-scaling}
    For any promise problem $\calL = (\calL_{\rm yes},\calL_{\rm no})$ that is in $\UQMA_{1-\nu}$, the family of unique witness states $\{\ket{w_x}\}_{x\in \calL_{\rm yes}}$ corresponding to \textit{yes} instances is in $\stateQMA_{\delta}[c',s']$, where $c'=(1+\cos{\nu})/2-\epsilon$, $s'=(1+\cos(\lambda_1 t))(2+\cos(\nu t) - \cos(\lambda_1 t))/4+\epsilon$, and $\delta=O(\max\{(1-\cos{\nu})/2,\delta_1,q(\nu\delta_1 t/T^3)\})$ for some polynomial $q$. 
    Here, let $H^{(x)}$ be the Hamiltonian corresponding to a \UQMA{} verifier $V_x$ with the parameter $\delta_1$ specified in \Cref{lemma:Hamiltonian-from-UQMAverifier}, $\lambda_1$ is the second smallest eigenvalue of $H^{(x)}$ respectively, $T$ is the size of verification circuit $V_x$, as well as $t$ satisfies that $\lambda_{\max}\big(H^{(x)}\big) t \leq \pi$, and $\epsilon$ is the implementation error of $\exp(-iH^{(x)}t)$. 
\end{theorem}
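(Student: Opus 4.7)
The plan is to convert the $\UQMA_{1-\nu}$ verifier $V_x$ into a local Hamiltonian $H^{(x)}$ whose unique ground state encodes a history state containing $\ket{w_x}$, and then to use a one-bit-precision phase estimation (Hadamard test) for $\exp(-iH^{(x)}t)$ as the \stateQMA{} verification procedure. The verifier $V'_x$ will (i) take the prover's $\ket{w}$ and coherently build the corresponding history state $\ket{\eta_w}=\sum_t\sqrt{\pi_t}\ket{t}\otimes V_t\cdots V_1\ket{w}\ket{0^k}$ via a reversible unitary $W_x$, (ii) run the Hadamard test as the accept/reject gadget, and (iii) apply $W_x^\dagger$ after acceptance so that the witness register is exposed as the output.

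\textbf{Hamiltonian step.} I first invoke a weighted circuit-to-Hamiltonian construction in the style of BC18 / BCNY19, with weights $\pi_t$ tuned via the path-graph Metropolis chain of Lemma~\ref{lemma:RW-on-path}, to produce $H^{(x)}$ such that (a) its ground state is $\delta_1$-close in trace distance to $\ket{\eta_{w_x}}$, (b) its ground energy equals $\nu$, and (c) its spectral gap satisfies $\Delta(H^{(x)})\geq q(\nu\delta_1/T^3)$ for some polynomial $q$, so that $\lambda_1\geq\nu+\Delta(H^{(x)})$. The weights are arranged so that tracing out the clock register of $\ket{\eta_{w_x}}$ leaves a constant overlap with $\ket{w_x}$ in the witness register, which is what makes the final $W_x^\dagger$ step useful.

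\textbf{Completeness, soundness and distance.} For completeness, plugging $\ket{w}=\ket{w_x}$ yields an input to the Hadamard test that is $\delta_1$-close to a true eigenvector of $H^{(x)}$ of eigenvalue $\nu$; since the Hadamard test on an eigenvector of eigenvalue $\lambda$ accepts with probability $(1+\cos(\lambda t))/2$, and the space-efficient Hamiltonian simulation of Lemma~\ref{lemma:space-efficient-Hamiltonian-simulation} contributes an operator-norm error $\epsilon$, the acceptance probability is at least $(1+\cos(\nu t))/2-O(\delta_1)-\epsilon$, matching the claimed $c'$. For soundness I argue contrapositively: expand an arbitrary witness as $\ket{w}=\alpha\ket{w_x}+\beta\ket{w_x^\perp}$, use linearity of $W_x$ together with the \UQMA{} promise to show that the image of $\ket{w_x^\perp}$ sits, up to $\delta_1$, in eigenspaces of $H^{(x)}$ of energy at least $\lambda_1$, and then compute both the Hadamard-test acceptance probability and the fidelity of the uncomputed witness register with $\ket{w_x}$ as functions of $|\alpha|$; eliminating $|\alpha|$ between these two quantities via the Fuchs--van de Graaf inequality (Lemma~\ref{lemma:td-vs-fidelity}) yields both the stated $s'$ and the trace-distance bound $\delta=O(\max\{(1-\cos\nu)/2,\delta_1,q(\nu\delta_1 t/T^3)\})$.

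\textbf{Main obstacle.} The hardest part is the \emph{joint} error analysis: \stateQMA{} demands a trace-distance guarantee on the output state in addition to the gap between completeness and soundness, so one must simultaneously track (a) the $\delta_1$ deviation of $\ket{\eta_{w_x}}$ from the true ground state of $H^{(x)}$, (b) the $\epsilon$ error from space-efficient Hamiltonian simulation, (c) the post-selection-induced distortion of the Hadamard test, and (d) the effect of uncomputing $W_x$ on the partial-trace structure of the resulting state. The free parameter $t$ must be chosen large enough that $(\lambda_1-\nu)t$ opens a useful gap in the acceptance curve $(1+\cos(\cdot\,t))/2$, yet small enough that $\lambda_{\max}(H^{(x)})t\leq\pi$ so the cosine remains monotone on the relevant eigenvalues; balancing this against the spectral-gap lower bound from the circuit-to-Hamiltonian construction, and propagating contractivity and convexity of the trace distance through every stage of the protocol, is what ultimately produces the somewhat baroque form of $s'$.
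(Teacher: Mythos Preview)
Your proposal shares the paper's core architecture—weighted circuit-to-Hamiltonian followed by a Hadamard test on $\exp(-iH^{(x)}t)$—but differs in one structural choice: you have the \emph{verifier} build the history state $\ket{\eta_w}$ from the prover's raw \UQMA{} witness $\ket{w}$ via a unitary $W_x$, and then uncompute via $W_x^\dagger$; the paper instead lets the \emph{prover} send an arbitrary state on the joint work-plus-clock register (ideally $\ket{\Omega_x}$ itself) and uncomputes only with $V_x^\dagger$ on the work register before tracing out the clock. Consequently the paper's soundness analysis decomposes the witness directly in the eigenbasis of $H^{(x)}$ (the worst case being $\alpha_0\ket{\Omega_x}+\alpha_1\ket{\Phi_x^{(1)}}$, which then feeds into \Cref{prop:UQMAwitness-bound}), whereas you decompose $\ket{w}$ in the \UQMA{} witness space and rely on the fact that $\ket{\eta_{w_x^\perp}}\perp\ket{\Omega_x}$ forces it into the $\geq\lambda_1$ subspace. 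Your restricted witness space can only help the soundness tradeoff, and your $W_x^\dagger$ uncompute is in fact cleaner on the honest input (it returns $\ket{w_x}$ on the nose rather than with a $(1-\delta_1)$ fidelity coming from the $t=T$ weight), so your route should yield parameters at least as good as the stated ones.

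One genuine confusion to correct: you repeatedly describe the ground state of $H^{(x)}$ as ``$\delta_1$-close'' to $\ket{\eta_{w_x}}$, and hedge ``up to $\delta_1$'' for the excited-space containment of $\ket{\eta_{w_x^\perp}}$. In the construction of \Cref{lemma:Hamiltonian-from-UQMAverifier}, $\delta_1$ is \emph{not} an approximation error between two states; it is the weight parameter in the history state itself (namely $\pi_T=1-\delta_1$, $\pi_t=\delta_1/T$ for $t<T$), and the lemma asserts $\ket{\Omega_x}=\ket{\eta_{w_x}}$ \emph{exactly}. In the paper's analysis $\delta_1$ enters the distance bound only because applying $V_x^\dagger$ and then tracing out the clock recovers $\ket{w_x}$ with fidelity $\geq 1-\delta_1$ (the $t=T$ weight in \Cref{eq:fidelity-lower-bound}); in your $W_x^\dagger$ variant that particular loss does not occur, so your completeness should read $(1+\cos(\nu t))/2-\epsilon$ with no $O(\delta_1)$ term, and the $\delta_1$ contribution to $\delta$ would arise, if at all, only through the spectral-gap bound $\Delta(H^{(x)})\geq\Omega(\nu\delta_1/T^3)$.
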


By employing \Cref{thm:UQMA-witness-in-stateQMA-scaling}, we can determine that the promise gap of the new \stateQMA{} verifier is given by
\[c'-s' = (1-\cos(\lambda_1 t))(\cos(\nu t)-\cos(\lambda_1 t)) - 2\epsilon \geq q'(\nu\delta_1 t/T^3) - 2\epsilon.\] 
Here, $q'$ represents a polynomial due to the bound $\Delta(H^{(x)}) = \lambda_1-\nu \geq \Omega(\nu\delta_1/T^3)$ stated in \Cref{lemma:Hamiltonian-from-UQMAverifier}. According to \Cref{lemma:space-efficient-Hamiltonian-simulation}, we know that the implementation error $\epsilon$ is exponentially small. Consequently, by utilizing \UQMA{} error reduction (\Cref{thm:UQMA-error-reduction}), we establish $\UQMA \subseteq \UQMA_{1/\poly}$ and thus \Cref{thm:UQMA-witness-in-stateQMA-scaling} implies \Cref{corr:UQMA-witness-in-stateQMA} by choosing $\delta_1 = 1/\poly(n)$: 

\begin{corollary}[$\UQMA$ witness family is in \stateQMA{}]
    \label{corr:UQMA-witness-in-stateQMA}
    For any promise problem $\calL = (\calL_{\rm yes},\calL_{\rm no})$ in $\UQMA$, the family of unique witness states $\{\ket{w_x}\}_{x\in \calL_{\rm yes}}$ corresponding to \textit{yes} instances is in $\stateQMA_{\delta}$, where $\delta(n)=1/\poly(n)$. 
\end{corollary}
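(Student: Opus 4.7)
The plan is to reduce verification of $\ket{w_x}$ to a one-bit phase estimation (Hadamard test) on the Hamiltonian $H^{(x)}$ obtained from $V_x$ via the weighted circuit-to-Hamiltonian construction of \Cref{lemma:Hamiltonian-from-UQMAverifier}. First, I would invoke that lemma on the $\UQMA_{1-\nu}$ verifier $V_x$ to obtain a (stoquastic) Hamiltonian $H^{(x)}$ whose ground state $\ket{\eta_x}$ is a weighted history state that is $\delta_1$-close to encoding the unique witness $\ket{w_x}$ under a natural decoding (say, projection onto the clock register $\ket{0}_C$ followed by tracing out ancillas), with ground energy $\nu$ and spectral gap $\Delta(H^{(x)}) \geq \Omega(q(\nu\delta_1/T^3))$.

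The \stateQMA{} verifier would then act as follows on a claimed witness $\ket{w}$ in register $A$: introduce the ancilla register $B$ (for $V_x$'s work qubits) and a clock register $C$ of size $O(\log T)$, and apply a unitary $U_{\mathrm{hist}}$ such that $U_{\mathrm{hist}}\ket{w_x}_A\ket{0}_{BC} = \ket{\eta_x}$ (so that more generally $U_{\mathrm{hist}}\ket{w}_A\ket{0}_{BC} =: \ket{\Phi(w)}$). Then run a Hadamard test: prepare a fresh qubit $D$ in $\ket{+}$, apply controlled-$e^{-iH^{(x)}t}$ on $ABC$ implemented via the space-efficient Hamiltonian simulation of \Cref{lemma:space-efficient-Hamiltonian-simulation} with error $\epsilon$, and measure $D$ in the Hadamard basis; accept iff the outcome is $+$. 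On acceptance, apply $U_{\mathrm{hist}}^{\dagger}$ and output register $A$. For completeness, when $\ket{w} = \ket{w_x}$ we have $\ket{\Phi(w)} = \ket{\eta_x}$, an eigenvector of eigenvalue $\nu$, so the ideal Hadamard test accepts with probability $(1+\cos(\nu t))/2$, giving acceptance at least $c'$ after accounting for the simulation error $\epsilon$.

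For the soundness/closeness condition, decompose $\ket{\Phi(w)} = \alpha_0 \ket{\eta_x} + \sum_{i\geq 1}\alpha_i \ket{\eta_i}$ in the eigenbasis of $H^{(x)}$. Since $\lambda_{\max}(H^{(x)}) t \leq \pi$, the function $\lambda \mapsto \cos(\lambda t)$ is non-increasing on the spectrum, so the ideal acceptance probability is upper bounded by
\[
|\alpha_0|^2 \tfrac{1+\cos(\nu t)}{2} + (1-|\alpha_0|^2)\tfrac{1+\cos(\lambda_1 t)}{2},
\]
and the actual one by the above plus $\epsilon$. A direct calculation shows this quantity equals $s'$ precisely when $1 - |\alpha_0|^2 = (1-\cos(\lambda_1 t))/2$, so if acceptance exceeds $s'$ then $|\alpha_0|^2 \geq (1+\cos(\lambda_1 t))/2$. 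By the Fuchs–van de Graaf inequality (\Cref{lemma:td-vs-fidelity}) the post-$U_{\mathrm{hist}}$ state is therefore within trace distance $\sqrt{(1-\cos(\lambda_1 t))/2}$ of $\ket{\eta_x}$, and the Hadamard-test post-measurement state (a normalized version of $(I + e^{-iH^{(x)}t})\ket{\Phi(w)}/2$) differs from $\ket{\Phi(w)}$ by at most a comparable amount plus $\epsilon$. Uncomputing $U_{\mathrm{hist}}^{\dagger}$, tracing out $BC$, and combining with the $\delta_1$-closeness between $\ket{\eta_x}$ (suitably decoded) and $\ket{w_x}$, one reaches the claimed $\delta = O(\max\{(1-\cos\nu)/2,\delta_1,q(\nu\delta_1 t/T^3)\})$.

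The main obstacle is the last bookkeeping step: the bound $|\alpha_0|^2 \geq (1+\cos(\lambda_1 t))/2$ only controls closeness to $\ket{\eta_x}$, but the target state of the \stateQMA{} family is $\ket{w_x}$, not $\ket{\eta_x}$. One must carefully chain (i) the post-measurement perturbation incurred by the Hadamard test even when the input is nearly the ground state (this is responsible for the $(1-\cos\nu)/2$ term), (ii) the spectral-gap-controlled overlap bound (yielding the $q(\nu\delta_1 t/T^3)$ term via $\Delta(H^{(x)}) \geq q(\nu\delta_1/T^3)$ and the relation between $\cos(\lambda_1 t) - \cos(\nu t)$ and $\lambda_1 - \nu$), and (iii) the $\delta_1$ gap between $\ket{\eta_x}$ and the true target — all while $U_{\mathrm{hist}}^{\dagger}$ is a genuine unitary and hence contracts trace distance only non-strictly, so each error term survives essentially unchanged through the uncomputation and partial trace. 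Choosing parameters so that $\epsilon$ (given by \Cref{lemma:space-efficient-Hamiltonian-simulation}) is exponentially small makes the derived promise gap $c'-s' \geq q'(\nu\delta_1 t/T^3) - 2\epsilon$ inverse-polynomial under $\UQMA \subseteq \UQMA_{1/\poly}$ (\Cref{thm:UQMA-error-reduction}), completing \Cref{corr:UQMA-witness-in-stateQMA}.
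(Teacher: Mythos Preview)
Your proposal is correct and takes essentially the same approach as the paper: build $H^{(x)}$ via the weighted circuit-to-Hamiltonian construction, certify closeness to its ground state by a Hadamard test, decode to recover $\ket{w_x}$, and invoke \Cref{thm:UQMA-error-reduction} for the final parameter choice. The only cosmetic difference is that the paper has the prover send the full history state $\ket{\Omega_x}$ and decodes by applying $V_x^{\dagger}$ and tracing out the clock register, whereas you have the prover send the bare witness and sandwich the Hadamard test between $U_{\mathrm{hist}}$ and $U_{\mathrm{hist}}^{\dagger}$; since $U_{\mathrm{hist}}$ is unitary these two protocols are equivalent and the soundness analysis (decompose in the eigenbasis of $H^{(x)}$, bound $|\alpha_0|^2$ from the acceptance probability, convert to a trace-distance bound on the output) is the same in both.
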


Similarly, \Cref{thm:UQMA-witness-in-stateQMA-scaling} implies \Cref{corr:PreciseUQMA-witness-in-statePreciseQMA} by setting $\delta_1=\exp(-\poly(n))$:

\begin{corollary}[$\PreciseUQMA_{1-\negl}$ witness family is in \statePreciseQMA{}]
    \label{corr:PreciseUQMA-witness-in-statePreciseQMA}
    For any promise problem $\calL = (\calL_{\rm yes},\calL_{\rm no})$ in $\PreciseUQMA_{1-\negl}$, the family of unique witness states $\{\ket{w_x}\}_{x\in \calL_{\rm yes}}$ corresponding to \textit{yes} instances is in $\statePreciseQMA_{\delta}$, where $\delta(n)=\exp(-\poly(n))$.  
\end{corollary}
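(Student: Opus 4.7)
The plan is to reuse the scaling result of \Cref{thm:UQMA-witness-in-stateQMA-scaling} verbatim, just instantiating it with $\delta_1 = \exp(-\poly(n))$ instead of $\delta_1 = 1/\poly(n)$ (as was done for \Cref{corr:UQMA-witness-in-stateQMA}). The only thing to check is that every quantity controlling the distance parameter $\delta$ and the promise gap $c' - s'$ remains $\exp(-\poly(n))$ in this regime, and that the resulting verifier can still be built as a polynomial-size quantum circuit, as required for membership in \statePreciseQMA.

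Concretely, since $\calL \in \PreciseUQMA_{1-\negl}$, start from a \UQMA{} verifier of polynomial size $T = \poly(n)$ with negligible completeness error $\nu = \exp(-\poly(n))$ and completeness-soundness gap at least $\exp(-\poly(n))$. Apply the weighted circuit-to-Hamiltonian construction (\Cref{lemma:Hamiltonian-from-UQMAverifier}) with accuracy parameter $\delta_1 = \exp(-\poly(n))$ to obtain a Hamiltonian $H^{(x)}$ whose ground energy is $\nu$ and whose spectral gap satisfies $\Delta(H^{(x)}) = \lambda_1 - \nu \geq \Omega(\nu \delta_1 / T^3) = \exp(-\poly(n))$. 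Then instantiate the Hadamard-test style verifier of \Cref{thm:UQMA-witness-in-stateQMA-scaling} using $\exp(-iH^{(x)} t)$ with $t$ polynomially bounded and $\lambda_{\max}(H^{(x)}) t \leq \pi$, implemented via the space-efficient Hamiltonian simulation of \Cref{lemma:space-efficient-Hamiltonian-simulation}.

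It remains to read off the parameters. The promise gap is lower bounded by $q'(\nu\delta_1 t / T^3) - 2\epsilon$, which is $\exp(-\poly(n))$ provided the simulation error $\epsilon$ is chosen noticeably smaller than the first term; since the cost of \Cref{lemma:space-efficient-Hamiltonian-simulation} scales only polynomially in $\log(1/\epsilon)$, we may take $\epsilon = \exp(-\poly(n))$ while keeping the verifier polynomial-size. The distance parameter is $\delta = O(\max\{(1 - \cos\nu)/2, \delta_1, q(\nu\delta_1 t / T^3)\})$, and all three arguments are $\exp(-\poly(n))$ by construction, so $\delta = \exp(-\poly(n))$ as claimed. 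The only mild obstacle is this last bookkeeping step: verifying that $\epsilon$ can be pushed small enough to dominate $q'(\nu\delta_1 t / T^3)$ without making the verifier super-polynomial, which follows immediately from the $\log(1/\epsilon)$ dependence in \Cref{lemma:space-efficient-Hamiltonian-simulation}. No new technique beyond those already used for \Cref{corr:UQMA-witness-in-stateQMA} is required.
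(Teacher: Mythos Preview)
Your proposal is correct and follows exactly the same route as the paper: the paper's proof consists of a single line stating that \Cref{thm:UQMA-witness-in-stateQMA-scaling} implies the corollary by setting $\delta_1 = \exp(-\poly(n))$, and you have simply fleshed out the parameter bookkeeping that justifies this. Your additional verification that the verifier remains polynomial-size (via the $\log(1/\epsilon)$ dependence in \Cref{lemma:space-efficient-Hamiltonian-simulation}) is a useful sanity check the paper leaves implicit.
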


\subsection{Proof of Theorem \ref{thm:UQMA-witness-in-stateQMA-scaling}}

To establish \Cref{thm:UQMA-witness-in-stateQMA-scaling}, we begin with a weighted circuit-to-Hamiltonian construction~\cite{BC18,BCNY19}, as stated in \Cref{lemma:Hamiltonian-from-UQMAverifier}, where the resulting ground state (also known as the history state) provides a good approximation of the output state produced by the \UQMA{} verifier. The proof of \Cref{lemma:Hamiltonian-from-UQMAverifier} can be found in \Cref{subsec:weighted-FK-construction}. 

\begin{lemma}[Weighted circuit-to-Hamiltonian construction]
    \label{lemma:Hamiltonian-from-UQMAverifier}    
    For any $x \in \calL_{\rm yes}$, where $\calL=(\calL_{\rm yes},\calL_{\rm no})$ is a promise problem that is in $\UQMA$, there exists an $O(\log{n})$-local Hamiltonian $H^{(x)}$ with ground energy $\nu$ from the \UQMA{} verifier family $\{V_x\}_{x \in \calL}$ with maximum acceptance probability $1-\nu$, along with the corresponding unique witness state family $\{\ket{w_x}\}_{x\in \calL}$. This unique ground state of $H^{(x)}$ is given by
    \[\ket{\Omega_x} = \sum_{t=0}^{T-1} \frac{\sqrt{\delta_1}}{\sqrt{T}} \ket{t}\otimes U_t\cdots U_1 \ket{w_x} + \sqrt{1-\delta_1} \ket{T} \otimes U_T\cdots U_1 \ket{w_x}.\] 
    Here $T$ denotes the number of quantum gates utilized by $V_x$. 
    Furthermore, the spectral gap of $H^{(x)}$ satisfies $\Delta(H^{(x)}) \geq \Omega(\nu\delta_1/T^{3})$. 
\end{lemma}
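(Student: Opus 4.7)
The plan is to adapt the weighted Feynman--Kitaev circuit-to-Hamiltonian construction of~\cite{BC18,BCNY19} to the \UQMA{} setting, so that the unique ground state is precisely the weighted history state specified in the statement, and the spectral gap can be lower-bounded via Cheeger's inequality and a Kitaev-style geometric argument.

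First, I would choose the stationary distribution $\pi$ on the $(T+1)$-node clock path that assigns $\pi_t = \delta_1/T$ for $0 \le t \le T-1$ and $\pi_T = 1-\delta_1$. By \Cref{lemma:RW-on-path}, this $\pi$ induces a reversible Metropolis Markov chain on the clock and a corresponding stoquastic frustration-free $O(\log n)$-local Hamiltonian $H_{\path}^{(T)}$ (the clock register uses $O(\log T)=O(\log n)$ qubits in unary or binary encoding). Conjugating $H_{\path}^{(T)}$ by the standard history-state isometry $W:=\sum_{t=0}^{T} \ket{t}\bra{t}\otimes U_t\cdots U_1$ yields the propagation Hamiltonian $H_{\mathrm{prop}} := W\, H_{\path}^{(T)}\, W^{\dagger}$, which is frustration-free with a zero-energy ground space spanned by weighted history states $\sum_t \sqrt{\pi_t}\ket{t}\otimes U_t\cdots U_1\ket{\psi}$, one for each input $\ket{\psi}$ on the witness $+$ ancilla registers. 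By \Cref{lemma:RW-on-path} and \Cref{lemma:Cheeger-inequality}, the spectral gap of $H_{\mathrm{prop}}$ equals $\Delta_P \ge \tfrac12\Phi_G^2$, and a direct computation of the conductance of the path with this $\pi$ gives $\Phi_G \ge \Omega(\delta_1/T^2)$, so $\Delta(H_{\mathrm{prop}}) \ge \Omega(\delta_1^2/T^4)$ on the orthogonal complement of the history-state subspace.

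Next I would add the standard input and output penalty terms: $H_{\mathrm{in}}:=\ket{0}\bra{0}_{\mathrm{clock}}\otimes(I_m\otimes (I-\ket{0^k}\bra{0^k}))$ enforcing that the ancilla register is initialized to $\ket{\bar 0}$ at time $0$, and $H_{\mathrm{out}}:=\ket{T}\bra{T}_{\mathrm{clock}}\otimes V_x^\dagger(\ket{0}\bra{0}_{\mathrm{out}}\otimes I)V_x$ penalizing rejection at time $T$. Setting $H^{(x)}:=H_{\mathrm{prop}}+H_{\mathrm{in}}+H_{\mathrm{out}}$, one verifies that on the history-state subspace (where $H_{\mathrm{prop}}$ vanishes and $H_{\mathrm{in}}$ restricts the input ancillas to $\ket{\bar 0}$), $H_{\mathrm{out}}$ reduces to $\pi_T\cdot (1-p_{\mathrm{acc}}(\ket{w}))$ on the witness state $\ket{w}$. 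Because \calL is in $\UQMA_{1-\nu}$, the unique witness $\ket{w_x}$ achieves acceptance probability $1-\nu$ and every orthogonal witness is bounded away from this value; after rescaling by the weighting $\pi_T=1-\delta_1$ and a harmless normalization, the minimum on the history subspace equals $\nu$ and is attained uniquely at the stated weighted history state $\ket{\Omega_x}$ (the $\sqrt{\delta_1/T}$ and $\sqrt{1-\delta_1}$ amplitudes being exactly $\sqrt{\pi_t}$).

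The main obstacle is the spectral gap bound $\Delta(H^{(x)})\ge \Omega(\nu\delta_1/T^3)$. The plan is to use Kitaev's geometric lemma (equivalently the Nullspace Projection Lemma) applied to the positive operators $A:=H_{\mathrm{prop}}$ and $B:=H_{\mathrm{in}}+H_{\mathrm{out}}$: if $\gamma(A)$ is the spectral gap of $A$ above its nullspace, $\gamma(B|_{\ker A})$ is the gap of $B$ restricted to $\ker A$, and the nullspaces have trivial intersection, then
\[
\Delta(A+B)\;\ge\;\frac{\gamma(A)\,\gamma(B|_{\ker A})}{\gamma(A)+\|B\|_{\infty}}.
\]
I would plug in $\gamma(A)\ge\Omega(\delta_1^2/T^4)$ from the Cheeger bound, $\|B\|_\infty\le O(1)$, and $\gamma(B|_{\ker A})\ge \Omega(\nu\delta_1)$ on the history subspace (using the \UQMA{} promise gap separating $\ket{w_x}$ from its orthogonal complement, together with the $\pi_T$ weighting). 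Balancing these factors yields the claimed $\Omega(\nu\delta_1/T^3)$ bound; the delicate part is checking that the $\UQMA_{1-\nu}$ promise $(1-\nu)-r(n)\ge 1/\poly(n)$ is what drives the $\nu$-dependence in $\gamma(B|_{\ker A})$, and tracking the polynomial factors in $T$ that arise from conductance and from the restriction to $\ker H_{\mathrm{prop}}$.
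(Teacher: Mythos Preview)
Your construction of $H^{(x)}=H_{\mathrm{prop}}+H_{\mathrm{in}}+H_{\mathrm{out}}$ with the weighted path distribution $\pi=(\delta_1/T,\ldots,\delta_1/T,1-\delta_1)$ matches the paper, but the spectral-gap argument has genuine gaps. First, the conductance estimate is off: at the bottleneck cut $S=[T-1]$ one has $\min\{\pi(S),\pi(\bar S)\}=\delta_1$ and $\pi_{T-1}P_{T-1,T}=(\delta_1/T)\cdot\tfrac14$, so the $\delta_1$'s cancel and $\Phi_G=1/(4T)$, giving $\Delta(H_{\mathrm{prop}})\ge\Omega(1/T^2)$ independent of $\delta_1$. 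Your $\Omega(\delta_1/T^2)$ is far too pessimistic and, plugged into your own formula with $\gamma(B|_{\ker A})=\Omega(\nu\delta_1)$ and $\|B\|=O(1)$, yields $\Omega(\nu\delta_1^3/T^4)$, not $\Omega(\nu\delta_1/T^3)$; the ``balancing'' you allude to does not occur.

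More seriously, the Nullspace Projection Lemma (or Kitaev's geometric lemma) does not bound $\Delta(A+B)=\lambda_1-\lambda_0$ in this setting. Since $\ker A\cap\ker B=\{0\}$ here (no history state is accepted with probability~$1$), NPL only lower-bounds the \emph{ground energy} $\lambda_0(A+B)$; it says nothing about the gap above it. The paper does not use NPL at all for this step: it invokes Lemma~3 of~\cite{BC18}, which directly yields $\Delta(H^{(x)})\ge \Delta(H_{\mathrm{path}}^{(T)})\cdot\tfrac{1-\sqrt{1-\nu}}{4}\cdot\min\{\pi_0,\pi_T\}$, and then combines $\Delta(H_{\mathrm{path}}^{(T)})\ge\Omega(1/T^2)$, $1-\sqrt{1-\nu}\ge\nu/2$, and $\min\{\pi_0,\pi_T\}=\delta_1/T$ to get $\Omega(\nu\delta_1/T^3)$. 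Note also that the $\nu$-dependence comes from the completeness error (the factor $1-\sqrt{1-\nu}$), \emph{not} from the \UQMA{} promise gap $(1-\nu)-r(n)$ as you suggest; in fact the bound in the lemma statement does not use the parameter $r$ at all.
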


With \Cref{lemma:Hamiltonian-from-UQMAverifier} established, we proceed to prove the main theorem in this section.

\begin{proof}[Proof of \Cref{thm:UQMA-witness-in-stateQMA-scaling}]
    We begin by constructing a \stateQMA{} verifier $V'_x$. 
    Let $H^{(x)}$ be the $O(\log n)$-local Hamiltonian, which is obtained by applying \Cref{lemma:Hamiltonian-from-UQMAverifier} to a \UQMA{} verifier $V_x$ with the unique witness $\ket{w_x}$. 
    Our construction is primarily relies on the one-bit precision phase estimation~\cite{Kitaev95}, often referred to as Hadamard test~\cite{AJL09}, as illustrated in \Cref{algo:UQMA-witness-in-stateQMA}. 
    
	\begin{algorithm}[ht!]
        1. Apply one-bit precision phase estimation on registers $\sfO$, $\sfW$, and $\sfC$. Here, register $\sfO$ is the control qubit, while registers $\sfW$ and $\sfC$ are the target qubits. Specifically, (clock) register $\sfC$ consists of $\lceil \log{T} \rceil$ qubits, and the number of qubits in register $\sfW$ matches the number of qubits that $V_x$ acts on\;
        2. Apply the circuit $V_x^{\dagger}$ to register $\sfW$\;
        3. Measure the designated output qubit, and we say that $V'_x$ accepts $\ket{\psi}$ if the final measurement outcome is $0$\; 
        4. If $V'_x$ accepts, trace out all qubits in $\sfC$. The resulting state,  denoted as $\rho_{\psi}$, is located in register $\sfW$
        \BlankLine
		\caption{\stateQMA{} verification circuit $V'_x$}
		\label[algorithm]{algo:UQMA-witness-in-stateQMA}
	\end{algorithm}

    Here, we choose the parameter $t=\frac{\pi}{p(n) \max_{s,t}|H^{(x)}(s,t)|} \leq \frac{\pi}{\|H^{(x)}\|_{\infty}}$ such that all eigenvalues of $H^{(x)} t$ is in $[0,\pi]$. The corresponding circuit implementation has demonstrated in \Cref{fig:stateQMA-new-verifier}.
    
    \begin{figure}[ht!]
    \centering
    \begin{quantikz}[wire types={q,b,q}, classical gap=0.07cm]
	  \lstick{$\ket{0}_{\sfO}$} & \gate{H}  & \ctrl{1} & \gate{H} & \meter{0?} \\
 	  \lstick[wires=2]{$\ket{\psi}_{\sfW,\sfC}$} &  & \gate[2]{\exp(-iH^{(x)} t)} & \gate[][6ex][6ex]{V_x^{\dagger}} & \rstick{$\rho_{\psi}$}\\
        & \qwbundle{\lceil \log{T} \rceil} & & \qwbundle{\lceil \log{T} \rceil} & \trash{\text{trace}}
    \end{quantikz}
    \caption{\stateQMA{} verification circuit $V'_x$}	
    \label{fig:stateQMA-new-verifier}
    \end{figure}
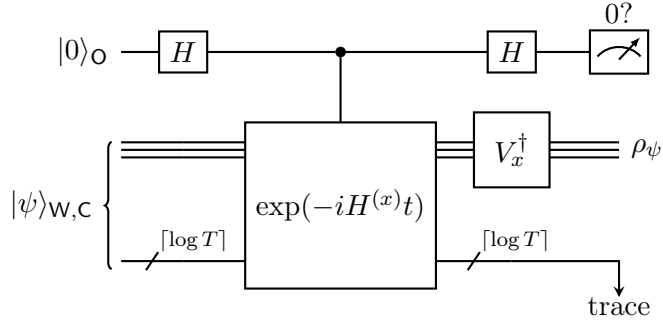
    
    Through direct calculation, we can express the acceptance probability as follows:
    \[p'_{\rm acc}\coloneqq \Pr{V'_x \text{ accepts } \ket{\psi}} = \frac{1}{4}\big\|V_x^{\dagger}(I+e^{-iH^{(x)}t}) \ket{\psi}\big\|_2^2 = \frac{1}{2}+\frac{1}{2}\operatorname{Re}(\bra{\psi} e^{-iH^{(x)}t} \ket{\psi}).\]
    We denote the resulting state by $\rho_{\psi}$ when $\exp(-iH^{(x)}t)$ is perfectly implemented, whereas the resulting state is $\rho'_{\psi}$ if the implementation error $\epsilon$ of $\exp(-iH^{(x)}t)$ is considered (e.g., \Cref{lemma:space-efficient-Hamiltonian-simulation}). 
    Simultaneously, $\rho_{\psi}$ is given by
    $\rho_{\psi}\coloneqq \frac{1}{4 p'_{\rm acc}} V_x^{\dagger} \Tr_{\sfC}\!\left(  (I+e^{-iH^{(x)}t}) \ket{\psi}\bra{\psi} (I+e^{iH^{(x)}t}) \right) V_x.$

    \vspace{1em}
    \noindent\textbf{Completeness}. We first prove that this \stateQMA{} verifier $V'_x$ meets the completeness condition. 
    Consider the unique ground state $\ket{\Omega_x}$ of the Hamiltonian $H^{(x)}$ specified in \Cref{lemma:Hamiltonian-from-UQMAverifier} as a witness state of $V'_x$, and let $\lambda_0\coloneqq \lambda_{\min}\big(H^{(x)}\big)=1-p_{\rm acc}=\nu$ be the ground energy of $H^{(x)}$ corresponding to the maximum acceptance probability $p_{\rm acc}$ of $V_x$. 
    Consequently, we obtain the following by combining with the implementation error $\epsilon$ of $\exp(-H^{(x)} t)$ specified in \Cref{lemma:space-efficient-Hamiltonian-simulation}: 
    \[\Pr{V'_x \text{ accepts } \ket{\Omega_x}} = \frac{1+\cos(\nu t)}{2} 
    \text{ and }
    c'(n) = \frac{1+\cos(\nu t)}{2}-\epsilon.\]

    \vspace{1em}
    \noindent\textbf{Soundness}. For the soundness condition, we consider a quantum state $\ket{\psi}=\alpha_0 \ket{\Omega_x} + \sum_{1 \leq j < N} \alpha_j \ket{\Phi_x^{(j)}}$, where the coefficients $\{\alpha_j\}_{0 \leq j < N}$ satisfies that $\sum_{0 \leq j < N} |\alpha_j|^2=1$ and $\ket{\Phi_x^{(j)}}$ is the $j$-th excited state of $H^{(x)}$, as a witness state of the \stateQMA verifier $V'_x$, we thus obtain the corresponding acceptance probability $p'_{\rm acc}$ and the resulting state $\rho_{\psi}$, respectively:
    
    \begin{equation}
        \label{eq:pcc-soundness-all}
        p'_{\rm acc} =\Pr{V'_x \text{ accepts } \ket{\psi}} = |\alpha_0|^2 \cdot \frac{1+\cos(\lambda_0)}{2} + \sum_{1 \leq j < N} |\alpha_j|^2 \cdot \frac{1+\cos(\lambda_j t)}{2} \geq s(n), 
    \end{equation}

    \begin{equation}
        \label{eq:resulting-state-soundness}
        \begin{aligned}
         \rho_{\psi} = \frac{1}{4 p'_{\rm acc}} V_x^{\dagger} \Tr_{\sfC} \Bigg(&\Big(\alpha_0(1+e^{-i\lambda_0 t}) \ket{\Omega_x} + \sum_{1\leq j < N} \alpha_j (1+e^{-i\lambda_j t}) \ket{\Phi_x^{(j)}}\Big) \\
        &\Big( \alpha^*_0(1+e^{i\lambda_0 t}) \bra{\Omega_x} + \sum_{1\leq j < N} \alpha^*_j (1+e^{i\lambda_j t}) \bra{\Phi_x^{(j)}} \Big) \Bigg) V_x.
        \end{aligned}
    \end{equation}

    Here, $\lambda_j$ (${1 \leq j < N}$) is the $j$-th excited\footnote{It is noteworthy that the $j$-th excited energy of a Hamiltonian corresponds to the $(j+1)$-th smallest eigenvalue of $H$, where $H$ is viewed as an Hermitian matrix. Similarly, the $j$-th excited state of a Hamiltonian $H$ corresponds to the eigenvector associated with the $(j+1)$-th smallest eigenvalue of $H$.} energy corresponding to $j$-th excited state $\ket{\Phi_x^{(j)}}$ such that $\lambda_1 \leq \cdots \leq \lambda_{N-1} \leq \pi$. 
    It is noteworthy that the fidelity (overlap) between the resulting state $\rho_{\psi}$ and the unique witness state $\ket{w_x}\bra{w_x}$ is mostly decided by $|\alpha_0|^2$. 
    To upper-bound $\td(\rho_{\psi},\ket{w_x}\bra{w_x})$, it suffices to minimize $|\alpha_0|^2$ amongst all quantum states $\ket{\psi}$ such that $V'_x$ accepts $\ket{\psi}$ with probability at least $s(n)$. 
    Accordingly, we need to maximize $|\alpha_1|^2$ because $\exp(-i\lambda_1 t)$ contributes the largest possible weight amongst all eigenvalues of $H^{(x)}$ except for $\lambda_0$. Meanwhile, we can without loss of generality assume that $\alpha_2=\cdots=\alpha_{N-1}=0$, then we obtain the following derived from \Cref{eq:pcc-soundness-all}: 
    \begin{equation}
        \label{eq:pcc-soundness}
        p'_{\rm acc} = |\alpha_0|^2 \cdot \frac{1+\cos(\lambda_0 t)}{2} + (1-|\alpha_0|^2)\cdot \frac{1+\cos(\lambda_1 t)}{2} \geq s'(n).
    \end{equation}
    
    Furthermore, we can assume that $(1+\cos(\lambda_1 t)/2) \leq s'(n) \leq (1+\cos(\lambda_0 t))/2$, otherwise $\ket{\psi}$ will be orthogonal to $\ket{\Omega_x}$ which means that $|\alpha_0|^2=0$. By adapting $\rho_{\psi}$ in \Cref{eq:resulting-state-soundness} according to the choice of $\{\alpha_i\}_{2\leq i < N}$ with $\lambda_0=\nu$, we have derive that: 
    \begin{equation}
    \label{eq:fidelity-lower-bound}
    \begin{aligned}
        \F^2(\rho_{\psi},\ket{w_x}\bra{w_x}) &= \bra{w_x} \rho_{\psi} \ket{w_x}\\
        &\geq \frac{1}{4 p'_{\rm acc}} \bra{w_x} V_x^{\dagger} \Tr_{\sfC} \big( |\alpha_0|^2 (1+e^{-i\nu t}) \ket{\Omega_x}\bra{\Omega_x} (1+e^{i\nu t}) \big) V_x \ket{w_x}\\
        & \geq \frac{|\alpha_0|^2}{p'_{\rm acc}} \cdot \frac{(1+\cos(\nu t))^2}{4} \cdot (1-\delta_1)\\
        & \coloneqq  \frac{|\alpha_0|^2}{p'_{\rm acc}} \cdot (1-\delta_0)^2 \cdot (1-\delta_1)
    \end{aligned}
    \end{equation}
    
    Here, the third line is because of $\bra{w_x} V_x^{\dagger} \Tr_{\sfC} \big(  \ket{\Omega_x}\bra{\Omega_x} \big) V_x \ket{w_x} \geq 1-\delta_1$, and the last line defines $\delta_0\coloneqq (1-\cos(\nu t))/2$.
    By employing \Cref{lemma:td-vs-fidelity} on \Cref{eq:fidelity-lower-bound}, we obtain: 
    \begin{equation}
        \label{eq:trace-dist-upper-bound}
        \begin{aligned}
        \td(\rho_{\psi},\ket{w_x}\bra{w_x}) &\leq 1-\F(\rho_{\psi},\ket{w_x}\bra{w_x})
        = 1-(1-\delta_0)\cdot \sqrt{1-\delta_1} \cdot \frac{|\alpha_0|}{p'_{\rm acc}}.
        \end{aligned}
    \end{equation}
    We then establish a lower bound of $|\alpha_0|^2/p'_{\rm acc}$ and the corresponding soundness parameter $s'$ in \Cref{prop:UQMAwitness-bound}, and the proof will be deferred. 
    \begin{proposition}
        \label{prop:UQMAwitness-bound}
        $|\alpha_0|^2/p'_{\rm acc} \geq 1-q(\nu\delta_1 t/T^3)$ for some polynomial $q$. 
        Moreover, the choice of parameters implies that $s'(n) = \frac{1}{4} \big((1+\cos(\lambda_1 t)\big)\big(2+\cos(\nu t) - \cos(\lambda_1 t)\big) + \epsilon$. 
    \end{proposition}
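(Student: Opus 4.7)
The plan is to convert the claim into a clean one-variable optimization in $|\alpha_0|^2$. Using the reduction $\alpha_2 = \cdots = \alpha_{N-1} = 0$ already justified in the preceding paragraph (excited states above the first one only decrease $p'_{\rm acc}$ because $\cos$ is decreasing on $[0,\pi]$), the acceptance probability becomes the affine function
\[
p'_{\rm acc} \;=\; |\alpha_0|^2\, a \;+\; \bigl(1-|\alpha_0|^2\bigr)\, b, \qquad a := \tfrac{1+\cos(\nu t)}{2},\quad b := \tfrac{1+\cos(\lambda_1 t)}{2}.
\]
Since $0 \leq \nu < \lambda_1 \leq \pi/t$ we have $a > b$, so $p'_{\rm acc}$ is strictly increasing in $|\alpha_0|^2$. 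The soundness constraint $p'_{\rm acc} \geq s'$ therefore forces $|\alpha_0|^2 \geq (s'-b)/(a-b)$, with equality realizing the minimum.

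Next I would substitute the declared value of $s'$ and exploit the elementary identity $2(a-b) = \cos(\nu t) - \cos(\lambda_1 t)$. A short expansion collapses the formula to $s' = b\bigl(1 + (a-b)\bigr) + \epsilon$, so the minimum $|\alpha_0|^2$ is exactly $b + \epsilon/(a-b)$. Because the map $x \mapsto x/\bigl(b + x(a-b)\bigr)$ is monotone increasing in $x$, the infimum of $|\alpha_0|^2/p'_{\rm acc}$ over admissible $\ket{\psi}$ is attained at this boundary point, giving
\[
\frac{|\alpha_0|^2}{p'_{\rm acc}} \;\geq\; \frac{b + \epsilon/(a-b)}{b\bigl(1+(a-b)\bigr) + \epsilon} \;\geq\; \frac{1}{1 + (a-b)} \;-\; O\!\Bigl(\tfrac{\epsilon}{a-b}\Bigr),
\]
so the slack $1 - |\alpha_0|^2/p'_{\rm acc}$ is bounded above by $(a-b) + O(\epsilon/(a-b))$ to leading order. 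This is the step that pins down why the particular expression for $s'$ was chosen: it is precisely the value that makes $|\alpha_0|^2_{\min}$ land on the spot where the ratio collapses cleanly.

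The final step is to repackage $(a-b)$ as a polynomial in $\nu\delta_1 t/T^3$. The sum-to-product identity gives
\[
a - b \;=\; \sin\!\bigl((\nu+\lambda_1)t/2\bigr)\,\sin\!\bigl(\Delta t/2\bigr),
\]
where $\Delta := \lambda_1 - \nu$ is the spectral gap of $H^{(x)}$. Our choice $\lambda_{\max}(H^{(x)})\, t \leq \pi$ places both sine arguments inside $[0,\pi/2]$ so each factor is bounded both above by its argument and below via $\sin y \geq 2y/\pi$; combined with $\Delta \geq \Omega(\nu\delta_1/T^3)$ from \Cref{lemma:Hamiltonian-from-UQMAverifier}, this produces an explicit polynomial $q$ with $(a-b) \leq q(\nu\delta_1 t/T^3)$ in the regime of interest. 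The exponentially small $\epsilon$ from space-efficient Hamiltonian simulation (\Cref{lemma:space-efficient-Hamiltonian-simulation}) is then harmlessly absorbed into this polynomial. The main routine obstacle is the bookkeeping here, namely keeping the polynomial $q$ consistent with the polynomial $q$ that reappears in the $\delta$-bound of \Cref{thm:UQMA-witness-in-stateQMA-scaling} and checking that the chosen $t$ forces $\epsilon/(a-b)$ to stay negligible; the conceptual content, by contrast, is just the monotone optimization carried out above.
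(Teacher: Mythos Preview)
Your approach is essentially the paper's. Both reduce to the two-level state, observe that $|\alpha_0|^2/p'_{\rm acc}$ is increasing in $|\alpha_0|^2$, minimize at the boundary $p'_{\rm acc}=s'$, and then choose $s'$ so that the minimizer $|\alpha_0|^2_{\min}$ equals $b=(1+\cos(\lambda_1 t))/2$, giving the ratio $1/(1+(a-b))\geq 1-(a-b)$. The paper reaches that endpoint via the parametrization $s'=\xi a+(1-\xi)b$ and a second-order Taylor truncation of $g(\xi)=\xi/(\xi a+(1-\xi)b)$ before setting $\xi=b$; your direct computation avoids the Taylor step and tracks the implementation error $\epsilon$ explicitly, which is a bit cleaner, but the substance is identical.

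Your final step, however, shares a defect with the paper's own argument. You invoke the spectral-gap \emph{lower} bound $\Delta\geq\Omega(\nu\delta_1/T^3)$ from \Cref{lemma:Hamiltonian-from-UQMAverifier} to deduce the \emph{upper} bound $(a-b)\leq q(\nu\delta_1 t/T^3)$. The direction is wrong: since $(a-b)=\sin\!\bigl((\nu+\lambda_1)t/2\bigr)\sin(\Delta t/2)$ is increasing in $\Delta$ on the relevant range, a lower bound on $\Delta$ can only \emph{lower}-bound $(a-b)$. Nothing in \Cref{lemma:Hamiltonian-from-UQMAverifier} prevents $\lambda_1 t$ from being close to $\pi$ while $\nu t$ is tiny, in which case $(a-b)=\Theta(1)$ regardless of how small $\nu\delta_1 t/T^3$ is. The paper makes the identical move (``utilizing the bound $\Delta(H^{(x)})\geq\Omega(\nu\delta_1/T^3)$''), so you are replicating its reasoning faithfully; but as written, neither version justifies the stated inequality except with a vacuous constant choice of $q$, and an independent upper bound on $\lambda_1-\nu$ would be needed to make the claim nontrivial.
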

    
    Plugging \Cref{prop:UQMAwitness-bound} into \Cref{eq:trace-dist-upper-bound} with  $\delta'\coloneqq  \max\{\delta_0,\delta_1,q(\nu\delta_1 t/T^3)\}$, we obtain: 
    \[\td(\rho_{\psi}, \ket{w_x}\bra{w_x}) \leq 1-(1-\delta_0)\sqrt{(1-\delta_1)(1-q(\nu\delta_1 t/T^3))} \leq 1-(1-\delta')^2 = 2\delta'-(\delta')^2 \leq 2\delta'.\]
    By utilizing \Cref{lemma:space-efficient-Hamiltonian-simulation}, we note that the implementation error $\epsilon$ of $\exp(-H^{(x)}t)$ is exponentially small, and thus finish the proof by noticing $\td(\rho'_{\psi}, \ket{w_x}\bra{w_x}) \leq O(\delta')$.     
\end{proof}

At the end of this subsection, we provide the proof of \Cref{prop:UQMAwitness-bound}. 

\begin{proof}[Proof of \Cref{prop:UQMAwitness-bound}]
    We establish a lower bound for $|\alpha_0|^2/p'_{\rm acc}$ by a direct calculation: 
    \begin{equation*}
        \label{eq:UQMAwitness-bound1}
        \begin{aligned}
        \frac{|\alpha_0^2|}{p'_{\rm acc}} &= \frac{|\alpha_0|^2}{|\alpha_0|^2 \cdot \frac{1+\cos(\lambda_0 t)}{2} + (1-|\alpha_0|^2)\cdot \frac{1+\cos(\lambda_1 t)}{2}} ~{\coloneqq  f(|\alpha_0|^2)}\\
        &\geq \frac{(2s(n)-1) - \cos(\lambda_1 t)}{s(n) \cdot (\cos(\lambda_0 t) - \cos(\lambda_1 t))}\\
        &= \frac{2\xi}{1+\xi \cos(\lambda_0 t) + (1-\xi) \cos(\lambda_1 t)} ~{\coloneqq  g(\xi)}\\
        &\geq \frac{2}{1+\cos(\lambda_1 t)} \cdot \xi - \frac{2(\cos(\lambda_0 t)-\cos(\lambda_1 t))}{(1+\cos(\lambda_1 t)^2)} \cdot \xi^2. 
        \end{aligned}
    \end{equation*}
    Here, the second line is due to the fact that the function $f(|\alpha_0|^2)$ is monotonically increasing on the interval $[0,1]$, and the minimum achieves when $|\alpha_0|^2 = \frac{(2s(n)-1) - \cos(\lambda_1 t)}{\cos(\lambda_0 t) - \cos(\lambda_1 t)}$ following from \Cref{eq:pcc-soundness}. The third line owing to choose $s\coloneqq \xi \cdot \frac{1+\cos(\lambda_0)}{2} + (1-\xi) \cdot \frac{1+\cos(\lambda_1 t)}{2}$ for $0 \leq \xi \leq 1$, and the last line is because of truncated Taylor series of the function $g(\xi)$. 

    By choosing $\xi = (1+\cos(\lambda_1 t))/2$ and utilizing the bound $\Delta\big(H^{(x)}\big)=\lambda_1 - \lambda_0 = \lambda_1 - \nu \geq \Omega(\nu\delta_1/T^3)$ in \Cref{lemma:Hamiltonian-from-UQMAverifier}, we have the following for some polynomial $q$: 
    \begin{equation*}
        \label{eq:UQMAwitness-bound2}
         \frac{2}{1+\cos(\lambda_1 t)} \cdot \xi - \frac{2(\cos(\nu t)-\cos(\lambda_1 t))}{(1+\cos(\lambda_1 t)^2)} \cdot \xi^2 = 1- \frac{\cos(\nu t) - \cos(\lambda_1 t)}{2} \geq 1-q\Big( \frac{\nu\delta_1 t}{T^3} \Big). 
    \end{equation*}

    Accordingly, we obtain the soundness parameter $s$ by combining the implementation error $\epsilon$ of $\exp(-iH^{(x)}t)$ specified in \Cref{lemma:space-efficient-Hamiltonian-simulation}: 
    $s'(n) = \big((1+\cos(\lambda_1 t)\big)\big(2+\cos(\nu t) - \cos(\lambda_1 t)\big)/4 + \epsilon.$
\end{proof}

\subsection{Weighted circuit-to-Hamiltonian construction, revisited}
\label{subsec:weighted-FK-construction}
Finally, we demonstrate the weighted circuit-to-Hamiltonian construction with a spectral gap lower bound, along the line of~\cite{BC18,BCNY19}. 

\begin{proof}[Proof of \Cref{lemma:Hamiltonian-from-UQMAverifier}]
    We consider an \UQMA{} verifier $V_x$ with maximum acceptance probability $1-\nu$, where the verification circuit consists of a sequence of local quantum gates $U_1,\cdots,U_T$. To begin, we specify a weighted circuit-to-Hamiltonian construction $H^{(x)}\coloneqq H^{(x)}_{\rm prop} + H_{\rm in} + H_{\rm out}$ proposed in~\cite{BC18}.\footnote{We note that by utilizing a spacetime circuit-to-Hamiltonian construction outlined in~\cite{BCNY19}, we can achieve a constantly improved distance parameter $\delta$.} Here, $H_{\rm in}\coloneqq \ket{\bar{0}}\bra{\bar{0}} \otimes \Pi_{\rm in}$ and $H_{\rm out}\coloneqq \ket{T}\bra{T} \otimes \Pi_{\rm out}$ correspond to input and output constraints, respectively. It is straightforward to see that the ground energy of $H^{(x)}$ is $1-(1-\nu)=\nu$. This construction closely resembles the standard circuit-to-Hamiltonian construction~\cite{KSV02}, except for the weighted ground state resulting from the distinct propagation term $H^{(x)}_{\rm prop}$.

    \vspace{0.75em}
    \noindent\textbf{Constructing the propagation term.} 
    Next, we proceed to construct the propagation term $H^{(x)}{\rm prop}$ for the given weights, specifically a probability distribution $\pi$. We define the ground state of $H^{(x)}$ as $\ket{\Omega_x} \coloneqq  \sum_{t=0}^T \sqrt{\pi_t} \ket{t} \otimes (U_t\cdots U_1) \ket{w_x}$, where $\ket{w_x}$ represents the unique witness state of $V_x$, and $\pi=(\delta_1/T,\cdots,\delta_1/T,1-\delta_1)$ is a probability distribution on $[T]=\{0,1,\cdots,T\}$. 
    
    Our construction of the propagation term is based on the transition matrix $P$ of random walks on a $(T+1)$-node path with the stationary distribution $\pi$. Here, the specific form of the matrix $P$ is provided in \Cref{lemma:RW-on-path} for the given $\pi$. By employing \Cref{lemma:RW-on-path}, we obtain an $O(\log n)$-local stoquastic frustration-free Hamiltonian $H^{(T)}_{\rm path}$ derived from $P$. Let us consider a unitary operator $W_x\coloneqq \sum_{t=0}^{T} \ket{t}\bra{t}\otimes U_t\cdots U_1$ associated with the given $V_x$. By combining $H^{(T)}_{\rm path}$ with Lemma 2 in~\cite{BC18}, we arrive at the expression for the propagation term $H^{(x)}_{\rm prop}$: 
    \begin{align*}
    H^{(x)}_{\rm prop} &= W_x (H^{(T)}_{\rm path} \otimes I) W_x^{\dagger} \\
    &=\sum_{t=0}^T (1-P_{t,t})\ket{t}\bra{t}\otimes I-\sum_{t=0}^T \left(\tfrac{\sqrt{\pi_t}}{\sqrt{\pi_{t-1}}}P_{t,t-1} \ket{t}\bra{t-1}\otimes U_t + \tfrac{\sqrt{\pi_{t-1}}}{\sqrt{\pi_t}} P_{t-1,t} \ket{t-1}\bra{t}\otimes U_t^{\dagger} \right).
    \end{align*}

    \vspace{0.75em}
    \noindent\textbf{Lower-bounding the spectral gap.} 
    It is left to establish a lower bound for the spectral gap $\Delta(H^{(x)})$. 
    By applying Cheeger's inequality (\Cref{lemma:Cheeger-inequality}), we obtain $\Delta(H^{(x)}_{\rm prop}) = \Delta_P \geq \frac{1}{2} \Phi_G^2$, where $\Phi_G$ is the conductance of the underlying $(T+1)$-node path graph $G$. 
    Notably, the weights in $G$ are primarily concentrated on the $(T+1)$-th node since $\delta_1$ is tiny. 
    The minimum conductance $\Phi_G=\min_{S \subseteq [T]} \Phi_G(S)$ is thus achieved at the subset $S=[T-1]\coloneqq \{0,1,\cdots,T-1\}$. Consequently, we can deduce: 
    \begin{equation}
        \label{eq:conductance-lower-bound}
        \Phi_G = \Phi_G([T-1]) = \frac{1}{\pi_{[T-1]}} \pi_{T-1} P_{T-1,T} = \frac{1}{\delta_1} \cdot \pi_{T-1} \cdot \frac{1}{4}\min\left\{1,\frac{\pi_T}{\pi_{T-1}}\right\} = \frac{1}{4T}.
    \end{equation}
    From \Cref{eq:conductance-lower-bound}, we infer that $\Delta(H^{(x)}{\rm prop}) \geq \Omega(T^{-2})$.
    
    Now, let $C(\tau,T)$ be the set of quantum circuits of size $T$ that achieve the maximum acceptance probability $\tau$ amongst states satisfying the input and output constraints $\Pi_{\In}$ and $\Pi_{\Out}$, respectively. In particular, $C(\tau,T)\coloneqq \big\{U_1,\cdots,U_T: \max\limits_{\ket{\xi}\in\ker(\Pi_{\In}),\ket{\eta}\in\ker(\Pi_{\Out})} |\bra{\xi} U_1\cdots U_T \ket{\eta}|^2 =\tau\big\}$. By employing Lemma 3 in~\cite{BC18}, we obtain the following when $\tau=1-\nu$:
    \begin{equation}
    \label{eq:spectral-gap-lower-bound}
    \begin{aligned}
    \Delta(H^{(x)}) &\geq \min_{U\coloneqq U_1,\cdots, U_T \in C(1-\nu,T)} \left( \lambda_{\min}\big(H^{(U)}\big) - \lambda_{\min} \big(H^{(U)}_{\rm prop}\big) \right)\\
    &\geq \Delta\big(H_{\rm path}^{(T)}\big) \cdot \frac{1-\sqrt{1-\nu}}{4} \cdot \min\{\pi_0,\pi_T\} \\
    &= \Omega\left( \frac{\nu\delta_1}{T^3} \right).
    \end{aligned}
    \end{equation}
    Here, the last line is because $\sqrt{1-x} \leq 1-x/2$ when $0 \leq x \leq 1$. 
    As \Cref{eq:spectral-gap-lower-bound} establishes the desired spectral gap lower bound, this thus finishes the proof. 
\end{proof} 


\section{\stateQCMA{} achieves perfect completeness}
\label{sec:stateQCMA}

In this section, we will present a state-synthesizing analogue of the $\QCMA=\QCMA_1$ theorem~\cite{JKNN12}. 

\begin{restatable}[\stateQCMA{} is closed under perfect completeness]{theorem}{stateQCMAPerfectCompleteness}
\label{thm:stateQCMA-perfect-completeness}
For any efficiently computable functions $c(n)$, $s(n)$ and $\delta(n)$ such that $c(n)-s(n) \geq 1/\poly(n)$, we have that 
\[\stateQCMA_{\delta}[c,s] \subseteq \stateQCMA_{\delta'}[1,s'],\]
where $s' = \tfrac{1}{2}\left( \tfrac{s}{c}\right)^3 - 2\left( \tfrac{s}{c}\right)^2 + \tfrac{5}{2}\left( \tfrac{s}{c}\right)$ and $\delta'(n)=\delta(n)+\exp(-\poly(n))$.

Furthermore, for any \stateQCMA{} verifier utilizing the ``Pythagorean'' gateset, we have $\delta'=\delta$. 
\end{restatable}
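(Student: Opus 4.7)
The plan is to adapt the $\QCMA = \QCMA_1$ theorem of Jordan--Nagaj and JKNN12 to the state-synthesis setting, with careful attention to preserving the resulting state on the work register. The proof proceeds in three stages.

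First, I would invoke \Cref{lemma:changing-gateset-errs} to assume, at the cost of an $\exp(-\poly(n))$ increase in the distance parameter, that the original $\stateQCMA_\delta[c,s]$ verifier already uses the ``Pythagorean'' gateset $\{\CNOT, R, R'\}$ with rational entries. This accounts for the additive $\exp(-\poly(n))$ term in $\delta'$ and also for the final clause of the theorem: if the input verifier is \emph{a priori} over the Pythagorean gateset, this preparatory reduction is vacuous and $\delta' = \delta$. A crucial structural consequence is that, for any classical witness $w$, the acceptance probability of $V_x$ is a rational number of the form $k/5^{2T}$, and likewise all amplitudes in the resulting state after acceptance are rational with controlled denominators.

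Second, I would construct a new verifier $V'_x$ that takes an extended classical witness $(w,z)$, where $w$ is the original witness and $z$ encodes a classical ``rotation schedule''. The verifier runs $V_x$ on $w$ and then, controlled on $z$, applies a sequence of Pythagorean-gate rotations acting \emph{only} on the output qubit (together with a fresh ancilla that is uncomputed/discarded before the final measurement)---never on the register holding the resulting state. For the honest prover, $z$ is chosen so that, given the known rational acceptance amplitude of $V_x$ on $w$, the post-rotation aligns the output-qubit amplitude to exactly $1$; this is possible because the group generated by the Pythagorean gates is dense in $\mathrm{SU}(2)$ and in fact reaches arbitrarily close angles with $\poly(n)$-bit rational descriptions (Jordan--Nagaj). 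The key departure from JKNN12 is restricting the amplification rotations to the output-qubit branch, which guarantees that the reduced state on the work register conditioned on acceptance of $V'_x$ equals the $\rho_{x,w}$ produced by $V_x$ upon acceptance, so the $\delta$-approximation to $\ket{\psi_x}$ is inherited without degradation.

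Third, I would analyze soundness. For an arbitrary adversarial $(w,z)$, the acceptance probability of $V'_x$ is a function of the original $p := p_{\rm acc}(V_x, w)$ and the adversarial angle $\theta_z$; because the rotation acts only on the output qubit, the worst-case post-amplification probability is an explicit polynomial in $p/c$, and tracing through the honest calibration ($p = c \mapsto p' = 1$) yields precisely the cubic $p' = \tfrac{1}{2}(p/c)^3 - 2(p/c)^2 + \tfrac{5}{2}(p/c)$. Hence acceptance of $V'_x$ with probability $\geq s'$ forces $p \geq s$, and the original stateQCMA soundness for $V_x$ gives $\td(\rho_{x,w}, \psi_x) \leq \delta$; since the work-register state conditioned on acceptance of $V'_x$ coincides with $\rho_{x,w}$, we obtain $\td(\rho'_{x,(w,z)},\psi_x) \leq \delta'$ as required. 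The main obstacle I expect is the simultaneous fulfilment of three requirements on the rotation schedule---(i) honest completeness equal to $1$, (ii) action confined to the output qubit so that the resulting state is undisturbed, and (iii) the worst-case adversarial boost matching \emph{exactly} the stated cubic $\tfrac{1}{2}y^3 - 2y^2 + \tfrac{5}{2}y$---so the bulk of the careful work will be in designing the explicit controlled-rotation sequence and verifying this cubic optimization identity, which is the precise place where JKNN12 must be modified for state synthesis.
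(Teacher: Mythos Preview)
Your proposal has a genuine gap in the second stage, and the mechanism you sketch cannot simultaneously deliver perfect completeness and preserve the resulting state.

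\textbf{The output-qubit rotation destroys the resulting state.} Write $V_x\ket{w}\ket{\bar 0}=\ket{0}\ket{\phi_0}+\ket{1}\ket{\phi_1}$. Any single-qubit rotation $R$ on the output qubit sends this to $\ket{0}\bigl(\alpha\ket{\phi_0}-\bar\beta\ket{\phi_1}\bigr)+\ket{1}\bigl(\beta\ket{\phi_0}+\bar\alpha\ket{\phi_1}\bigr)$, so the work register conditioned on acceptance is a superposition of $\ket{\phi_0}$ and $\ket{\phi_1}$, not proportional to $\ket{\phi_1}$. Adding an ancilla that is ``uncomputed/discarded'' does not help: if the acceptance probability strictly increases, some amplitude from the reject branch has been moved into the accept branch, and that amplitude carries $\ket{\phi_0}$ on the work register. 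This is precisely the obstruction the paper has to work around, and it is why the paper explicitly says the original $Q$ from JKNN12 ``cannot preserve the resulting states''.

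\textbf{Density does not give exact completeness, and the cubic has a specific origin.} Your completeness argument appeals to density of the Pythagorean group in $\mathrm{SU}(2)$ to ``align the output-qubit amplitude to exactly~$1$''; density only yields approximation, not exactness. In the paper's proof (following JKNN12), the extended classical witness is not a rotation schedule but the \emph{integer} $k$ such that $p_{\rm acc}(V_x,w)=k/5^{l(n)}$. The verifier then prepares a uniform superposition over $2k$ basis states in a fresh register $\sfS$ and uses a Toffoli, controlled on the output qubit \emph{and} on ``value in $\sfS$ less than $5^{l(n)}$'', to flip a new output qubit. This yields acceptance probability exactly $k_{x,w}/(2k)$, equal to $1/2$ when the prover is honest, and crucially leaves the accept branch on $\sfR$ equal to $\ket{1}\ket{\phi_1}$. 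A \emph{single iteration of Watrous's quantum rewinding} then sends $p\mapsto p+4p(1-p)^2$, which equals $1$ at $p=1/2$; substituting $p=t/2$ with $t=k_{x,w}/k$ gives exactly $\tfrac12 t^3-2t^2+\tfrac52 t$. So the cubic is not the outcome of an adversarial angle optimization; it is the rewinding success polynomial composed with the halving step, and the soundness bound $s'=g(s/c)$ comes from $t=k_{x,w}/k\le s/c$ together with monotonicity of $g$ on $[0,1]$.

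In short, to make the proof go through you should replace your ``rotation schedule'' by the prover-supplied integer $k$, build the halving gadget on a separate register so that the accept branch of $\sfR$ is untouched, and then apply single-iteration quantum rewinding; that combination is what yields both $c'=1$ and state preservation.
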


It is noteworthy that while error reduction for \stateQCMA{} (a corollary of \Cref{thm:error-reduction-stateQMA}) preserves the distance between the resulting state and the target state, this distance-preserving property in \Cref{thm:stateQCMA-perfect-completeness} is \textit{gateset-dependent} since changing gatesets will worsen the distance parameter (\Cref{lemma:changing-gateset-errs}). This is because the key insight from~\cite{JKNN12}, that the maximum acceptance probability of a \QCMA (or \stateQCMA{}) verifier can be expressed with polynomially many bits, only applies to certain gatesets, elaborating in \Cref{remark:choices-of-gatesets}.   

\begin{remark}[On choices of the gateset]
\label{remark:choices-of-gatesets}
For state-synthesizing complexity classes, we require to deal with not only the maximum acceptance probability but also the resulting states after the computation. To well-approximate any quantum states with a designated gateset $\calS$, this gateset $S$ must generate a dense subgroup of $\mathrm{SU}(2^n)$ for large enough $n$. However, this does not hold for the gateset used in~\cite{JKNN12}.\footnote{To be specific, the proof of Theorem 3.2 in~\cite{Shi02} indicates that the gateset consists of \Toffoli{} and \Hadamard{} merely generates a dense subgroup of $\mathrm{SO}(8)$.} 
We then use the ``Pythagorean'' gateset in~\cite{JN11} where real and imaginary parts of all matrix entries are rational numbers. 
\end{remark}

Analogous to~\cite{JKNN12}, to achieve perfect completeness, we also utilize the quantum rewinding lemma~\cite{Wat09} with a single iteration. As stated in \Cref{lemma:one-iteration-quantum-rewinding}, it suffices to construct a new \stateQCMA{} verifier that the acceptance probability is exactly $1/2$ for the completness condition.  

Our construction of the new $\stateQCMA$ verifier with perfect completeness tightly follows the construction for \QCMA{} (i.e., Figure 1 in~\cite{JKNN12}). The only difference is that the unitary transformation $Q$ since the original construction in~\cite{JKNN12} cannot preserve the resulting states, which also leads to a slightly different analysis for the soundness condition. 

\vspace{1.5em}
We begin with two lemmas that are crucial for our analysis. 

\begin{lemma}[Acceptance probabilities from the ``Pythagorean'' gateset are rational, adapted from \cite{JN11}]
\label{lemma:pythagorean-gateset-is-rational}
For all unitary transform $U$ on an $n$-qubit system that consists of $l(n)$ gates from the ``Pythagorean'' gateset where $l$ is a polynomial, the probability $p_{acc}$ that the first qubit of $U\ket{0^n}$ is found in the state $\ket{1}$ (i.e., the measurement on the computational basis) is expressed as $p_{acc}=\frac{k}{5^{l(n)}}$ where $k$ is an integer from the range $[0,5^{l(n)}]$.
\end{lemma}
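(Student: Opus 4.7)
The plan is to induct on the gate count of $U$, exploiting that every entry of every gate in the ``Pythagorean'' gateset lies in $\tfrac{1}{5}\mathbb{Z}[i]$.

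First I would catalog the gate entries: \CNOT{} has entries in $\{0,1\} \subset \mathbb{Z}[i]$, while each of the two Pythagorean single-qubit gates has entries of the form $z/5$ with $z \in \{0, \pm 3, \pm 4, \pm 3i\} \subset \mathbb{Z}[i]$. Tensoring any such gate with the identity on the remaining qubits preserves this entrywise structure, so every $l(n)$-gate circuit $U$ from the gateset is an operator whose entries lie in $\tfrac{1}{5^{l(n)}}\mathbb{Z}[i]$.

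Next I would prove by induction on $l(n)$ that $U\ket{0^n} = \sum_{x \in \{0,1\}^n} \alpha_x \ket{x}$ with $\alpha_x = (a_x + i\,b_x)/5^{l(n)}$ for some integers $a_x, b_x$. The base case $l(n)=0$ is trivial, since $\ket{0^n}$ has amplitudes in $\{0,1\}$. For the inductive step, closure of $\mathbb{Z}[i]$ under addition and multiplication ensures that applying a Pythagorean $2\times 2$ gate (tensored with identity) sends Gaussian-integer numerators to Gaussian-integer numerators while multiplying the common denominator by exactly $5$; applying \CNOT{} merely permutes basis states, and we formally scale numerator and denominator by $5$ to maintain a uniform denominator $5^{l(n)}$. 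Finally I would assemble the acceptance probability as
\[
p_{acc} \;=\; \sum_{x\,:\,x_1 = 1}|\alpha_x|^2 \;=\; \frac{1}{5^{2l(n)}}\sum_{x\,:\,x_1 = 1}(a_x^2 + b_x^2),
\]
which is a nonnegative integer divided by a power of $5$, with the numerator bounded above by the denominator thanks to unitarity $\sum_x |\alpha_x|^2 = 1$.

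I expect the main difficulty to be purely bookkeeping: a uniform denominator must be carried through the induction despite \CNOT{} contributing no actual factor of $5$, which is handled by the formal rescaling trick above. Reconciling the exact exponent of the denominator with the $5^{l(n)}$ in the statement is a matter of counting convention (whether $l(n)$ tracks amplitude-level or probability-level steps, given the squaring in $|\alpha_x|^2$); the essential content, that $p_{acc}$ is a nonnegative rational whose denominator is a polynomial-size power of $5$ and whose numerator lies between $0$ and the denominator, follows directly from the ring structure of $\mathbb{Z}[i]$ and is all that is needed in the subsequent application to \stateQCMA{} perfect completeness.
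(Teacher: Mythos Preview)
The paper does not supply its own proof of this lemma; it is stated as ``adapted from \cite{JN11}'' and used as a black box in the proof of \Cref{thm:stateQCMA-perfect-completeness}. Your argument is the standard one (and essentially what \cite{JN11} does): track amplitudes as Gaussian integers over a growing power-of-$5$ denominator, then square. It is correct.

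You are also right to flag the exponent mismatch rather than paper over it. With $l(n)$ gates, amplitudes live in $\tfrac{1}{5^{l(n)}}\mathbb{Z}[i]$, so $p_{\mathrm{acc}}$ is an integer over $5^{2l(n)}$, not $5^{l(n)}$; the one-gate example $\tfrac{1}{5}\begin{psmallmatrix}4&-3\\3&4\end{psmallmatrix}\ket{0}$ already gives $p_{\mathrm{acc}}=9/25$, which is not of the form $k/5$. The downstream use in \Cref{sec:stateQCMA} only needs that the denominator is a known, polynomially-bounded power of $5$, so the discrepancy is harmless---but it is a genuine imprecision in the statement as written, and your ``counting convention'' remark is the correct diagnosis, not a hand-wave.
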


\begin{lemma}[Success probability of the quantum rewinding, adapted from Lemma 8 in~\cite{Wat09}]
\label{lemma:one-iteration-quantum-rewinding}
Let $Q$ be a quantum circuit such that $Q\ket{\psi}=\sqrt{p(\psi)}\ket{0}\ket{\psi_0}+\sqrt{1-p(\psi)}\ket{1}\ket{\psi_1}$, and $p\coloneqq p(\psi)\in (0,1)$ is constant over all choices of the input $\ket{\psi}$. 
Then the probability of producing $\ket{\psi_0}$ within $t$ iterations is $1-(1-p)(1-2p)^{2t}$. 

Particularly, the probability of producing $\ket{\psi_0}$ by the single-iteration quantum rewinding of $Q$ is $p+4p(1-p)^2$, as well as this achieves the perfect success probability when $p=1/2$.
\end{lemma}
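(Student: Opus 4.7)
The plan is to analyze Watrous's rewinding procedure from~\cite{Wat09}, Lemma~8. The procedure applies $Q$ to $\ket{\psi}$, then iterates a rewinding operator $R$ built coherently from $Q$, $Q^{\dagger}$, and two reflections (one about $\ket{\psi}$ acting on the input register, and one about the projector $\Pi_{0}:=\ket{0}\bra{0}\otimes I$ acting on the first qubit of the output), and finally measures the first qubit to obtain the $\ket{\psi_{0}}$ branch. The central structural insight is that the state remains throughout the procedure inside a two-dimensional subspace, within which $R$ acts as a rotation, so that the claimed success probability $1-(1-p)(1-2p)^{2t}$ drops out of a short $2\times 2$ matrix-power computation.

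Concretely, I will first use the hypothesis that $p(\psi)=p$ is constant to show, via Jordan's lemma applied to the orthogonal projectors $\ket{\psi}\bra{\psi}$ and $Q^{\dagger}\Pi_{0}Q$, that the relevant Hilbert space decomposes into one- and two-dimensional common invariant subspaces, with the initial state $Q\ket{\psi}$ lying entirely in one such two-dimensional block $V$. In this block, $R$ is a product of two reflections and thus acts as a rotation by a specific angle determined by $p$. A direct computation (by induction on $t$, or by diagonalising the $2\times 2$ rotation matrix) then gives the squared ``good'' amplitude as $1-(1-p)(1-2p)^{2t}$. Specializing to $t=1$ yields $p+4p(1-p)^{2}$, and the value $p=1/2$ makes this equal to $1$, confirming the perfect-success claim.

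The main obstacle is rigorously establishing the two-dimensional invariance. A literal reading of the constant-$p$ condition as $Q^{\dagger}\Pi_{0}Q=pI$ on the whole input space is impossible (it would force $p\in\{0,1\}$); the condition must instead be interpreted as holding only on the subspace of admissible inputs (typically states of the form $\ket{\phi}\otimes\ket{0^{k}}$ with the ancillary register in a fixed state). Pinning down this admissible subspace and checking that the Jordan decomposition of the two projectors produces a two-dimensional block containing $\ket{\psi}$ is the main technical step; once in place, the remaining amplitude computation is routine $2\times 2$ linear algebra, and the $t=1$ specialization with $p=1/2$ is then immediate.
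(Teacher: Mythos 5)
The paper does not actually prove this lemma; it imports it verbatim from Watrous~\cite{Wat09} (the text immediately proceeds to ``the actual proof'' of \Cref{thm:stateQCMA-perfect-completeness}). So your attempt has to be judged against the standard argument, and there it has a concrete error: you describe the wrong procedure. You propose to apply $Q$, then \emph{coherently} iterate an operator $R$ that is a product of two reflections, and only ``finally measure the first qubit.'' In the two-dimensional subspace $\mathrm{span}\{\ket{0}\ket{\psi_0},\ket{1}\ket{\psi_1}\}$ a product of two reflections is indeed a rotation, but powering that rotation $t$ times and measuring once is amplitude amplification, whose success probability is $\sin^2\big((2t+1)\theta\big)$ with $\sin^2\theta=p$ --- not $1-(1-p)(1-2p)^{2t}$. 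The two disagree already at $t=1$, $p=1/2$: amplitude amplification gives $1/2$, while the lemma claims $1$. So the ``$2\times2$ matrix-power computation'' you outline would prove a different (and here false) formula.

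The formula in the lemma belongs to the procedure with \emph{intermediate measurements}, which is what ``rewinding'' means: apply $Q$ and measure the flag qubit; on outcome $1$, apply $Q^{\dagger}$, reflect about the initial state, apply $Q$, and measure again, up to $t$ times (this is exactly the circuit in \Cref{fig:stateQCMA-new-verifier}, which has two meters). The key point is that the post-measurement state after \emph{every} failure is exactly $\ket{1}\ket{\psi_1}$, independent of the round, so the rounds decouple. Writing $\ket{u}:=Q\ket{\psi}=\sqrt{p}\,\ket{0}\ket{\psi_0}+\sqrt{1-p}\,\ket{1}\ket{\psi_1}$, one rewinding round sends $\ket{1}\ket{\psi_1}$ to $(2\ket{u}\bra{u}-I)\ket{1}\ket{\psi_1}=2\sqrt{p(1-p)}\,\ket{0}\ket{\psi_0}+(1-2p)\,\ket{1}\ket{\psi_1}$, so each round fails with probability exactly $(1-2p)^2$ and the total failure probability factorizes as $(1-p)\cdot\big((1-2p)^2\big)^{t}$; at $p=1/2$ a single round lands exactly on $\ket{0}\ket{\psi_0}$. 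Finally, your concern about Jordan's lemma and the admissible input subspace is legitimate in Watrous's general setting (unknown input register, reflection acting only on the ancillas), but it is unnecessary machinery here: in the paper's application the input to $Q_{x,w,k}$ is the fixed all-zero state and the reflection is $2\ket{\bar 0}\bra{\bar 0}-I$ on all registers, so the two-dimensional invariant subspace is immediate and the constant-$p$ hypothesis is vacuous. What your write-up is missing is the measurement-interleaved structure of the procedure; with it, the rest of your plan collapses to the elementary computation above.
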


Now we proceed with the actual proof. 

\begin{proof}[Proof of \Cref{thm:stateQCMA-perfect-completeness}] 
Let $V_{x,w}$ be a $\stateQCMA_{\delta}[c,s]$ verification circuit with a classical witness $w$ of size $m(n)$ where $n\coloneqq |x|$ and $c(n)$ is a sufficiently large constant.\footnote{This is achievable by first applying error reduction on the given \stateQCMA{} verifier.} Using Lemma~\ref{lemma:changing-gateset-errs}, we first convert $V_{x,w}$ to another circuit that utilizes a polynomial number of gates from the gateset $\mathcal{G}$. Note that while we can assume that the completeness parameter $c(n)$ does not increase by this conversion, the trace distance error increases to $\delta'(n)\coloneqq \delta(n) + \exp(-\poly(n))$ owing to the triangle inequality. 
In the rest of the proof, we use $V_{x,w}$ as the converted circuit. So $V_{x,w}$ is assumed to be a $\stateQCMA_{\delta'}[c,s]$ circuit. Utilizing Lemma~\ref{lemma:pythagorean-gateset-is-rational}, the acceptance probability of each $V_{x,w}$ is expressed as $k_{x,w}/5^{l(n)}$ for some integer $k_{x,w}\in\{0,1,\cdots,5^{l(n)}\}$ and polynomial $l$ which represents the size of the circuit $V_{x,w}$.

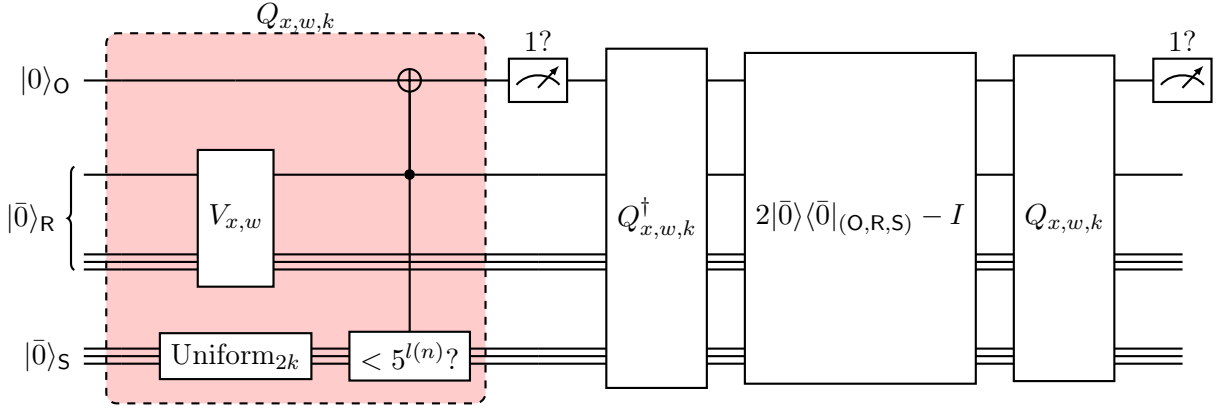
\begin{figure}[ht!]
\centering
\begin{quantikz}[wire types={q,q,b,b}, classical gap=0.07cm]
	\lstick{$\ket{0}_{\sfO}$} & \qw\gategroup[4,steps=3,style={dashed, rounded corners,fill=red!20, inner xsep=2pt, inner ysep=2pt}, background]{$Q_{x,w,k}$}  & \qw & \targ{} & \meter{1?} & \gate[4]{Q^{\dagger}_{x,w,k}} & \gate[4]{2\ket{\bar{0}}\bra{\bar{0}}_{(\sfO,\sfR,\sfS)}-I} & \gate[4]{Q_{x,w,k}} & \meter{1?}\\
	\lstick[wires=2]{$\ket{\bar{0}}_{\sfR}$} & & \gate[2]{V_{x,w}} & \ctrl{-1} & & & & & \\
		& & & & & & & & \\
 	\lstick{$\ket{\bar{0}}_{\sfS}$} & & \gate[1]{\text{Uniform}_{2k}} & \gate[1]{< 5^{l(n)}?}\vqw{-3} & & & & &
\end{quantikz}
\caption{The new verification circuit $\tilde{V}_{x,w}$}	
\label{fig:stateQCMA-new-verifier}
\end{figure}

Now we construct a $\stateQCMA_{\delta'}[1,s']$ verifier such that the verification circuit $\tilde{V}_{x,w}$ is the single-iteration quantum rewinding of $Q_{x,w,k}$ with a pre-processing, where the unitary transformation $Q_{x,w,k}$ is shown in \Cref{algo:Q}. 
To be precise, $\tilde{V}_{x,w}$ will simply reject if $k/5^{l(n)} < c(n)$,\footnote{The claimed maximum acceptance probability $k/5^{l(n)}$ are supposed to correspond to \textit{legal} witnesses.} then $\tilde{V}_{x,w}$ will perform the construction in \Cref{fig:stateQCMA-new-verifier}. 

	\begin{algorithm}[ht!]
        1. Prepare the uniform superposition $\frac{1}{\sqrt{2k}}\sum_{z\in\{0,1,\cdots,2k-1\}} \ket{z}$ in $\sfS$ \;
        2. Apply $V_{x,w}$ on $\sfR$\;
        3. Apply a \Toffoli{} gate that targets at $\sfO$, and the first control qubit is the designated output qubit in $\sfR$, as well as the second control qubit is decided by whether the integer in $\sfS$ at most $5^{l(n)}$. 
        \BlankLine
		\caption{Unitary Transformation $Q_{x,w,k}$}
		\label[algorithm]{algo:Q}
	\end{algorithm}

In \Cref{algo:Q}, all registers $\sfO,\sfR,\sfS$ are assumed to be initialized to all-zero state, where $\sfO$ is a single-qubit register, and $\sfR$ is the $(q(n) + n)$-qubit register for some polymonial $q$, on which the verification circuit $V_{x,w}$ acts (and the last $n$-qubit of $\sfR$ is assumed to contain the output), as well as $\sfS$ is a $t$-qubit register where $t$ is the minimum integer satisfying $2^t \geq 5^{l(n)}$. 
Moreover, the first step in \Cref{algo:Q} can be implemented by exact amplitude amplification.\footnote{Note that any integer in $[0,5^{l(n)}-1]$ can be decomposed into a product of some power of $2$ and some odd integer. Then we can employ with the construction in Figure 12 in~\cite{BGB+18}.}
We then show that the unitary transformation $Q_{x,w,k}$ indeed preserves the resulting states of the verification circuit $V_{x,w}$ as \Cref{prop:synthesized-state-preserving-Q}, whose proof is deferred to the last part of this section.
\begin{proposition}
\label{prop:synthesized-state-preserving-Q}
For the unitary transformation $Q_{x,w,k}$ specified in \Cref{algo:Q}, the success probability is $\Pr{Q_{x,w,k} \text{ accepts}}\coloneqq \| \Pi_{\rm acc} Q_{x,w,k} \ket{\bar{0}}_{(\sfO,\sfR,\sfS)} \|^2 = k_{x,w}/(2k)$ where $\Pi_{\rm acc}\coloneqq \ket{1}\bra{1}_{\sfO}\otimes I_{(\sfR,\sfS)}$. Moreover, $Q_{x,w,k}$ preserves the state synthesized by $V_{x,w}$. 
\end{proposition}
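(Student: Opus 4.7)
The plan is to trace the wavefunction through the three steps of \Cref{algo:Q} and then read off both claims directly from the state produced by $Q_{x,w,k}$. Writing $V_{x,w}\ket{\bar 0}_{\sfR}=\sqrt{p}\,\ket{1}\ket{\phi_1}+\sqrt{1-p}\,\ket{0}\ket{\phi_0}$, where the leading qubit denotes the designated output qubit of $V_{x,w}$, the amplitude $p=k_{x,w}/5^{l(n)}$ is the maximum acceptance probability (rational by \Cref{lemma:pythagorean-gateset-is-rational}), and $\ket{\phi_1}$ is by definition the state synthesized by $V_{x,w}$ on the remaining qubits of $\sfR$ upon acceptance.

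Next I would propagate this decomposition through the three steps of $Q_{x,w,k}$. Step~1 yields $\ket{0}_{\sfO}\otimes\ket{\bar 0}_{\sfR}\otimes\tfrac{1}{\sqrt{2k}}\sum_{z=0}^{2k-1}\ket{z}_{\sfS}$; step~2 acts on $\sfR$ alone and leaves the other registers untouched; step~3 flips $\sfO$ precisely on the joint event that the output qubit of $\sfR$ is $\ket{1}$ and the integer in $\sfS$ satisfies $z<5^{l(n)}$. Applying $\Pi_{\rm acc}=\ket{1}\bra{1}_{\sfO}\otimes I_{(\sfR,\sfS)}$ would then give
\[
\Pi_{\rm acc}\,Q_{x,w,k}\ket{\bar 0}_{(\sfO,\sfR,\sfS)} \;=\; \sqrt{\tfrac{p}{2k}}\;\ket{1}_{\sfO}\otimes \ket{1,\phi_1}_{\sfR}\otimes \sum_{z=0}^{5^{l(n)}-1}\ket{z}_{\sfS},
\]
whose squared norm equals $\tfrac{p}{2k}\cdot 5^{l(n)} = k_{x,w}/(2k)$, establishing the first claim. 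Since this accepted vector factorizes, with $\ket{\phi_1}$ sitting on the workspace portion of $\sfR$ and uncorrelated with $\sfO$ or $\sfS$, tracing out $\sfO$, $\sfS$, and the output qubit would leave exactly $\ket{\phi_1}$, which is precisely the state-preservation claim.

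The one genuinely delicate point will be step~1: the superposition must be prepared \emph{exactly} over the non-power-of-two domain $\{0,\dots,2k-1\}$, for otherwise spurious amplitudes or phases would couple $\sfS$ to the rest of the system and spoil the clean factorization on which the state-preservation half of the argument rests. This is exactly the reason the algorithm invokes the exact-amplitude-amplification preparation of~\cite{BGB+18} (decomposing $2k=2^{a}b$ with $b$ odd), after which the bookkeeping above goes through verbatim and both assertions of \Cref{prop:synthesized-state-preserving-Q} follow.
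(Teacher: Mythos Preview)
Your proof is correct and follows essentially the same approach as the paper's own proof: write the post-$V_{x,w}$ state as a superposition over the two output-qubit branches, propagate through the three steps of $Q_{x,w,k}$, project onto $\ket{1}_{\sfO}$, and read off both the acceptance probability $k_{x,w}/(2k)$ and the product structure of the accepted branch. The only cosmetic difference is that the paper uses the unnormalized convention $V_{x,w}\ket{\bar 0}_{\sfR}=\ket{0}\ket{\phi_0}+\ket{1}\ket{\phi_1}$ with $\innerprod{\phi_1}{\phi_1}=k_{x,w}/5^{l(n)}$, whereas you carry the amplitude $\sqrt{p}$ explicitly; and for the state-preservation half the paper traces out $\sfS$ together with all but the last $n$ qubits of $\sfR$ (not just the output qubit), but this does not affect the argument since what sits on $\sfR$ in the accepted branch is exactly the post-measurement state of $V_{x,w}$.
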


Plugging \Cref{prop:synthesized-state-preserving-Q} into \Cref{lemma:one-iteration-quantum-rewinding}, we obtain that
\begin{equation}
\label{eq:stateQCMA-Pacc}
\Pr{\tilde{V}_{x,w} \text{ accepts }} = \frac{1}{2}\left( \frac{k_{x,w}}{k}\right)^3 - 2\left( \frac{k_{x,w}}{k}\right)^2 + \frac{5}{2}\left( \frac{k_{x,w}}{k}\right)\coloneqq g\left(\frac{k_{x,w}}{k}\right),
\end{equation}
as guaranteed that $k \geq c(n)\cdot 2^{l(n)}$. We further notice the following fact:\footnote{\Cref{fact:stateQCMA-Pacc-mono} follows from the facts that $t=1$ and $t=5/3$ are roots of $g'(t)=0$, as well as $g'(0) > 0$.} 
\begin{fact}
\label{fact:stateQCMA-Pacc-mono}
For $0 \leq t \leq 1$, $g(t)$ defined in \Cref{eq:stateQCMA-Pacc} is monotonically increasing.
\end{fact}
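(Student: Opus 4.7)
The plan is to establish monotonicity of the cubic $g(t) = \tfrac{1}{2}t^3 - 2t^2 + \tfrac{5}{2}t$ on $[0,1]$ via a direct derivative analysis. First I would compute $g'(t) = \tfrac{3}{2}t^2 - 4t + \tfrac{5}{2}$, a quadratic in $t$ with positive leading coefficient. Following the hint in the accompanying footnote, I would then verify by direct substitution that both $t = 1$ and $t = 5/3$ are roots of $g'$, which (together with the leading coefficient $\tfrac{3}{2}$) yields the factorization
\[
g'(t) \;=\; \tfrac{3}{2}(t-1)\bigl(t - \tfrac{5}{3}\bigr).
\]
A quick sanity check expanding the right-hand side recovers $\tfrac{3}{2}t^2 - 4t + \tfrac{5}{2}$, confirming the factorization.

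With this factorization in hand, the conclusion is immediate. For every $t \in [0,1]$ both factors $(t-1)$ and $(t-5/3)$ are non-positive, so their product is non-negative and thus $g'(t) \geq 0$ throughout $[0,1]$, with equality only at the single endpoint $t = 1$. Hence $g$ is monotonically increasing on $[0,1]$, proving the claim. There is essentially no technical obstacle; the only conceptual point worth flagging is that although the second root $t = 5/3$ lies outside the interval of interest, knowing it is precisely what lets us read off the sign of the upward-opening parabola $g'$ on $[0,1]$ without any case analysis.
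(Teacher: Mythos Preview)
Your proof is correct and takes essentially the same approach as the paper's own justification (given in the footnote): both compute $g'(t)$, identify its roots $t=1$ and $t=5/3$, and use this to conclude $g'\ge 0$ on $[0,1]$. The only cosmetic difference is that the paper infers the sign from $g'(0)>0$ together with the root locations, whereas you write out the factorization $g'(t)=\tfrac{3}{2}(t-1)(t-\tfrac{5}{3})$ explicitly; these are equivalent.
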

It is thus left to analyze the maximum acceptance probability of $\tilde{V}_{x,w}$. 

\begin{itemize}
    \item For the completeness condition, equipped with \Cref{fact:stateQCMA-Pacc-mono}, the maximum achieves if $t\coloneqq k_{x,w}/k=1$. Namely, the claimed maximum acceptance probability $k$ of $V_{x,w}$ is indeed the maximum acceptance probability $k_{x,w}$. 
    \item For the soundness condition, we observe below for any illegal witness $w\in\binset^{m(n)}$:
    \[\frac{k_{x,w}}{k} \leq \frac{s(n)\cdot 5^{l(n)}}{c(n)\cdot 5^{l(n)}} = \frac{s(n)}{c(n)}.\]
    Since $k_{x,w}/k\leq s(n)/c(n)$, 
    this indicates that the acceptance probability of $V_{x,w}$ is $s'(n)\leq g(s(n)/c(n))$. By \Cref{fact:stateQCMA-Pacc-mono}, $\Pr{V_{x,w} \text{ accepts }}\leq s'(n)$ holds for any choice of $k$ as long as the witness $w$ is illegal. 
    Note that $c$ is a constant and the promise gap $c(n)-s(n)\geq 1/p(n)$ where $p(n)$ is a polynomial of $n$, this deduces that 
    \begin{equation}
        \label{eq:stateQCMA1-soundness}
        s'(n)\leq g\left(1-\frac{c(n)}{p(n)}\right)=1-\frac{1}{2}\left[\left(\frac{c(n)}{p(n)}\right)^2+\left(\frac{c(n)}{p(n)}\right)^3\right]\coloneqq 1-\frac{1}{q(n)}.
    \end{equation}
\end{itemize}
    We complete the proof by noticing $q(n)$ is a polynomial of $n$. 
\end{proof}

In the remaining of this section, we complete the proof of \Cref{prop:synthesized-state-preserving-Q}.

\begin{proof}[Proof of \Cref{prop:synthesized-state-preserving-Q}]
Let $V_{x,w}\ket{\bar{0}}_{\sfR} = \ket{0}\ket{\phi_0} + \ket{1}\ket{\phi_1}$ be the quantum state just before the final measurement of the verification circuit $V_{x,w}$ where $n\coloneqq |x|$. This deduces that 
\[\Pr{V_{x,w} \text{ accepts }}=\|\ket{1}\bra{1}_{\Out}V_{x,w}\ket{\bar{0}}_{\sfR}\|_2^2=\innerprod{\phi_1}{\phi_1}=k_{x,w}/5^{l(n)}.\]
The output state conditioned on accepting is then defined as the state we obtain by tracing out all but the last $n$-qubit of the density matrix $\frac{5^{l(n)}}{k_{x,w}}\ket{\phi_1}\bra{\phi_1}$.
The quantum state in $(\mathsf{O,R,S})$ before the last step in \Cref{algo:Q} is then
\begin{equation*}
\ket{0}_{\mathsf{O}}\otimes \big[ \ket{0}\ket{\phi_0}+\ket{1}\ket{\phi_1} \big]_{\mathsf{R}}\otimes \frac{1}{\sqrt{2k}} \sum_{0\leq z <2k} \ket{z}_{\mathsf{S}}
\end{equation*}
and therefore the state after applying \Cref{algo:Q} is $Q_{x,w,k}\ket{\bar{0}}_{(\mathsf{O,R,S})}$ as below: 
\[\ket{0}_{\mathsf{O}}\otimes \ket{0}\ket{\phi_0}_{\mathsf{R}} \otimes \frac{1}{\sqrt{2k}}\sum_{z=0}^{2k-1} \ket{z}_{\mathsf{S}}
+\ket{0}_{\mathsf{O}}\otimes \ket{1}\ket{\phi_1}_{\mathsf{R}} \otimes \frac{1}{\sqrt{2k}}\sum_{z=5^{l(n)}}^{2k-1} \ket{z}_{\mathsf{S}} 
+\ket{1}_{\mathsf{O}}\otimes \ket{1}\ket{\phi_1}_{\mathsf{R}} \otimes \frac{1}{\sqrt{2k}}\sum_{z=0}^{5^{l(n)}-1} \ket{z}_{\mathsf{S}}\]
We thus conclude that
\begin{align*}
\Pr{Q_{x,w,k} \text{ accepts}} 
= \innerprod{\phi_1}{\phi_1} \cdot \bigg\|\frac{1}{\sqrt{2k}}\sum_{z=0}^{5^{l(n)}-1} \ket{z}_{\mathsf{S}}\bigg\|^2_2 = \frac{k_{x,w}}{5^{l(n)}} \cdot \frac{5^{l(n)}}{2k} = \frac{k_{x,w}}{2k}.
\end{align*}
Furthermore, we notice that the resulting state of $Q_{x,w,k}$ is 
\[(\ket{1}\bra{1}\otimes(\ket{\phi_1}\bra{\phi_1})_{\sfR}\otimes \left(\ket{0}\bra{0}\otimes \ket{+}\bra{+}^{\otimes l(n)}\right)_{\sfS}.\]
By tracing out $\mathsf{S}$ and all but the last $n$-qubit of $\mathsf{R}$, we get the resulting state of $V_{x,w}$. 
\end{proof}


\section*{Acknowledgments}
\noindent
We are grateful to Henry Yuen and Chinmay Nirkhe for insightful discussions and for bringing~\cite{BCNY19} to our attention. 
We also thank Gregory Rosenthal for suggesting the motivation behind defining the quantum state-synthesizing classes by drawing a comparison with classical function classes. 
Additionally, YL thanks Naixu Guo for helpful discussions on quantum singular value transformations. 
Furthermore, we express our gratitude to anonymous reviewers for providing detailed and helpful comments.

The authors were supported by JSPS KAKENHI Grants Nos.\ JP19H04066, JP20H00579, JP20H05966, JP20H04139, JP21H04879 and MEXT Quantum Leap Flagship Program (MEXT Q-LEAP) Grants Nos.\ JPMXS0118067394 and JPMXS0120319794. MM was also supported by JST, the establishment of University fellowships towards the creation of science technology innovation, Grant No. JPMJFS2120. 
Circuit diagrams were drawn by the Quantikz package~\cite{Kay18}. 

\bibliographystyle{alphaurlQ}
\bibliography{stateQMA.bib}

\end{document}